\documentclass[11pt,a4paper]{article}

\usepackage{braket} 
\usepackage{mathtools}
\usepackage{authblk} 
\usepackage{algpseudocode,algorithmicx,algorithm}

\usepackage{mathrsfs}
\usepackage{latexsym,bm}
\usepackage{amsmath,amsfonts,amsmath,amssymb,amsthm}
\usepackage{breqn}
\usepackage{extarrows}
\usepackage{float}
\usepackage[T1]{fontenc} 
\usepackage{ragged2e}

\usepackage{listings}
\usepackage{xcolor}

\usepackage{mathrsfs}
\usepackage{qcircuit}
\usepackage{graphicx,epstopdf,float}
\usepackage{enumerate,cases,multirow}
\usepackage{makecell}
\usepackage[caption=false]{subfig}
\usepackage{subfig}
\usepackage{pgfplots}

\usepackage{longtable,colortbl,arydshln,threeparttable}
\definecolor{mygray}{gray}{.9}
\usepackage[shortlabels]{enumitem}

\usepackage{indentfirst}
\usepackage[top=25mm,bottom=20mm,left=25mm,right=20mm]{geometry}
\usepackage{cite}

\usepackage{listings}

\usepackage{makeidx}        
\usepackage{booktabs}
\usepackage[bookmarksnumbered,colorlinks,citecolor=red,linkcolor=red,hyperindex,linktocpage=true]{hyperref}
\usepackage{subfiles}

\newtheorem{theorem}{Theorem}[section]
\newtheorem{lemma}{Lemma}[section]

\theoremstyle{remark}
\newtheorem{remark}{\bf Remark}[section]

\numberwithin{equation}{section}

\begin{document}
\title{Quantum circuits of elastic wave equations via Schr$\ddot{\text{o}}$dingerisation}

\author[1,2,3]{Shi Jin   \thanks{shijin-m@sjtu.edu.cn}}
\author[1]{Chundan Zhang   \footnote{Corresponding author.} \thanks{chundanzhang@sjtu.edu.cn}}
\affil[1]{School of Mathematical Sciences,   Shanghai Jiao Tong University, Shanghai 200240, China.}
\affil[2]{Institute of Natural Sciences, Shanghai Jiao Tong University, Shanghai 200240, China.}
\affil[3]{Ministry of Education Key Laboratory in Scientific and Engineering Computing, Shanghai Jiao Tong University, Shanghai 200240, China.}

\date{\today}
\maketitle
\begin{abstract}
We construct explicit quantum circuits for the velocity--stress formulation
of elastic wave equations in homogeneous media using
Schr\"odingerisation. After rewriting the equations in symmetric matrix form(SMF),
we use Fourier spectral discretization for source-driven periodic problems and staggered-grid
finite differences for Dirichlet, Neumann, and impedance boundary conditions.
The Schr\"odingerisation method transforms
any linear PDEs into a system of Schr\"odinger-type PDEs –with unitary evolution–
using the warped phase transformation that maps the equations in one higher dimension.
For periodic problems, we develop circuits by combining controlled Fourier-mode rotations.
For physical boundaries, interleaved encoding swaps the spatial and field-component registers, enabling a more efficient circuit construction using shift operators. 
We analyze $p$th-order Trotter--Suzuki formulas and derive splitting-error and
CNOT-gate estimates, under the assumptions of smooth auxiliary initialization and solution-state recovery. Comparison with classical partitioned splitting
methods identifies a polynomial improvement in spatial-resolution dependence
for the quantum state-evolution subroutine.
Noiseless circuit simulations of a forced periodic problem and an
impedance problem illustrate the constructions and quantify their errors
against exact solutions and classical reference computations.
\end{abstract}

\textbf{Keywords}: 
Elastic wave equations, quantum circuits, Schr\"odingerisation method

\section{Introduction}

Elastic wave equations describe the propagation of mechanical disturbances in solids~\cite{navier1827} and are widely used in seismology and
geodynamics~\cite{bullen1985introduction}. The velocity--stress formulation gives a first-order hyperbolic system that is well suited to staggered-grid discretization and the treatment of physical boundary conditions~\cite{dai1995wave}. In practical problems, Dirichlet, Neumann, and impedance conditions model fixed, traction, and absorbing boundaries, while source terms represent external forces or seismic excitations~\cite{kaser2006arbitrary}. Therefore, it is important to develop efficient algorithms for elastic systems with source terms and physically relevant boundary conditions.

Classical numerical methods, such as finite difference methods \cite{leveque2007finite}, finite element methods \cite{finite-element}, boundary element methods \cite{boundaryelement} and other numerical methods\cite{FGAChai2018}
, have been extensively developed for elastic wave
simulation. Nevertheless, large-scale simulations in high dimensions, long-time
wave propagation, and models with fine spatial resolution or complex media may
still lead to substantial computational cost. 
This has motivated the development of quantum algorithms for partial differential equations. Early approaches mainly relied on quantum linear system algorithms, such as HHL-type methods, by discretizing the PDE and reformulating it as a linear system \cite{Ber14,Childs20}. More recently, Hamiltonian-simulation-based approaches have provided a more natural and advantageous framework for PDEs. The Schr\"odingerisationembeds general non-unitary dynamics into a unitary evolution in an enlarged Hilbert space \cite{Schrodingerization,JLY22SchrShort}, which is the only method that transfer general linear PDEs into unitary PDEs, thus also most natural for continuous quantum computing \cite{analogPDE}.
A complementary approach is the  linear combination of Hamiltonian simulation (LCHS) method for simulating non-unitary dynamics \cite{an2023linear}. Based on these frameworks, explicit quantum circuits have been developed for several classes of PDEs, including hyperbolic equations \cite{sato2024hamiltonian} and canonical PDEs in Fourier space \cite{Fourierquantum}. These techniques have also been applied to elastic wave propagation, for which explicit quantum algorithms and circuit constructions have been investigated \cite{Sato3Delastic}. These developments demonstrate the importance of exploiting the algebraic and differential structure of PDEs to obtain implementable quantum circuits with explicit complexity estimates.

A useful framework for connecting general linear PDEs with Hamiltonian simulation
is Schr\"odingerisation method\cite{Schrodingerization, JLY22SchrShort}. The main idea is to transform a non-Schr\"odinger or
non-unitary evolution into an enlarged Schr\"odinger-type system with unitary evolution, so that
Hamiltonian simulation techniques can be applied. The formulation is natural for PDEs and  other realistic physical models. Explicit Schr\"odingerisation circuits have been developed for
advection and heat equations~\cite{hu2024quantum}, Maxwell's
equations~\cite{ma2024schr}, Black--Scholes equations~\cite{bs}, and heat
equations with physical boundary conditions~\cite{jin2025quantum}. Its
application to elastic wave equations has also been
studied~\cite{elastic}. For the velocity--stress system, the symmetric matrix
form (SMF)~\cite{SMFd3,SMFd2} exposes the coupling between physical components
and spatial derivatives. The source-free periodic system admits unitary
evolution after discretization, whereas forcing and physical boundary terms
require a treatment of the resulting non-unitary dynamics.

In this paper, we develop explicit circuits for these semidiscrete systems. We use Fourier spectral discretization under periodic
boundary conditions and staggered-grid finite differences under physical boundary conditions. 
The circuits are assembled from elementary evolutions using nested product formulas. 
We further analyze the error contributions arising from the Hamiltonian simulation and the spatial and auxiliary-variable discretizations. 

Smooth auxiliary initialization~\cite{matrixqueries} is used
in the complexity analysis to control the auxiliary Fourier resolution.
The main contributions are as follows:
\begin{enumerate}
\item[(a)] We construct quantum circuits for periodic velocity--stress systems with
time-independent sources. Fourier
diagonalization gives controlled rotations for the spatial and auxiliary
variables. For \(d=2,3\), we also provide explicit circuit constructions and estimates of the corresponding CNOT-gate complexity.
\item[(b)] We derive circuits for physical boundary conditions using
exchanged encoding for the spatial and the field-component registers. The Neumann
circuit follows by conjugating the Dirichlet spatial circuit, while the
impedance construction uses exact controlled rotations on the two boundary
subspaces.
\item[(c)] We derive commutator-based error bounds and CNOT-gate estimates for $p$th-order Trotter--Suzuki formulas, including the effects of auxiliary
approximation. Comparison with classical methods identifies a polynomial advantage in spatial
resolution for the state-evolution subroutine under the stated assumptions.
\item[(d)]The circuits experiments are performed using the {\it Unitarylab} package \cite{unitarylab2025}, which facilitates the construction and execution of the quantum circuits on a local simulator. Each test compares the recovered
circuit solution with the exact PDE solution, a classical partitioned
solution, and a classical Schr\"odingerised solution.
\end{enumerate}

Our setting differs from Refs.~\cite{Fourierquantum,Sato3Delastic} mainly in
its modelling scope. Ref.~\cite{Fourierquantum} treats the isotropic acoustic wave equation, whereas we consider the coupled elastic velocity--stress system with both compressional and shear waves. 
Ref.~\cite{Sato3Delastic} considers a source-free elastic system with a specific zero boundary treatment, while our framework, enabled by Schr\"odingerisation to embed the non-unitary dynamics into a unitary evolution, includes periodic conditions with sources and Dirichlet, Neumann, and impedance boundary conditions.

The paper is organized as follows. Section~\ref{sec:Equ review} reviews the velocity--stress
equations and their symmetric matrix form. Section~\ref{sec:schro_review} introduces
Schr\"odingerisation, recovery, and smooth auxiliary initialization.
Section~\ref{sec:Periodic} develops the circuits of the source-driven system and their complexity estimates
in one and multiple dimensions. Section~\ref{sec:Physical} treats physical
boundary conditions. Section~\ref{sec:Experiments} presents the numerical experiments, and
Section~\ref{sec:Conclusion} concludes the paper.

Throughout this paper, \(\|\cdot\|\) denotes the spectral norm. \(N\) denotes the number of discretization points, whereas \(n\) denotes the corresponding number of qubits. Their specific meanings are indicated by subscripts, e.g., \(N_x\) and \(n_p\). For notational simplicity, the tensor power in \(I^{\otimes n}\) is omitted whenever its dimension can be inferred from the context.

\section{A review of velocity--stress equation}\label{sec:Equ review}
The motion of a small displacement  in  an elastic medium can be described as:
\begin{subequations}\label{eq:velocity_stress}
\begin{align}
&\rho \partial_t v_i
    =
    \sum_{j=1}^d \partial_{x_j}\sigma_{ij}
    +
    f_i,
    \qquad i=1,\ldots,d,\label{eq:v}
\\
&\partial_t \sigma_{ii}
    =
    (\lambda+2\mu)\partial_{x_i}v_i
    +
    \lambda\sum_{j\neq i}\partial_{x_j}v_j,
    \qquad i=1,\ldots,d,\label{eq:sigma_nn}
\\
&\partial_t \sigma_{ij}
    =
    \mu
    \left(
        \partial_{x_j}v_i+\partial_{x_i}v_j
    \right)
    +
    g_{ij},
    \qquad 1\leq i<j\leq d.\label{eq:sigma_ij}
\end{align}
\end{subequations}
where $v_i$ represents the velocity in the $i$-th direction, $\sigma_{ij}$ denotes the components of the Cauchy stress tensor, representing the force acting in the $i$-th direction on a surface with normal in the $j$-th direction. $ f_i$ is the body force. $\lambda$ and $\mu$ describe the properties of the medium called Lam$\acute{e}$ parameters.

Let \(\sigma_n=(\sigma_{11},\ldots,\sigma_{dd})^T\) be the vector of normal stresses. The normal-stress part is governed by the
matrix
\[
    C_d=2\mu I_d+\lambda \mathbf 1\mathbf 1^T,
    \qquad
    \mathbf 1=(1,\ldots,1)^T.
\]
Indeed, the normal-stress equations \eqref{eq:sigma_nn} can be written as
\begin{equation}\label{eq:sigma_n matrix}
    \partial_t \sigma_n
    =
    C_d
    \begin{pmatrix}
    \partial_{x_1}v_1\\
    \vdots\\
    \partial_{x_d}v_d
    \end{pmatrix},
\end{equation}
Introducing the rescaled variables
\begin{equation}\label{eq:rescaledvariables}
    \zeta=C_d^{-1/2}\sigma_n,
    \qquad
    s_{ij}=\frac{\sigma_{ij}}{\sqrt{\mu}},
    \qquad
    u_i=\sqrt{\rho}\,v_i.    
\end{equation}
For simplicity, we still denote the rescaled velocity \(u_i\) by \(v_i\) in the
following.  Since
\[
    C_d
    =
    2\mu
    \left(
        I_d-\frac{1}{d}\mathbf 1\mathbf 1^T
    \right)
    +
    (d\lambda+2\mu)
    \frac{1}{d}\mathbf 1\mathbf 1^T,
\]
one gets 
\[
    C_d^{1/2}
    =
    \sqrt{2\mu}
    \left(
        I_d-\frac{1}{d}\mathbf 1\mathbf 1^T
    \right)
    +
    \sqrt{d\lambda+2\mu}
    \frac{1}{d}\mathbf 1\mathbf 1^T.
\]
Thus the entries of \(\rho^{-1/2}C_d^{1/2}\) are
\[
    m_1
    =
    \frac{1}{d\sqrt{\rho}}
    \left(
        (d-1)\sqrt{2\mu}
        +
        \sqrt{d\lambda+2\mu}
    \right)
\]
on the diagonal, and
\[
    m_2
    =
    \frac{1}{d\sqrt{\rho}}
    \left(
        -\sqrt{2\mu}
        +
        \sqrt{d\lambda+2\mu}
    \right)
\]
off the diagonal. The shear coefficient is
\[
    m_3=\sqrt{\frac{\mu}{\rho}}.
\]

After the above rescaling, Eqs. \eqref{eq:v}, \eqref{eq:sigma_ij}, \eqref{eq:sigma_n matrix} can be written as
\begin{equation}\label{eq:rescaled}
    \begin{split}
    &\partial_t \zeta_i
    =
    m_1\partial_{x_i}v_i
    +
    m_2\sum_{k\neq i}\partial_{x_k}v_k,
    \qquad i=1,\ldots,d,\\
    &\partial_t s_{ij}
    =
    m_3
    \left(
        \partial_{x_j}v_i+\partial_{x_i}v_j
    \right),
    \qquad 1\leq i<j\leq d,\\
    &\partial_t v_i
    =
    m_1\partial_{x_i}\zeta_i
    +
    m_2\sum_{k\neq i}\partial_{x_i}\zeta_k
    +
    m_3\sum_{k\neq i}\partial_{x_k}s_{ik}
    +
    \widetilde f_i,
    \qquad i=1,\ldots,d,
    \end{split}
\end{equation}
where \(s_{ik}=s_{ki}\) if \(k<i\). The source terms
\(\widetilde f_i\) are the corresponding
rescaled source terms.
Now order the rescaled unknowns as
\begin{equation}\label{eq:U}
    U
    =
    (\zeta_1,\ldots,\zeta_d,
    \{s_{ij}\}_{1\leq i<j\leq d},
    v_1,\ldots,v_d)^T. 
\end{equation}
The total number of physical
components is
\(q_d=\frac{d^2+3d}{2}\).
Then the system \eqref{eq:rescaled} can be written in the symmetric matrix form
\begin{equation}\label{eq:dSMF}
    \partial_t U
    =
    \sum_{\ell=1}^d A_\ell \partial_{x_\ell}U
    +
    F.
\end{equation}

\paragraph{Notation for the nested splittings.}
To distinguish the different levels of the circuit decomposition, we adopt the following notation throughout this paper. \(U_{\mathrm I}\) and
\(U_{\mathrm{II}}\) denote the evolutions generated by the
\(-H_1\otimes D_p\) and \(H_2\otimes I\) components, respectively.
The additional subscripts \(x\) and \(Y\) distinguish the two factors obtained
from the splitting of \(U_{\mathrm{II}}\):
\(U_{\mathrm{II},x}\) denotes the spatial-operator evolution, whereas
\(U_{\mathrm{II},Y}\) denotes the evolution generated by the \(Y\)-dependent
term. 
For different boundary conditions, the spatial factor is further denoted by
\(U_P\), \(U_D\), \(U_N\), or \(U_{\mathrm{imp}}\), corresponding to the Dirichlet,
Neumann, and impedance cases, respectively. 

\section{A review of the Schrodingerization method}
\label{sec:schro_review}
The Schr\"odingerisation method embeds a linear,  non-unitary evolution into a
unitary system by introducing one auxiliary variable~\cite{Schrodingerization}.
We recall the construction for autonomous version, as used below.

After spatial discretization, consider
\(\dot{\mathbf y}=B\mathbf y+\mathbf b\), with a time-independent source
\(\mathbf b\). Introduce
\begin{equation}
\mathbf u=\begin{pmatrix}\mathbf y\\\mathbf b\end{pmatrix},
\qquad
A=\begin{pmatrix}B&I\\0&0\end{pmatrix},
\qquad
\dot{\mathbf u}=A\mathbf u.
\label{eq:schro_review_augmentation}
\end{equation}
Then it becomes a homogeneous system. For a homogeneous problem, no augmentation is needed. In either case, write
\begin{equation}
A=H_1+iH_2,
\qquad
H_1=\frac{A+A^\dagger}{2},
\qquad
H_2=\frac{A-A^\dagger}{2i},
\label{eq:schro_review_decomposition}
\end{equation}
where both \(H_1\) and \(H_2\) are Hermitian.
The warped phase transformation \(\mathbf w=g(p)\mathbf u\) gives
\begin{equation}
\partial_t\mathbf w=-H_1\partial_p\mathbf w+iH_2\mathbf w,
\qquad
\mathbf w(0,p)=g(p)\mathbf u(0).
\label{eq:schro_review_transport}
\end{equation}
The original choice is \(g(p)=e^{-|p|}\). With the Fourier convention
\(\widehat{\partial_p\mathbf w}=i\eta\widehat{\mathbf w}\), truncation
and Fourier discretization in \(p\) yield
\begin{equation}
\frac{d}{dt}\widehat{\mathbf w}_h=iH\widehat{\mathbf w}_h,
\qquad
H=-H_1\otimes D_p+H_2\otimes I,
\qquad H=H^\dagger,
\label{eq:schro_review_hamiltonian}
\end{equation}
where \(D_p\) is the Fourier correlation matrix of the variable $p$. Thus the enlarged evolution is \(e^{iHt}\). Depending on the definition
of $\mathbf w$, we can choose either multiplying by $1/g(p)$ at one point of $p$, or integrating to recover $\mathbf u$:
\begin{equation}
    \mathbf u = \mathbf w/g(p), \quad \text{for any} \quad p\geq p^*; \quad \text{or} \quad \mathbf u= \frac{1}{g(p^*)}\int_{p^*}^{\infty} \mathbf w(p)dp,
\end{equation}
where $p^*\geq\max\{\lambda_{\max}^{+}(H_1)T, 0\}$ and \(\lambda_{\max}^{+}(H_1)\) denotes the largest positive eigenvalue of
\(H_1\).

\paragraph{Smooth initialization and optimal auxiliary resolution.}
The cusp of \(e^{-|p|}\) limits the accuracy of the Fourier spectral approximation.   
Instead an analytical extension can lead to optimal matrix-query complexity \cite{matrixqueries}. 
Let $\varepsilon_{\mathrm{aux}}$ denote the error
tolerance assigned to the truncation, recovery, and Fourier
spectral discretization in the auxiliary variable $p$.
Allowing a small
initialization error permits an analytic (Gevrey-1) profile~\cite{matrixqueries}:
\begin{equation}
g_{\varepsilon_{\mathrm{aux}}}(p)
=\frac{1+\operatorname{erf}(\alpha p)}{2}e^{-p},
\qquad
\alpha=2\sqrt{\log(1/\varepsilon_{\mathrm{aux}})}.
\label{eq:schro_review_smooth_profile}
\end{equation}
Here \(0<\varepsilon_{\mathrm{aux}}<e^{-1}\).  
Its relevant quantitative regularity is
\begin{equation}
\left\|g_{\varepsilon_{\mathrm{aux}}}^{(q)}\right\|_{L^2(\mathbb R)}^{1/q}
\leq Cq,
\qquad q=\Theta\!\left(\log(1/\varepsilon_{\mathrm{aux}})\right),
\label{eq:schro_review_derivative_bound}
\end{equation}
with \(C\) independent of \(q\) and \(\varepsilon_{\mathrm{aux}}\).
The Fourier-tail bound \(\mu_{\max}^{-q}\|g^{(q)}\|_{L^2}\), where $\kappa_{\mathrm{aux}}$ is the largest Fourier mode. Together
with controlled domain truncation, then it gives
\begin{equation}
\mu_{\max}=O\!\left(\log(1/\varepsilon_{\mathrm{aux}})\right).
\label{eq:schro_review_resolution}
\end{equation}

\section{Periodic boundary condition}
\label{sec:Periodic}

In this section, we study the elastic wave equation with source term under periodic boundary conditions and develop its explicit quantum circuit representation based on the Schr\"odingerisation framework.

For a homogeneous medium, the one-dimensional SMF with a time independent external force $f$ can be written as
\begin{equation}\label{eq:SMF_1d}
\partial_t u
=
aX\partial_xu+f,
\end{equation}
where \(u = (\sigma/\sqrt{\mu}, \sqrt{\rho}\,v)\) denotes the rescaled variables \eqref{eq:rescaled} in the SMF formulation and \(a=\sqrt{\frac{\mu}{\rho}}\) is the wave speed. 
\subsection{Fourier discretization and Schr\"odingerisation}
Consider the periodic domain $x\in[-\pi L_x,\pi L_x]$, the Fourier spectral method diagonalizes the spatial differentiation operator. Let
\begin{equation}
    x_j=-\pi L_x+j\Delta x,
\qquad
j=0,\ldots,N_x-1,
\qquad
\Delta x=\frac{2\pi L_x}{N_x},
\end{equation}
and let \(\mathbf{u}_h(t)\) and \(\mathbf{f}_h(t)\) collect the grid values of
\(u\) and \(f\), respectively. Define the Fourier differentiation matrix by $P_x=\Phi^{-1}D_x\Phi$, 
where \(\Phi\) is the discrete Fourier transform matrix and
\[
D_x=\sum_{j=0}^{N_x-1}\zeta_j
|j\rangle\langle j|,\qquad \zeta_j=\frac{j-N_x/2}{L_x}
\]
is diagonal in the Fourier basis. Applying the discrete Fourier transform $\widehat{\mathbf{u}} = \Phi\,\mathbf{u}_h$ and $\widehat{\mathbf{f}} = \Phi\,\mathbf{f}_h$, 
the semi-discrete system becomes
\begin{equation}\label{eq:fourier_space_ode}
\frac{d}{dt}\widehat{\mathbf{u}}
=
ia\,X\otimes D_x\,\widehat{\mathbf{u}}
+
\widehat{\mathbf{f}}\triangleq iK_P\,\widehat{\mathbf{u}}
+
\widehat{\mathbf{f}}.
\end{equation}

The presence of the forcing term prevents the semi-discrete system \eqref{eq:fourier_space_ode} from being directly interpreted as a unitary quantum evolution. Following the Schr\"odingerisation method in Section~\ref{sec:schro_review}, we first rewrite the system as a homogeneous ODE 
\begin{equation}\label{eq:homogeneous_system}
\frac{d}{dt}\mathbf{u}
=
A\mathbf{u},\qquad
\mathbf{u}
=
\begin{pmatrix}
\widehat{\mathbf{u}}\\
\widehat{\mathbf{f}}
\end{pmatrix}, \qquad
A=
\begin{pmatrix}
iK_P & I\\
0&0
\end{pmatrix}.
\end{equation}
The Hermitian and
skew-Hermitian parts are
\begin{equation}\label{eq:SMFH1&H2}
H_1=\frac{A+A^\dagger}{2}
=\frac{1}{2}  X\otimes  I^{\otimes n_x+1},
\quad
H_2=\frac{A-A^\dagger}{2i}=\sigma_{00}\otimes K_P+\frac{1}{2}  Y\otimes I^{\otimes n_x+1}.    
\end{equation}
Suppose that the auxiliary variable $p \in [-\pi L_p,\pi L_p]$ and the Schr\"odinger system is
\begin{equation}
    \frac{d }{dt}\widehat{\mathbf w}_h= -i(H_1\otimes D_{p})\widehat{\mathbf w}_h+i(H_2\otimes I)\widehat{\mathbf w}_h\triangleq i H_{SMF}\widehat{\mathbf w}_h,
\end{equation}
where $D_p=\sum_k \eta_k |k\rangle\langle k|$ with diagonal entries $\eta_k=\frac{k-N_p/2}{L_p}$.
The Hamiltonian \(H_{SMF}\) is
\begin{equation}\label{eq:H_SMF}
    H_{SMF} = -H_1\otimes D_{p}+ H_2\otimes I^{\otimes n_p}.
\end{equation}
\begin{remark}
For the \(d\)-dimensional SMF equations, the semi-discrete  system can be written as
\begin{equation}
    \frac{d}{dt}
    \begin{pmatrix}
    \widehat{u}\\
    \widehat{f}
    \end{pmatrix}
    =
    \begin{pmatrix}
    iK_P & I\\
    0 & 0
    \end{pmatrix}
    \begin{pmatrix}
    \widehat{u}\\
    \widehat{f}
    \end{pmatrix},
\end{equation}
where \(K_P=\sum_{\ell=1}^{d} A_\ell\otimes D_\ell .\)
\(D_\ell\) denotes the Fourier differentiation matrix acting on the \(\ell\)-th spatial direction, and \(A_\ell\) is the corresponding symmetric coefficient matrix in \eqref{eq:dSMF}. For the Hamiltonian in Eq.~\eqref{eq:H_SMF}, we have
\begin{equation}\label{eq:H_dSMF}
    H_1
    =
    \frac{1}{2}
    \begin{pmatrix}
    0&I\\
    I&0
    \end{pmatrix}
    =
    \frac{1}{2}X\otimes I,\quad
    H_2
    =
    \frac{1}{2i}
    \begin{pmatrix}
    2iK_P&I\\
    -I&0
    \end{pmatrix}
    =
    \sigma_{00}\otimes K_P
    +\frac{1}{2}Y\otimes I.
\end{equation}
\end{remark}

\subsection{Quantum circuit construction}
Define the Hamiltonian simulation in the short time,
\begin{equation}
    U_{SMF}(\tau) = exp(-i\tau H_1\otimes D_{p}+ i\tau H_2\otimes I^{\otimes n_p} ).
\end{equation}
The total time evolution \(U_{SMF}(T)\) can be approximated via the \(p\)-th-order Trotter product formula.
To explicitly construct the quantum circuit, we decompose  \(U_{SMF}(\tau)\) into a series of small blocks that can be implemented with elementary quantum gates. 
Taking the first-order product formula as an example, for the first splitting, we define
\begin{equation}\label{eq:SMF U_SMF}
U_{SMF}(\tau) \approx U_{\mathrm I}(\tau)U_{\mathrm{II}}(\tau),
\quad
U_{\mathrm I}(\tau) = \exp(-iH_1\otimes D_p \tau),
\quad
U_{\mathrm{II}}(\tau) =  \exp(iH_2\otimes I^{\otimes n_p} \tau)
\end{equation}
Since $D_p$ is diagonal in the Fourier basis, $U_{\mathrm I}(\tau)$ admits an exact circuit implementation without any splitting approximation.
\begin{equation}\label{eq:SMF U_1}
    \begin{aligned}
        U_{\mathrm I}(\tau) &= \exp \left(-\frac{i\tau}{2}  X\otimes I^{\otimes (n_x+1)}\otimes \sum_{k=0}^{N_p-1} \eta_k  \ket{k}\bra{k}\right) = \sum_{k=0}^{N_p-1} \exp (-\frac{i\eta_k\tau}{2}  X\otimes I^{\otimes (n_x+1)})\otimes \ket{k}\bra{k}\\
        &= \sum_{k=0}^{N_p-1}R_x(\tau \eta_k)\otimes I^{\otimes (n_x+1)}\otimes\ket{k}\bra{k} = \sum_{k=0}^{N_p-1}R_x(\frac{\tau}{L_p} )^{k-\frac{N_p}{2}}\otimes I^{\otimes (n_x+1)}\otimes\ket{k}\bra{k}\\
        & = R_x(-\frac{\tau N_p}{2L_p})\otimes I^{\otimes (n_x+1+n_p)}\sideset{}{'}{\prod}_{m=0}^{n_p-1}R_x(\frac{\tau}{L_p})^{2^m}\otimes\ket{1}\bra{1} + I\otimes\ket{0}\bra{0}
        \triangleq V_1(\tau).
    \end{aligned}
\end{equation}
Here the first equality can be found in Appendix \ref{app:pfA_k}. The last step is obtained by applying Lemma 3 in \cite{hu2024quantum} together with the identities \eqref{eq:def of R gate}. The circuit diagram is shown in Fig. \ref{fig:SMF U1}.

For the second splitting, we define
\begin{equation}\label{eq:SMF UII}
    \begin{aligned}
        U_{\mathrm{II}}(\tau)\approx & U_{\mathrm{II},x}(\tau) U_{\mathrm{II},Y}(\tau),\quad
        U_{\mathrm{II},x}(\tau) = \exp (i\tau \sigma_{00} \otimes K_P)\otimes I^{\otimes n_p},\quad 
        U_{\mathrm{II},Y}(\tau) = \exp (\frac{i\tau}{2} Y\otimes I)\otimes I^{\otimes n_p}
        & \\
 & U_{\mathrm{II},x}(\tau) = [\sigma_{00} \otimes U_P(\tau)+\sigma_{11} \otimes I^{\otimes (n_x+1)}]\otimes I^{\otimes n_p},\quad
 U_P(\tau) = \exp(i\tau K_P).
    \end{aligned}
\end{equation}
The operator $U_P(\tau)$ can be treated in exactly the same manner as $U_{\mathrm{I}}(\tau)$, since $D_{x}$ is also diagonal in the Fourier basis. Thus, we have
\begin{equation}
U_P(\tau)=R_x(\frac{a\tau N_x}{ L_x})\otimes I^{\otimes n_x}\sideset{}{'}{\prod}_{m=0}^{n_x-1}(R_x(\frac{-2^{m+1}a \tau}{L_x}) \otimes\ket{1}\bra{1}+I \otimes\ket{0}\bra{0}).
\end{equation}
The circuit diagrams are shown in Figs. \ref{fig:SMF U2x}-\ref{fig:SMF U_2}.
The QFT and inverse QFT are applied to the \(x\)-register and the \(p\)-register before and after the Hamiltonian simulation, respectively. Accordingly, the Schr\"odingerisation-based quantum circuit for solving Eq.~\eqref{eq:SMF_1d} using the first-order Lie--Trotter formula is illustrated in Fig.~\ref{fig:SMF}.

\begin{figure}[h!] 
    \centerline{
        \Qcircuit @C=1em @R=1em
        {
        \lstick{ancilla}&\qw&\qw&\qw&\gate{R_x(-\frac{N_p\tau}{2L_p})}&\gate{R_x(\frac{\tau}{L_p})}&\gate{R_x(\frac{\tau}{L_p})^{2}}&\qw &\cdots& &\gate{R_x(\frac{\tau}{L_p})^{2^{n_p-1}}}&\qw\\
        \lstick{x}&\qw&{/^{n_x+1}}\qw&\qw&\qw&\qw&\qw&\qw&\cdots& &\qw&\qw\\
        \lstick{p_0} &\qw&\qw&\qw&\qw&\ctrl{-2}&\qw&\qw&\cdots& &\qw
        &\qw\\
        \lstick{p_1} &\qw&\qw&\qw&\qw&\qw&\ctrl{-3}&\qw&\cdots&& \qw&\qw\\
        \lstick{\vdots} & & & & & & & & & & \\
        \lstick{p_{n_p-1}} &\qw&\qw&\qw&\qw&\qw&\qw&\qw&\cdots& &\ctrl{-5}&\qw
            }
    }
    \caption{Quantum circuit for $U_{\mathrm{I}}(\tau)$}
    \label{fig:SMF U1}
\end{figure}

\begin{figure}[h!] 
    \centerline{
        \Qcircuit @C=1em @R=1em
        {
        \lstick{b}&\qw&\qw&\gate{R_x(\frac{-2a\tau}{L_x})}&\gate{R_x(\frac{-2a\tau}{L_x})^2}&\qw &\cdots& &\gate{R_x(\frac{-2a\tau}{L_x})^{2^{n_x-1}}}&\gate{R_x(\frac{a\tau N_x}{L_x})}&\qw\\
        \lstick{x_0} &\qw&\qw&\ctrl{-1}&\qw&\qw&\cdots& &\qw&\qw
        &\qw\\
        \lstick{x_1} &\qw&\qw&\qw&\ctrl{-2}&\qw&\cdots&& \qw&\qw&\qw\\
        \lstick{\vdots} & & & & & & & & & \\
        \lstick{x_{n_x-1}} &\qw&\qw&\qw&\qw&\qw&\cdots& &\ctrl{-4}&\qw&\qw
            }
    }
    \caption{Quantum circuit for $U_{\mathrm{II},x}(\tau)$}
    \label{fig:SMF U2x}
\end{figure}

\begin{figure}[h!] 
    \centerline{
        \Qcircuit @C=1em @R=1em
        {
        \lstick{ancilla}&\qw&\qw&\qw&\ctrlo{1}&\gate{R_y(-\tau)}&\qw\\
        \lstick{x} &\qw&{/^{n_x+1} }\qw&\qw&\gate{U_P(-\tau)}&\qw&\qw\\
        \lstick{p}&\qw&{/^{n_p} }\qw&\qw&\qw&\qw&\qw
        \POS"1,5"."2,5"."3,5"!C*+<0.8em>\frm{--}
        \POS"1,6"."2,6"."3,6"!C*+<0.8em>\frm{--}
        }
    }
    \caption{Quantum circuit for $U_{\mathrm{II}}(\tau)$}
    \label{fig:SMF U_2}
\end{figure}

\begin{figure}
    \centerline{
        \Qcircuit @C=1em @R=1.2em
        {
        \lstick{ancilla}&\qw&\qw&\qw&\qw&\qw&\multigate{3}{U_{\mathrm I}(\tau)}&\qw &\multigate{3}{U_{\mathrm{II}}(\tau)} &\qw&\qw&\qw\\
        \lstick{b} &\qw&\qw&\qw&\qw&\qw&\ghost{U_{\mathrm I}(\tau)}&\qw&\ghost{U_{\mathrm{II}}(\tau)} &\qw
        &\qw&\qw\\
        \lstick{x} &\qw&{/^{n_x}}\qw&\qw&\gate{QFT}&\qw&\ghost{U_{\mathrm I}(\tau)}&\qw&\ghost{U_{\mathrm{II}}(\tau)}&\qw& \gate{IQFT}&\qw\\
        \lstick{p} &\qw&{/^{n_p}}\qw&\qw&\qw&\gate{QFT}&\ghost{U_{\mathrm I}(\tau)}&\qw&\ghost{U_{\mathrm{II}}(\tau)}&\gate{IQFT} &\qw&\qw
        \gategroup{1}{7}{4}{9}{1.2em}{--}
            }
    }
    \caption{Quantum circuit for $V_{SMF}(t)$}
    \label{fig:SMF}
\end{figure}
\begin{remark}
In \(d\)-dimensional case, the construction of \(U_{\mathrm I}(\tau)\) and $U_{\mathrm{II},Y}(\tau)$ follows the same procedure as in the one-dimensional case, except that the spatial register now consists of \(n_xd+1+\lceil\log_2 q_d\rceil\) qubits. For \(U_{\mathrm{II},x}\), the circuit implementation of \(\exp(i\tau\sum_{\ell=1}^{d} A_\ell \otimes D_\ell)\) reduces to constructing the individual
evolutions \(e^{iA_\ell t}\), where each \(A_\ell\) is sparse and symmetric. As derived
in Appendix~\ref{sec:complexity}, the implementation of each directional
evolution requires \(O(d\log d)\) CNOT gates.

Fig.~\ref{fig:dSMF U_2x} presents the circuit for $U_{\mathrm{II},x}(\tau)$ in the \(d\)-dimensional case, while the detailed circuits for \(d=2\) and \(d=3\) are  provided in Appendix~\ref{sec:circuit}.    
\end{remark}

\begin{figure}[h!] 
    \centerline{
        \Qcircuit @C=1em @R=1em
        {
        \lstick{ancila_2}&\qw&\qw&\gate{\exp{(iA_{1}t)}}&\gate{\exp{(iA_{2}t)}}&\qw &\cdots& &\gate{\exp{(iA_{d}t)}}&\qw\\
        \lstick{1} &\qw&{/^{n}}\qw&\ctrl{-1}&\qw&\qw&\cdots& &\qw
        &\qw\\
        \lstick{2} &\qw&{/^{n}}\qw&\qw&\ctrl{-2}&\qw&\cdots&& \qw&\qw\\
        \lstick{\vdots} & & & & & & & & \\
        \lstick{d} &\qw&{/^{n}}\qw&\qw&\qw&\qw&\cdots& &\ctrl{-4}&\qw
            }
    }
    \caption{Quantum circuit for $U_{\mathrm{II},x}(\tau)$}
    \label{fig:dSMF U_2x}
\end{figure}
\subsection{Complexity}

In this subsection, we analyze the approximation error of the Schr\"odingerisation-based circuit simulation.
Let \(U_{SMF}(\tau)\) denote the exact evolution operator over one time step
\(\tau\), and let \(V_{SMF,\,p}(\tau)\) denote its approximation obtained using a
\(p\)th-order product formula. For \(p=1\), the first-order Lie--Trotter
approximation is given by 
\begin{equation}
    V_{SMF,\,1}(\tau) = U_{\mathrm{I}}(\tau) U_{\mathrm{II},x}(\tau) U_{\mathrm{II},Y}(\tau),
\end{equation}
For \(p=2\), we employ the symmetric second-order product formula
\begin{equation}
    \begin{aligned}
V_{SMF,\,2}(\tau)
&=
U_{\mathrm{I}}\!\left(\frac{\tau}{2}\right)
U_{\mathrm{II}}\!\left(\frac{\tau}{2}\right)
U_{\mathrm{II}}\!\left(\frac{\tau}{2}\right)
U_{\mathrm{I}}\!\left(\frac{\tau}{2}\right),
\\
U_{\mathrm{II}}\!\left(\frac{\tau}{2}\right)
& = U_{\mathrm{II},x}\!\left(\frac{\tau}{4}\right) U_{\mathrm{II},Y}\!\left(\frac{\tau}{4}\right)
 U_{\mathrm{II},Y}\!\left(\frac{\tau}{4}\right) U_{\mathrm{II},x}\!\left(\frac{\tau}{4}\right).
    \end{aligned}
\end{equation}

We first present preliminary lemma concerning commutator-scaling estimates for Trotter splitting
errors.

\begin{lemma}\cite{trottererror}
\label{lem:p_order_commutator_error}
Let $G=\sum_{\gamma=1}^{\Gamma}G_\gamma$, 
where each \(G_\gamma\) is anti-Hermitian, and let
\(S_p(\tau)\) be a \(p\)th-order product formula approximating
\(e^{\tau G}\) with stages number $s_p$. Then, the local
approximation error satisfies
\begin{equation}
    \left\|
        e^{\tau G}-S_p(\tau)
    \right\|
    =
    \mathcal{O}_p
    \left(
        \widetilde{\alpha}_{\mathrm{comm}}^{(p)}
        \tau^{p+1}
    \right),
\end{equation}
where
\begin{equation}\label{eq:trotteralpha}
    \widetilde{\alpha}_{\mathrm{comm}}^{(p)}
    =
    \sum_{\gamma_1,\ldots,\gamma_{p+1}=1}^{\Gamma}
    \left\|
    \left[
        G_{\gamma_{p+1}},
        \left[
        G_{\gamma_p},
        \ldots,
        \left[
            G_{\gamma_2},G_{\gamma_1}
        \right]
        \ldots
        \right]
    \right]
    \right\|.
\end{equation}
Here, the constant hidden in \(\mathcal{O}_p(\cdot)\) depends only
on the chosen \(p\)th-order product formula and is independent of
\(\tau\).
For the standard recursive Suzuki
formula, \(s_1=1\) and $s_p=2\cdot5^{p/2-1}, p=2,4,6\ldots$
\end{lemma}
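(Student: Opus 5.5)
The estimate in Lemma~\ref{lem:p_order_commutator_error} is the commutator-scaling bound of Childs et al.~\cite{trottererror}. The plan is to reproduce its two core steps: an exact integral representation of the local error, followed by an induction showing that the integrand carries $p$ nested commutators. Write the $p$th-order formula as $S_p(\tau)=\prod_{k=1}^{K}e^{\tau a_k G_{\pi(k)}}$, with real coefficients $a_k$ and an ordering $\pi$ of the summands. Here $K=s_p\Gamma$ for a symmetric stage structure; for Lie--Trotter $K=\Gamma$. Since each $G_\gamma$ is anti-Hermitian, every factor is unitary. Consequently all operator norms of products of exponentials equal one, and no growth factors appear.

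\emph{Step 1 (error representation).} Differentiate $S_p(\tau)$ in $\tau$. This gives $S_p'(\tau)=G S_p(\tau)+\mathcal T(\tau)S_p(\tau)$, where
\[
\mathcal T(\tau)=\sum_{k}\Big(\prod_{j>k}e^{\tau a_j G_{\pi(j)}}\Big)a_kG_{\pi(k)}\Big(\prod_{j>k}e^{\tau a_j G_{\pi(j)}}\Big)^{-1}-G .
\]
Variation of parameters then yields
\[
S_p(\tau)-e^{\tau G}=\int_0^\tau e^{(\tau-s)G}\,\mathcal T(s)\,S_p(s)\,ds .
\]
Because $e^{(\tau-s)G}$ and $S_p(s)$ are unitary, it follows that $\|S_p(\tau)-e^{\tau G}\|\le\int_0^\tau\|\mathcal T(s)\|ds$.

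\emph{Step 2 (order conditions imply commutator structure).} Expand each conjugation $e^{sB}Ce^{-sB}$ by Taylor's theorem with integral remainder in the adjoint action:
\[
e^{sB}Ce^{-sB}=\sum_{r<p}\frac{s^r}{r!}\mathrm{ad}_B^rC+\int_0^s\frac{(s-u)^{p-1}}{(p-1)!}e^{uB}\,\mathrm{ad}_B^pC\,e^{-uB}\,du .
\]
Apply this iteratively over the ordered product of conjugating exponentials. The result writes $\mathcal T(s)$ as a polynomial in $s$ of degree less than $p$ with nested-commutator coefficients, plus remainder terms. Each remainder term is a unitary conjugation of a $p$-fold nested commutator $[G_{\gamma_{p+1}},[\cdots,[G_{\gamma_2},G_{\gamma_1}]]]$ times $O(s^p)$.

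The key point is that $S_p$ has order $p$, meaning $S_p(\tau)-e^{\tau G}=O(\tau^{p+1})$. By the representation in Step 1, this forces all polynomial coefficients of degree $<p$ in $\mathcal T$ to vanish identically, as formal series in the $G_\gamma$. Only the remainders survive. Their norms are bounded using unitarity by $|a|^{p+1}s^p$ times the sum in \eqref{eq:trotteralpha}. Integrating over $s$ gives $\mathcal O_p(\widetilde\alpha^{(p)}_{\mathrm{comm}}\tau^{p+1})$, with the constant depending only on $K$, $p$, and the $a_k$.

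\emph{Step 3 (stage count).} This step is a direct count. For Lie--Trotter there is one stage, so $s_1=1$. The symmetric Strang formula uses $s_2=2$ sweeps. The Suzuki recursion $S_{2k}(\tau)=S_{2k-2}(u_k\tau)^2S_{2k-2}((1-4u_k)\tau)S_{2k-2}(u_k\tau)^2$ multiplies the stage count by $5$. Hence $s_p=2\cdot5^{p/2-1}$ for even $p$.

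The main obstacle is Step~2. Expanding the adjoint actions across many factors produces many cross terms, and one must argue that all low-order terms cancel. The rigorous version of this argument requires care: the order condition is a statement about the full product, and it must be transferred to the vanishing of the Taylor coefficients of $\mathcal T$. The cleanest route, as in~\cite{trottererror}, is to observe that those coefficients are universal Lie polynomials. These vanish for formal variables whenever the formula has order $p$, and the remainder bookkeeping is then done in the manner sketched above.
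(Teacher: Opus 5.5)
The paper gives no proof of this lemma, only the citation \cite{trottererror}, and your Steps~1 and~3 reproduce that argument correctly. The gap is in Step~2, and it is not where you locate it.

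The transfer of the order condition, which you flag as the main obstacle, follows at once from Step~1 for the given operators. If $T_q$ is the lowest nonzero Taylor coefficient of $\mathcal T$ with $q<p$, then $S_p(\tau)-e^{\tau G}=\frac{\tau^{q+1}}{q+1}T_q+O(\tau^{q+2})$, which contradicts order $p$. No universal Lie polynomials are needed.

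What does need an idea is how the iterated expansion is organised. Suppose, as your displayed formula suggests, every conjugation is truncated at the same order $p$. Then the product contains polynomial cross terms of degree $\ge p$, which the order condition does not remove. It also contains remainders carrying more than $p$ commutators. Bounding either kind costs factors of $\|G_\gamma\|$, so you lose both the pure $\widetilde{\alpha}_{\mathrm{comm}}^{(p)}$ form and a constant depending only on the formula. Taylor-expanding $\mathcal T$ as a whole does not help either: its $p$th derivative interleaves conjugations with the commutators. That produces commutators with conjugated operators, not nested commutators of the $G_\gamma$.

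The missing device is an inside-out expansion with decreasing truncation orders. Write $B_j=a_jG_{\pi(j)}$ and $e^{s\,\mathrm{ad}_B}C=e^{sB}Ce^{-sB}$, so the $k$th summand of $\mathcal T(s)$ is $e^{s\,\mathrm{ad}_{B_K}}\cdots e^{s\,\mathrm{ad}_{B_{k+1}}}B_k$. Proceed as follows.
\begin{enumerate}
\item Expand the innermost conjugation to order $p$.
\item Apply the next conjugation to each polynomial term of degree $\ell$ already produced, and expand it only to order $p-\ell$.
\item Continue outward, leaving the not-yet-expanded outer conjugations untouched on every remainder.
\end{enumerate}
The polynomial part then has degree $<p$ and equals the Taylor polynomial of $\mathcal T$, so it vanishes by the order condition. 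Every remainder is a unitary conjugation of
\[
\prod_{i=k+1}^{j-1}\frac{s^{\ell_i}}{\ell_i!}\int_0^s\frac{(s-u)^{r-1}}{(r-1)!}\,e^{u\,\mathrm{ad}_{B_j}}\,\mathrm{ad}_{B_j}^{r}\,\mathrm{ad}_{B_{j-1}}^{\ell_{j-1}}\cdots\mathrm{ad}_{B_{k+1}}^{\ell_{k+1}}B_k\,du,
\qquad r+\sum_{i}\ell_i=p,\quad r\ge1 .
\]
Its norm is therefore at most $s^p(\max_k|a_k|)^{p+1}$ times a single nested commutator from \eqref{eq:trotteralpha}, which is the form you asserted.

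Each remainder corresponds to a weakly increasing sequence of positions. When every stage uses each $G_\gamma$ once, as in Suzuki's formulas, a given index sequence $(\gamma_1,\ldots,\gamma_{p+1})$ arises from at most $s_p^{p+1}$ such position sequences, one stage choice per entry. The constant therefore depends on $p$, $s_p$ and the $a_k$ but not on $\Gamma$. You state a dependence on $K=s_p\Gamma$ instead, which is weaker. The distinction matters because the paper applies the lemma with $\Gamma=d$ and tracks the $d$-dependence by hand.
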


\subsubsection{Error and complexity analysis}
\begin{lemma}\label{lemma:SMF error}
Consider the Schr\"odinger equation $\frac{\partial}{\partial t}\ket{u(t)} = iH_{SMF}\ket{u(t)}$, where the Hamiltonian is given by Eq. \eqref{eq:H_SMF}. 
For a fixed \(p\)th-order product formula, the one-step approximation error
satisfies
\begin{equation}
\left\|
U_{SMF}(\tau)
-
V_{SMF,\,p}(\tau)
\right\|
=
\mathcal{O}_p\!\left(
\tau^{p+1}\mathfrak C_{1,p}
\right),
\label{eq:periodic_local_p_error}
\end{equation}
where
\begin{equation}
\mathfrak C_{1,p}
:=
\kappa_{\mathrm{aux}}(1+\lambda_x)
(1+\kappa_{\mathrm{aux}}+\lambda_x)^{p-1}
+
\lambda_x(1+\lambda_x)^{p-1}.
\label{eq:periodic_commutator_coefficient}
\end{equation}
\(\kappa_{\mathrm{aux}}:=\|D_p\|\) and \(\kappa_x:=\|D_x\|\) denote the largest absolute Fourier wave numbers in the auxiliary \(p\)-direction and the physical \(x\)-direction, respectively, while \(\lambda_x:=a\kappa_x\) denotes the corresponding wave-speed-weighted spatial scale.

For the total evolution time $T=r\tau$, let $V_p(T) = \left[V_{SMF,\,p}(\tau)\right]^r$. The corresponding global approximation error is bounded by 
\begin{equation}\label{}
    \left\|
        U_{SMF}(T)-V_{\mathrm{SMF},p}^r(\tau)
    \right\|=
    \mathcal{O}_p(
        \frac{T^{p+1}}{r^p}\mathfrak C_{1,p}).
\end{equation}
\end{lemma}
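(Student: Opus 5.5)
We will apply Lemma~\ref{lem:p_order_commutator_error} to the three-term splitting $iH_{SMF}=G_1+G_2+G_3$, where
\[
G_1=-iH_1\otimes D_p,\qquad G_2=i\,\sigma_{00}\otimes K_P\otimes I,\qquad G_3=\tfrac{i}{2}Y\otimes I\otimes I .
\]
These are the generators of $U_{\mathrm I}$, $U_{\mathrm{II},x}$ and $U_{\mathrm{II},Y}$. Each $G_\gamma$ is anti-Hermitian, and the nested formula $V_{SMF,p}$ is a $p$th-order product formula in these three pieces. For $p=1$ this is immediate. For higher $p$ we regard the nested symmetric (Suzuki) composition as a product formula in $G_1,G_2,G_3$ whose order conditions hold, because the inner splitting of $U_{\mathrm{II}}$ has the same order. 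We will remark that the error of the inner splitting is itself controlled by commutators of $G_2,G_3$ only, so it is absorbed in the same bound.

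\textbf{Norms.} From \eqref{eq:SMFH1&H2} we have $\|G_1\|=\tfrac12\kappa_{\mathrm{aux}}$, $\|G_2\|\le\|K_P\|=\lambda_x$ (since $\|X\|=1$), and $\|G_3\|=\tfrac12$.

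\textbf{Vanishing commutators.} Next we identify which commutators vanish, since this produces the structure of $\mathfrak C_{1,p}$. The pieces $G_2$ and $G_3$ act as identity on the $p$-register, and $G_1$ acts as identity on the spatial register. Hence $[G_1,G_2]$ involves $[X,\sigma_{00}]\otimes K_P\otimes D_p$, with norm $\lesssim\kappa_{\mathrm{aux}}\lambda_x$, and $[G_1,G_3]=[X,Y]\otimes I\otimes D_p$, with norm $\lesssim\kappa_{\mathrm{aux}}$. The remaining pair gives $[G_2,G_3]\propto[\sigma_{00},Y]\otimes K_P$, with norm $\lesssim\lambda_x$. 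Thus any nonzero innermost commutator either contains $G_1$, contributing a factor $\kappa_{\mathrm{aux}}(1+\lambda_x)$, or is $[G_2,G_3]$, contributing a factor $\lambda_x$.

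\textbf{Nested commutators.} Using $\|[A,B]\|\le 2\|A\|\|B\|$, each further layer of the nested commutator multiplies by at most $2\max\|G_\gamma\|$, and summing over the index $\gamma$ in the outer layers gives a factor $\lesssim(1+\kappa_{\mathrm{aux}}+\lambda_x)$ per layer. This yields the first term $\kappa_{\mathrm{aux}}(1+\lambda_x)(1+\kappa_{\mathrm{aux}}+\lambda_x)^{p-1}$.

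For the second term we must use the refinement that nested commutators built only from $G_2,G_3$ never produce a factor $\kappa_{\mathrm{aux}}$. Any word that contains $G_1$ somewhere is already covered by the first term: a $G_1$ entering an outer layer gives a commutator of the form $[G_1,\cdot]$ whose norm is bounded by $\kappa_{\mathrm{aux}}$ times the inner norm. The words with no $G_1$ contribute $\lambda_x(1+\lambda_x)^{p-1}$. Adding these estimates gives $\widetilde\alpha^{(p)}_{\mathrm{comm}}\lesssim\mathfrak C_{1,p}$, which is \eqref{eq:periodic_local_p_error}.

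\textbf{Global bound.} Since every factor is unitary, the standard telescoping argument gives
\[
\|U^r-V^r\|\le r\,\|U-V\|=r\,\tau^{p+1}\mathfrak C_{1,p}=\frac{T^{p+1}}{r^p}\,\mathfrak C_{1,p}.
\]

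\textbf{Main obstacle.} We expect the hardest step to be the careful bookkeeping that separates the commutator words into those containing $G_1$ and those without it. This separation is what avoids an overly crude bound of the form $(1+\kappa_{\mathrm{aux}}+\lambda_x)^{p+1}$. A second delicate point is justifying that the nested two-level Suzuki composition is genuinely a $p$th-order formula to which Lemma~\ref{lem:p_order_commutator_error} applies, rather than two separate errors. If that justification fails, we will instead bound the outer and inner splittings separately, each by the lemma, and observe that the two contributions sum to the same form of bound.
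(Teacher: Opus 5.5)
Your argument is correct, but it is organized differently from the paper's. The paper never flattens the nested formula. It applies Lemma~\ref{lem:p_order_commutator_error} twice and adds the two local errors by the triangle inequality, which is exactly your fallback. The first application is to the outer two-term splitting with generators $-iH_1\otimes D_p$ and $iH_2\otimes I$, so that $\sigma_{00}\otimes K_P$ and $\tfrac12 Y$ are lumped together inside $H_2$. The second is to the inner splitting of $U_{\mathrm{II}}$, with generators $ia\sigma_{00}\otimes X\otimes D_x$ and $\tfrac{i}{2}Y\otimes I$. In that organization the two summands of $\mathfrak C_{1,p}$ come out separately. With only two outer generators, every nonvanishing nested commutator has $[H_1,H_2]\otimes D_p$ innermost, so the factor $\kappa_{\mathrm{aux}}\|[H_1,H_2]\|\le\tfrac12\kappa_{\mathrm{aux}}(1+\lambda_x)$ appears automatically. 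A binomial count of the outer layers then gives $2^{p}(\|H_1\|\kappa_{\mathrm{aux}}+\|H_2\|)^{p-1}\|[H_1,H_2]\|\kappa_{\mathrm{aux}}$. The inner splitting never involves $D_p$ and yields $\lambda_x(1+\lambda_x)^{p-1}$. Your flattened three-term route needs two extra ingredients, and you supply both correctly. First, substituting an order-$p$ inner formula into an order-$p$ outer one gives an order-$p$ product formula in $G_1,G_2,G_3$. The commutator-scaling lemma applies to any finite product of exponentials of the summands that satisfies the order conditions, with the constant depending only on the fixed formula. Second, words with innermost $[G_2,G_3]$ but some outer $G_1$ are dominated via $\lambda_x\kappa_{\mathrm{aux}}\le\kappa_{\mathrm{aux}}(1+\lambda_x)$, so they fall under the first summand. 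Your version buys uniformity: a single application of the lemma, and no need to track the rescaled inner substeps and combine errors across nesting levels. It also extends directly to deeper nestings such as the directional splitting in $d$ dimensions or the four-level Dirichlet construction. The paper's version avoids the order-condition argument and reads off the structure of $\mathfrak C_{1,p}$ without extra bookkeeping.
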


\begin{proof}
The error arises from the two splittings. Throughout the proof, we repeatedly use
\begin{equation}
\|[C,D]\|\leq 2\|C\|\,\|D\|.
\label{eq:periodic_basic_commutator_bound}
\end{equation}

For the first splitting in Eq.~\eqref{eq:SMF U_SMF}, its two anti-Hermitian generators are
\begin{equation}
G_1=-iH_1\otimes D_p,
\qquad
G_2=iH_2\otimes I,
\label{eq:periodic_first_splitting_generators}
\end{equation}
where $H_1$ and $H_2$ are given by Eqs.\eqref{eq:SMFH1&H2}. Using the formula \eqref{eq:trotteralpha}, the different innermost commutators satisfy
\begin{equation}
    \begin{aligned}
    [G_1,G_1]&=[G_2,G_2] = 0,\qquad[G_2,G_1]
=-[G_1,G_2]\nonumber\\
[G_1,G_2]
&=[-iH_1,iH_2]\otimes D_p
 =[H_1,H_2]\otimes D_p.
\label{eq:periodic_first_mixed_commutators}
    \end{aligned}
\end{equation}
Using \([X,\sigma_{00}]=-iY\) and \([X,Y]=2iZ\), we have
\begin{equation}
[H_1,H_2]
=-\frac{ia}{2}Y\otimes X\otimes D_x
 +\frac{i}{2}Z\otimes I.
\end{equation}
Consequently,
\begin{equation}
\|[G_1,G_2]\|
=\|[G_2,G_1]\|
\leq
\frac{1}{2}\kappa_{\mathrm{aux}}(1+\lambda_x).
\label{eq:periodic_first_innermost_bound}
\end{equation}
We next determine how each additional commutator changes this initial factor.  For example,
\begin{equation}
    \begin{aligned}
[G_1,[G_1,G_2]]&=[-iH_1,[H_1,H_2]]\otimes D_p^2,\\
[G_2,[G_1,G_2]]
&=[iH_2,[H_1,H_2]]\otimes D_p.
    \end{aligned}
\end{equation}
It follows from \eqref{eq:periodic_basic_commutator_bound} that commuting with \(G_1\) contributes a factor no larger than
\(2\|H_1\|\kappa_{\mathrm{aux}}\), whereas commuting with \(G_2\)
contributes a factor no larger than \(2\|H_2\|\).  Therefore, among the
remaining \(p-1\) commutators, a term containing \(r\) occurrences of
\(G_1\) and \(p-1-r\) occurrences of \(G_2\) is bounded by
\begin{equation}
\bigl(2\|H_1\|\kappa_{\mathrm{aux}}\bigr)^r
\bigl(2\|H_2\|\bigr)^{p-1-r}
\|[G_1,G_2]\|.
\end{equation}
There are \(\binom{p-1}{r}\) arrangements and the two orders of the innermost mixed commutator gives
\begin{align}
\widetilde{\alpha}_1^{(p)}
&\leq
2\sum_{r=0}^{p-1}\binom{p-1}{r}
\bigl(2\|H_1\|\kappa_{\mathrm{aux}}\bigr)^r
\bigl(2\|H_2\|\bigr)^{p-1-r}
\|[H_1,H_2]\|\kappa_{\mathrm{aux}}
\nonumber\\
&=
2^p
\bigl(\|H_1\|\kappa_{\mathrm{aux}}+\|H_2\|\bigr)^{p-1}
\|[H_1,H_2]\|\kappa_{\mathrm{aux}}.
\label{eq:periodic_first_commutator_factor}
\end{align}
Since $\|H_1\|=\frac{1}{2}$, $
\|H_2\|\leq\lambda_x+\frac{1}{2}$ and $
\|[H_1,H_2]\|\leq\frac{1}{2}(1+\lambda_x)$,
Eq.~\eqref{eq:periodic_first_commutator_factor} becomes
\begin{equation}
\widetilde{\alpha}_1^{(p)}
\leq
2^{p-1}\kappa_{\mathrm{aux}}(1+\lambda_x)
(
\frac{1}{2}\kappa_{\mathrm{aux}}+\lambda_x+\frac{1}{2}
)^{p-1}.
\label{eq:periodic_first_commutator_bound}
\end{equation}

For the second splitting in Eq.~\eqref{eq:SMF UII}, using the same method, the two anti-Hermitian generators are $G_1 = ia\sigma_{00}\otimes X\otimes D_x$ and $G_2 = \frac{i}{2}Y\otimes I$. Similar as above, the second commutator coefficient is
\begin{equation}
    \widetilde{\alpha}_2^{(p)}\leq
2^p
\bigl(\|G_1\|+\|G_2\|\bigr)^{p-1}
\|[G_1,G_2]\|=
2^{p-1}\lambda_x
(\lambda_x+\frac{1}{2})^{p-1}.
\label{eq:periodic_second_commutator_bound}
\end{equation}

Finally, applying Lemma~\ref{lem:p_order_commutator_error} to both
splittings and absorbing all fixed formula-dependent factors into
\(\mathcal{O}_p(\cdot)\), the triangle inequality yields
\begin{equation}
    \left\|
U_{SMF}(\tau)-V_{\mathrm{SMF},p}(\tau)
\right\|=
\mathcal{O}_p[
\tau^{p+1}
(
\widetilde{\alpha}_1^{(p)}+
\widetilde{\alpha}_2^{(p)}
)
].
\label{eq:periodic_combined_commutator_error}
\end{equation}
Combining Eqs.~\eqref{eq:periodic_first_commutator_bound} and
\eqref{eq:periodic_second_commutator_bound}, we find
\begin{equation}
\widetilde{\alpha}_1^{(p)}+\widetilde{\alpha}_2^{(p)}
\leq
2^{p-1}\left[
\kappa_{\mathrm{aux}}(1+\lambda_x)
(
\frac{1}{2}\kappa_{\mathrm{aux}}+\lambda_x+\frac{1}{2}
)^{p-1}
+\lambda_x
(\lambda_x+\frac{1}{2})^{p-1}
\right] = \mathcal{O}_p(
\mathfrak C_{1,p}
),
\end{equation}
where in the last step we used
$\frac{1}{2}\kappa_{\mathrm{aux}}+\lambda_x+\frac{1}{2}
\leq 1+\kappa_{\mathrm{aux}}+\lambda_x$ and $\lambda_x+\frac{1}{2}\leq1+\lambda_x$.
Substitution into \eqref{eq:periodic_combined_commutator_error} proves
\eqref{eq:periodic_local_p_error}. For $r$ time steps, we have
\begin{equation}
    \left\|
        U_{SMF}(\tau)-V_{\mathrm{SMF},p}^r(\tau)
    \right\|
    \leq r\left\|
        U_{SMF}(\tau)-V_{\mathrm{SMF},p}(\tau)
    \right\|=
    \mathcal{O}_p(
        \frac{T^{p+1}}{r^p}\mathfrak C_{1,p}).
\end{equation}
\end{proof}

\begin{lemma}\label{lemma:dSMF error}
Let \(V_{d,p}(\tau)\) denote the circuit approximation of the \(d\)-dimensional Hamiltonian evolution \(U_{SMF,\,d}(\tau)\) given by Eq. \eqref{eq:H_dSMF}. Then the local error satisfies
\begin{equation}
\left\|
U_{SMF,\,d}(\tau)-V_{d,p}(\tau)
\right\|
=
\mathcal{O}_p\!\left(
\tau^{p+1}\mathfrak C_{d,p}
\right),
\label{eq:d_periodic_local_p_error}
\end{equation}
where 
\begin{equation}
\mathfrak C_{d,p}
:=
\kappa_{\mathrm{aux}}(1+d\lambda_{\mathrm{sp}})
(1+\kappa_{\mathrm{aux}}+d\lambda_{\mathrm{sp}})^{p-1}
+
d\lambda_{\mathrm{sp}}
(1+d\lambda_{\mathrm{sp}})^{p-1}
+
d^p(d-1)\lambda_{\mathrm{sp}}^{p+1}.
\label{eq:d_periodic_commutator_coefficient}
\end{equation}
For \(T=r\tau\), the global error over the time interval \([0,T]\) satisfies
\begin{equation}
\left\|
U_{SMF,\,d}(T)-[V_{d,p}(\tau)]^{r}
\right\|
=
\mathcal{O}_p\!\left(
\frac{T^{p+1}\mathfrak C_{d,p}}{r^p}
\right),
\label{eq:d_periodic_global_p_error}
\end{equation}
where \begin{equation}
\kappa_{\mathrm{aux}}:=\|D_p\|,
\qquad
\kappa_{\mathrm{sp}}:=\max_{1\leq\ell\leq d}\|D_\ell\|,
\qquad
\nu:=\max_{1\leq\ell\leq d}\|A_\ell\|,
\qquad
\lambda_{\mathrm{sp}}:=\nu\kappa_{\mathrm{sp}}.
\label{eq:d_periodic_wave_numbers}
\end{equation}
\end{lemma}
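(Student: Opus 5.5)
We follow the proof of Lemma~\ref{lemma:SMF error} level by level. The only new ingredient is a third splitting level: $U_{\mathrm{II},x}(\tau)$ is realized as the product of the directional evolutions $\exp(i\tau A_\ell\otimes D_\ell)$, $\ell=1,\ldots,d$ (Fig.~\ref{fig:dSMF U_2x}). We write $V_{d,p}(\tau)$ as the nested $p$th-order formula over the three levels. The splittings are (i) $G_1=-iH_1\otimes D_p$ versus $G_2=iH_2\otimes I$; (ii) $iK_P$ (tensored with $\sigma_{00}$) versus $\frac{i}{2}Y\otimes I$; and (iii) the $d$ generators $G^{(\ell)}=i\sigma_{00}\otimes A_\ell\otimes D_\ell$ inside $U_{\mathrm{II},x}$. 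By the triangle inequality and unitarity of every factor, the one-step error is bounded by the sum of the three levels' errors. Each level's error comes from Lemma~\ref{lem:p_order_commutator_error} with its own $\widetilde\alpha^{(p)}$. The global bound \eqref{eq:d_periodic_global_p_error} then follows by telescoping over $r$ steps, exactly as before.

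\textbf{Levels (i) and (ii).} Here we repeat the binomial counting argument verbatim. The only change is the norm bounds: $\|H_1\|=\frac12$, $\|K_P\|\le\sum_\ell\|A_\ell\|\|D_\ell\|\le d\lambda_{\mathrm{sp}}$, $\|H_2\|\le d\lambda_{\mathrm{sp}}+\frac12$, and $\|[H_1,H_2]\|\le\frac12(1+2d\lambda_{\mathrm{sp}})$. The last bound comes from the same Pauli identities, with $X\otimes D_x$ replaced by $K_P$. Substituting these into \eqref{eq:periodic_first_commutator_factor} gives
\begin{equation*}
\widetilde\alpha_1^{(p)}\lesssim_p \kappa_{\mathrm{aux}}(1+d\lambda_{\mathrm{sp}})(1+\kappa_{\mathrm{aux}}+d\lambda_{\mathrm{sp}})^{p-1}.
\end{equation*}
The analogue of \eqref{eq:periodic_second_commutator_bound} gives
\begin{equation*}
\widetilde\alpha_2^{(p)}\lesssim_p d\lambda_{\mathrm{sp}}(1+d\lambda_{\mathrm{sp}})^{p-1}.
\end{equation*}
These are the first two terms of $\mathfrak C_{d,p}$.

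\textbf{Level (iii).} This level produces the new term. We first note that the $D_\ell$ act on distinct spatial registers, so they commute. Hence
\begin{equation*}
[G^{(\ell)},G^{(m)}]=-\sigma_{00}\otimes[A_\ell,A_m]\otimes D_\ell D_m,
\end{equation*}
which vanishes for $\ell=m$. Its norm is at most $2\nu^2\kappa_{\mathrm{sp}}^2=2\lambda_{\mathrm{sp}}^2$. Each further commutator with some $G^{(\ell)}$ multiplies the bound by at most $2\lambda_{\mathrm{sp}}$. In the sum \eqref{eq:trotteralpha}, the innermost pair must be distinct, which gives $d(d-1)$ choices. The remaining $p-1$ indices are free, which gives $d^{p-1}$ choices. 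Altogether
\begin{equation*}
\widetilde\alpha_3^{(p)}\le 2^p d^{p}(d-1)\lambda_{\mathrm{sp}}^{p+1}.
\end{equation*}
This is the third term, $d^p(d-1)\lambda_{\mathrm{sp}}^{p+1}$, up to $p$-dependent constants.

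\textbf{Main obstacle.} The delicate point is justifying that the level errors simply add in the nested formula. The inner approximations are substituted into the stages of the outer formula, with stage times $c_j\tau$. Each substitution incurs the inner one-step error at time $c_j\tau$. Since $s_p$ and the $c_j$ are fixed for a given formula, unitarity and the triangle inequality yield only a $p$-dependent multiplicative constant, which is absorbed into $\mathcal O_p$. We also observe that the $\sigma_{00}$ control and the tensoring with $I^{\otimes n_p}$ do not increase norms. A secondary check is that $\|[H_1,H_2]\|$ indeed scales like $1+d\lambda_{\mathrm{sp}}$ rather than $1+\lambda_{\mathrm{sp}}$. We verify this directly from $[X,\sigma_{00}]=-iY$ applied with $K_P$.
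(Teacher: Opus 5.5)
Your proposal is correct and follows the paper's proof essentially step for step. It uses the same three splitting levels, the same reuse of the one-dimensional binomial bound with $\|K_P\|\le d\lambda_{\mathrm{sp}}$, and the same count of $d^{p+1}-d^{p}=d^{p}(d-1)$ nonvanishing directional nested commutators, each bounded by $2^{p}\lambda_{\mathrm{sp}}^{p+1}$. Your explicit handling of the nested substitution (fixed stage counts, stage times $c_j\tau$, constants absorbed into $\mathcal O_p$) spells out what the paper compresses into a single triangle-inequality step.
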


\begin{proof}
The \(d\)-dimensional construction introduces an additional splitting among the different spatial directions. That is, \(\exp(i\tau\sum_{\ell=1}^{d} A_\ell \otimes D_\ell ) \approx \prod_{\ell=1}^{d}\exp(i\tau A_\ell \otimes D_\ell )\). Define
\begin{equation}
G_p=-iH_1\otimes D_p,
\qquad
G_y=\frac{i}{2}Y\otimes I,
\qquad
G_\ell=i\sigma_{00}\otimes A_\ell\otimes D_\ell,
\quad \ell=1,\ldots,d.
\end{equation}
Set \(G_{\mathrm{sp}}:=\sum_{\ell=1}^dG_\ell\).  From
\eqref{eq:d_periodic_wave_numbers}, we have
\begin{equation}
\|G_p\|=\frac{1}{2}\kappa_{\mathrm{aux}},
\qquad
\|G_y\|=\frac{1}{2},
\qquad
\|G_{\mathrm{sp}}\|\leq d\lambda_{\mathrm{sp}}.
\end{equation}
The splitting-error analysis for the \(d\)-dimensional case follows the same argument as that used in Eqs.~\eqref{eq:periodic_first_commutator_factor} and~\eqref{eq:periodic_second_commutator_bound}, yielding
\begin{equation}
\widetilde{\alpha}_{d,1}^{(p)}
\leq
\kappa_{\mathrm{aux}}(1+d\lambda_x)
\left(
\kappa_{\mathrm{aux}}+d\lambda_x+1
\right)^{p-1},\quad
\widetilde{\alpha}_{d,2}^{(p)}=
d\lambda_x
\left(d\lambda_x+1\right)^{p-1}.
\end{equation}
By Lemma~\ref{lem:p_order_commutator_error} and Eqs.\eqref{eq:periodic_basic_commutator_bound}, each fixed index sequence satisfies
\begin{equation}
\left\|
\left[G_{\gamma_{p+1}},
\left[G_{\gamma_p},\ldots,
\left[G_{\gamma_2},G_{\gamma_1}\right]
\ldots\right]\right]
\right\|
\leq
2^p\lambda_{\mathrm{sp}}^{p+1}.
\end{equation}
There are \(d^{p+1}\) ordered index sequences in total and 
\(d\,d^{p-1}=d^p\) sequences contain the vanishing commutator \([G_{\gamma_i},G_{\gamma_j}]=0\). Thus at most $d^{p+1}-d^{p} = d^p(d-1)$
terms are nonzero, and consequently
\begin{equation}
\widetilde{\alpha}_{\mathrm{d,3}}^{(p)}
\leq
2^p d^p(d-1)\lambda_{\mathrm{sp}}^{p+1}
=
O_p\!\left(d^p(d-1)\lambda_{\mathrm{sp}}^{p+1}\right).
\end{equation}
Combining the three splitting errors by the triangle inequality proves
\eqref{eq:d_periodic_local_p_error}.  Finally, unitary invariance and a
telescoping sum with \(\tau=T/N_t\) give
\eqref{eq:d_periodic_global_p_error}.
\end{proof}
\begin{lemma}\label{lemma:SMF gate}
In one-dimensional case, the approximated time evolution operator $V_{SMF,\,p}(\tau)$ obtained by a $p$th-order Trotter--Suzuki product formula can be implemented using
\begin{equation}
    \mathcal{Q}^{(p)}_{\mathrm{CNOT}}
    =
    \mathcal{O}\!\left(
    2s_p n_p+s_p^2(8n_x+2)
    \right)
    \label{eq:periodic_one_step_gate}
\end{equation}
CNOT gates.
\end{lemma}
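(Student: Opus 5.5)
The count will come from multiplying the CNOT cost of each elementary block by the number of times that block occurs in the nested Trotter--Suzuki formula. I will take the QFT/IQFT pair on the $x$- and $p$-registers as performed once, outside the $r$-fold product, so they do not enter $\mathcal{Q}^{(p)}_{\mathrm{CNOT}}$. The proof then has two parts: a cost for each of $U_{\mathrm I}$, $U_{\mathrm{II},x}$, $U_{\mathrm{II},Y}$, and a count of their occurrences in $V_{SMF,p}(\tau)$.

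\textbf{Block costs.} By the last line of \eqref{eq:SMF U_1}, $U_{\mathrm I}(\tau)=V_1(\tau)$ consists of one uncontrolled $R_x$ on the ancilla and $n_p$ single-controlled rotations $R_x(\tau/L_p)^{2^m}$, one for each qubit of the $p$-register. A controlled single-qubit rotation needs two CNOTs, so $U_{\mathrm I}$ costs $2n_p$.

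The factor $U_{\mathrm{II},Y}$ is a single $R_y$ on the ancilla and costs no CNOTs.

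The factor $U_{\mathrm{II},x}=\sigma_{00}\otimes U_P+\sigma_{11}\otimes I$ is $U_P$ controlled on the ancilla (Fig.~\ref{fig:SMF U_2}). Here $U_P$ is one $R_x$ on the field qubit $b$ plus $n_x$ rotations $R_x$ on $b$, each controlled by one qubit $x_m$. Adding the ancilla control turns each of these into a doubly-controlled rotation. I will compile each as two CNOTs plus two singly-controlled rotations, i.e.\ $2+2\cdot2=6$ CNOTs, with any extra CNOTs absorbed by the $O(\cdot)$. The single $R_x(a\tau N_x/L_x)$ becomes a singly-controlled rotation at $2$ CNOTs. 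This should total $8n_x+2$ for $U_{\mathrm{II},x}$, up to a constant.

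\textbf{Occurrence count.} The outer $p$th-order Suzuki formula for $e^{\tau(G_1+G_2)}$ has $s_p$ stages, each with $O(1)$ factors of $U_{\mathrm I}$ and $U_{\mathrm{II}}$. Adjacent $U_{\mathrm I}$ factors can be merged, since they commute and $V_1(\tau_1)V_1(\tau_2)=V_1(\tau_1+\tau_2)$. This leaves $O(s_p)$ applications of $U_{\mathrm I}$, hence $2s_pn_p$ CNOTs. Each outer $U_{\mathrm{II}}$ is itself expanded by the inner $p$th-order formula with $s_p$ stages, so $U_{\mathrm{II},x}$ occurs $O(s_p^2)$ times, hence $s_p^2(8n_x+2)$ CNOTs. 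Adding the two contributions gives \eqref{eq:periodic_one_step_gate}.

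\textbf{Main obstacle.} The delicate step is the controlled-$U_P$ gate count: getting $8n_x+2$ depends on the exact decomposition of the doubly-controlled rotations, and the constant may move depending on the convention (Toffoli-based versus ancilla-free). I will fix the two-CNOT-plus-two-controlled-rotation identity, which exploits the fact that $R_x$ rotations commute, and state that any other choice changes only the constant. A second, smaller issue is to justify merging the adjacent $U_{\mathrm I}$ factors, so that the $p$-register cost scales as $s_p$ rather than $s_p^2$.
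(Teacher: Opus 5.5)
Your route is the paper's: cost each elementary block, then multiply by its number of occurrences in the nested product formula. $U_{\mathrm I}$ appears $O(s_p)$ times in the outer formula, and the inner blocks $U_{\mathrm{II},x}$, $U_{\mathrm{II},Y}$ appear $O(s_p^2)$ times. The step that is wrong as written is your compilation of the doubly-controlled $R_x$. With the two CNOTs acting on the controls $c_1,c_2$, each of your two singly-controlled rotations on $b$ is controlled by one of the parities $c_1$, $c_2$, $c_1\oplus c_2$. The required angle $\theta c_1c_2$ is not a combination of any two of these; the inputs $(1,0)$, $(0,1)$, $(1,1)$ force $\theta=0$ in every case. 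You need all three:
\[
\theta c_1c_2=\tfrac{\theta}{2}\bigl(c_1+c_2-(c_1\oplus c_2)\bigr).
\]
That is, use controlled $R_x(\theta/2)$ from $c_1$ and from $c_2$, plus a controlled $R_x(-\theta/2)$ from $c_1\oplus c_2$, which is computed and uncomputed by two CNOTs. The commutativity of the $R_x$ factors is what makes this valid. This costs $2+3\cdot 2=8$ CNOTs, exactly the per-gate cost behind the paper's $8n_x+2$; the paper takes $8$ from the cited multi-controlled bound $16j-40$ at $j=3$. Your count of $6$ would have given $6n_x+2$, not the $8n_x+2$ you then asserted. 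Since the lemma is an $\mathcal O(\cdot)$ statement, this only changes a constant, and your conclusion stands once the identity is corrected.

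Merging adjacent $U_{\mathrm I}$ factors is not what keeps the $p$-register cost at $O(s_p)$. $U_{\mathrm I}$ never enters the inner formula, so it occurs $O(s_p)$ times from the outer formula alone. Merging only improves the constant, so the second obstacle you flag does not need to be addressed.
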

\begin{proof}
By Fig. \ref{fig:SMF U1}, one realization of \(U_{\mathrm I}(\tau)\) costs $n_p$ controlled $R_x$ gates, while one
realization of \(U_{\mathrm{II}}(\tau)\) costs
one controlled $R_x$ gate and $n_x$ doubly controlled $R_x$ gates.  
According to \cite{sato2024hamiltonian,vale2023decomposition}, a multi-controlled $R_x$ gate with $(j-1)$ control qubits can be decomposed into at most $(16j-40)$ CNOT gates for $j>3$. Consequently, the implementations of \(U_{\mathrm I}(\tau)\) and \(U_{\mathrm{II}}(\tau)\) require at most
\(2n_p\) and \(8n_x+2\) CNOT gates, respectively.

According to Lemma \ref{lem:p_order_commutator_error}, the \(s_p\)-th-order Trotter product formula consists of \(p\) individual evolution stages. The outer formula contains
\(O(s_p)\) \(U_{\mathrm I}\)-\(U_{\mathrm{II}}\) stages.  Each occurrence
of \(U_{\mathrm{II}}\) is implemented by an inner formula with \(O(s_p)\)
\(U_{\mathrm{II},x}\)-\(U_{\mathrm{II},Y}\) stages.  Thus
\(U_{\mathrm I}\) is repeated \(O(s_p)\) times, whereas the inner blocks
are repeated \(O(s_p^2)\) times. Thus, the total CNOT gates number is \(2s_pn_p+s_p^2(8n_x+2)\).
\end{proof}
\begin{remark}\label{rem:dSMF gate}
In \(d\)-dimensional case,  the circuit consists of
controlled coefficient-matrix evolutions
\(e^{iA_\ell t}\) associated with the \(d\) spatial directions. Each direction
requires \(n_x\) such controlled exponentials, giving \(dn_x\) controlled
exponentials in total.  Since each \(e^{iA_\ell t}\) can be implemented using
\(O(d\log d)\) CNOT gates, which is derived in Appendix \ref{sec:complexity}, and the innermost directional blocks are repeated
\(O(s_p^3)\) times. Thus the approximated time evolution operator $V_{d,p}(\tau)$ obtained by a $p$th-order Trotter--Suzuki product formula can be implemented using $O\!\left(2s_pn_p+16s_p^3n_xd^2\log d\right)$
CNOT gates.
\end{remark}

\begin{lemma}\label{lemma:SMF gate sT}
In one-dimensional case, to achieve
Hamiltonian simulation error between $U_{SMF}(T)$ and $V_{\mathrm{SMF},p}(T)$ at most \(\varepsilon\), it is sufficient to choose time steps $r$ satisfying 
\begin{equation}
r
=
\mathcal{O}_p(
T^{1+1/p}\mathfrak C_{1,p}^{1/p}\varepsilon^{-1/p}
).
\label{eq:periodic_time_steps_p}
\end{equation}
The CNOT complexity of $V_{\mathrm{SMF},\,p}(T)$is
\begin{equation}
\mathcal{Q}^{T,(p)}_{\mathrm{CNOT}}
=
\mathcal{O}_p\!\left[
\bigl(s_p n_p+s_p^2(n_x+1)\bigr)
\frac{T^{1+1/p}\mathfrak C_{1,p}^{1/p}}
{\varepsilon^{1/p}}
\right].
\label{eq:periodic_global_gate_p}
\end{equation}
\end{lemma}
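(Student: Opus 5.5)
The argument combines the global error bound of Lemma~\ref{lemma:SMF error} with the per-step gate count of Lemma~\ref{lemma:SMF gate}. No new commutator estimates are needed; only the choice of $r$ and a multiplication remain.

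\textbf{Choosing $r$.} By Lemma~\ref{lemma:SMF error}, there is a constant $c_p$, depending only on the chosen $p$th-order formula, such that
\begin{equation*}
\bigl\|U_{SMF}(T)-V_{\mathrm{SMF},p}^r(T/r)\bigr\|\le c_p\,\frac{T^{p+1}}{r^p}\,\mathfrak C_{1,p}.
\end{equation*}
This bound comes from the telescoping argument, which uses unitarity of both $U_{SMF}(\tau)$ and $V_{\mathrm{SMF},p}(\tau)$, since every factor is an exponential of an anti-Hermitian generator. Requiring the right-hand side to be at most $\varepsilon$ is equivalent to
\begin{equation*}
r\ \ge\ \Bigl(c_p\,T^{p+1}\mathfrak C_{1,p}\,\varepsilon^{-1}\Bigr)^{1/p}
= c_p^{1/p}\,T^{1+1/p}\,\mathfrak C_{1,p}^{1/p}\,\varepsilon^{-1/p}.
\end{equation*}
Taking $r$ to be the ceiling of this quantity gives \eqref{eq:periodic_time_steps_p}, with the factor $c_p^{1/p}$ absorbed into $\mathcal O_p$.

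\textbf{Handling the ceiling.} The ceiling adds at most one step. This is harmless because $\mathfrak C_{1,p}\ge \kappa_{\mathrm{aux}}\ge 1/L_p>0$, and in the regime of interest $T\mathfrak C_{1,p}^{1/p}\gtrsim 1$ and $\varepsilon<1$. I would state this regime explicitly, since the ceiling is the only point where care is needed. Otherwise one can write $r=\mathcal O_p(1+T^{1+1/p}\mathfrak C_{1,p}^{1/p}\varepsilon^{-1/p})$ and drop the additive $1$ under that assumption.

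\textbf{Total gate count.} The circuit for $V_{\mathrm{SMF},p}(T)$ is $r$ repetitions of $V_{\mathrm{SMF},p}(\tau)$, preceded and followed by QFT and IQFT on the $x$- and $p$-registers. The QFT cost is $\mathcal O(n_x^2+n_p^2)$ and independent of $r$. It is either treated as state preparation/recovery overhead, as in Fig.~\ref{fig:SMF}, or dominated by the $r$-dependent term, so I would remark that it is excluded or subdominant. By Lemma~\ref{lemma:SMF gate}, each step costs $\mathcal O(2s_pn_p+s_p^2(8n_x+2))=\mathcal O(s_pn_p+s_p^2(n_x+1))$ CNOT gates. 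Multiplying by $r$ yields \eqref{eq:periodic_global_gate_p}.

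The main obstacle is bookkeeping rather than analysis: the constants must be absorbed consistently into $\mathcal O_p$, and the fixed QFT overhead and the ceiling in $r$ must be justified as not affecting the stated asymptotic form.
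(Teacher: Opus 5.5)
Your proposal is correct and follows the paper's argument: take the global bound of Lemma~\ref{lemma:SMF error}, solve $c_pT^{p+1}\mathfrak C_{1,p}/r^p\le\varepsilon$ for $r$, and multiply by the one-step CNOT cost of Lemma~\ref{lemma:SMF gate}. Your remarks on the ceiling and on the $r$-independent QFT overhead are sound refinements that the paper leaves implicit; the paper itself adds the QFT cost separately in the proof of Theorem~\ref{thm:1}.
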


\begin{proof}
To suppress the error in the simulation over the total time $T$ to $\varepsilon$, it is sufficient to divide the total time $T$ into $r$ intervals, where $r = T/\tau$, such that 
\begin{equation}
\left\|
U_{\mathrm{SMF}}(T)-V_{\mathrm{SMF},p}(T)
\right\|
=
\mathcal{O}_p\!\left(
\frac{T^{p+1}\mathfrak C_{1,p}}{N_t^p}
\right)\leq \varepsilon.
\end{equation}
Solving this inequality for \(r\) proves
\eqref{eq:periodic_time_steps_p} and multiplication by the one-step cost in
Lemma~\ref{lemma:SMF gate} proves \eqref{eq:periodic_global_gate_p}.
\end{proof}

\begin{remark}\label{rem:d_periodic_p_gate_sT}
In $d$-dimensional case, to achieve at most \(\varepsilon\) Hamiltonian-simulation
error, it is sufficient to choose $r
=
\mathcal O_p(
T^{1+1/p}\mathfrak C_{d,p}^{1/p}\varepsilon^{-1/p}
)$. 
The corresponding CNOT complexity over \([0,T]\) is $\mathcal Q_{d,\mathrm{CNOT}}^{T,(p)}
=
\mathcal O_p
(s_p^3d^2n_x
T^{1+1/p}\mathfrak C_{d,p}^{1/p}
\varepsilon^{-1/p})$.
\end{remark}

\begin{theorem}\label{thm:1}
Given the one-dimensional SMF equation  \eqref{eq:SMF_1d}, assume that the physical
parameters are independent of the target precision
\(\varepsilon\).  For the auxiliary variable, use the smooth
Schr\"odingerisation profile \eqref{eq:schro_review_smooth_profile}.  Define
\begin{equation}\label{eq:RT}
R_T:=
\max\!\left\{
1,\frac{\|\mathbf u(0)\|}{\|\mathbf u(T)\|}
\right\}.
\end{equation}
Then the \(p\)th-order Schr\"odingerisation circuit prepares
the normalized recovered state
\(\ket{\mathbf u(T)}=\mathbf u(T)/\|\mathbf u(T)\|\) with error
\(O(\varepsilon)\) using
\begin{equation}
\widetilde{\mathcal{O}}_p\!\left(
\frac{s_p^2 T^{1+1/p}R_T^{\,1+1/p}}{\varepsilon^{\,1/p}h}
\right)
\label{eq:periodic_total_complexity_p}
\end{equation}
CNOT gates with polylogarithmic factors suppressed by
\(\widetilde{\mathcal O}_p\).
\end{theorem}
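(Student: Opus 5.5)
The plan is to combine Lemma~\ref{lemma:SMF gate sT} with the smooth-initialization resolution bound \eqref{eq:schro_review_resolution} and a standard error-propagation argument for the recovery step. The goal is to express every parameter appearing in $\mathfrak C_{1,p}$ and in the one-step gate count in terms of $h=\Delta x$, $\varepsilon$, $T$ and $R_T$.

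\textbf{Step 1: error budget.} I will split the target error into three parts. The first, $\varepsilon_{\mathrm{aux}}$, is assigned to truncation, initialization and Fourier discretization in $p$. The second, $\varepsilon_H$, is the Hamiltonian-simulation error controlled by Lemma~\ref{lemma:SMF gate sT}. The third is the error incurred by recovery and normalization. The Fourier spectral discretization in $x$ is exact in the Fourier representation, or is treated as given, so the only spatial dependence enters through $\kappa_x=\|D_x\|=N_x/(2L_x)=O(1/h)$. Hence $\lambda_x=a\kappa_x=O(1/h)$.

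\textbf{Step 2: recovery and normalization.} The recovered vector is obtained from $\mathbf w$ restricted to $p\ge p^*$, and it is then normalized. An absolute error $\delta$ in the unnormalized enlarged state $e^{iHT}\mathbf w(0)$, whose norm is comparable to $\|\mathbf u(0)\|$ up to constants depending on $g$, produces an error of order $\delta\,\|\mathbf u(0)\|/\|\mathbf u(T)\|$ in $\ket{\mathbf u(T)}$. It is therefore sufficient to take $\varepsilon_H,\varepsilon_{\mathrm{aux}}=\Theta(\varepsilon/R_T)$. This step also accounts for the success probability of the projection onto $p\ge p^*$. Amplitude amplification multiplies the cost by $O(R_T)$. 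In fact the $R_T$ factor appears naturally in the repetition count, which is how the exponent $1+1/p$ on $R_T$ arises: $R_T\cdot R_T^{1/p}$.

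\textbf{Step 3: parameter substitution.} By \eqref{eq:schro_review_resolution}, the choice $\varepsilon_{\mathrm{aux}}=\varepsilon/R_T$ gives $\kappa_{\mathrm{aux}}=O(\log(R_T/\varepsilon))$. The domain length $L_p$ must cover $p^*\ge\lambda^+_{\max}(H_1)T=T/2$ plus a decay tail of length $O(\log(1/\varepsilon_{\mathrm{aux}}))$. It follows that $N_p=O(L_p\kappa_{\mathrm{aux}})$ is polylogarithmic, so $n_p=O(\log\log)$ and contributes only polylog factors. Substituting into \eqref{eq:periodic_commutator_coefficient} yields
\[
\mathfrak C_{1,p}=O\!\left(\kappa_{\mathrm{aux}}(1+\lambda_x)^{p}+\lambda_x^{p}\right)=\widetilde O_p\!\left(h^{-p}\right),
\]
and therefore $\mathfrak C_{1,p}^{1/p}=\widetilde O_p(1/h)$. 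Since $n_x=\log_2 N_x=O(\log(1/h))$, the gate count \eqref{eq:periodic_global_gate_p} with $\varepsilon$ replaced by $\varepsilon/R_T$ becomes
\[
\widetilde O_p\!\left(s_p^2\,T^{1+1/p}R_T^{1/p}\varepsilon^{-1/p}h^{-1}\right).
\]
Multiplying by the $O(R_T)$ amplification or repetition factor from Step~2 gives \eqref{eq:periodic_total_complexity_p}. The QFT costs on the $x$- and $p$-registers are $O(n_x^2+n_p^2)$, which is polylogarithmic and is incurred once, so it is absorbed.

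\textbf{Main obstacle.} I expect Step~2 to be the hardest part: making the recovery and normalization argument rigorous. This requires showing that the smooth profile $g_{\varepsilon_{\mathrm{aux}}}$, together with the truncation of $p$, yields $\mathbf u(T)$ up to relative error $\varepsilon$ after projection, with post-selection probability bounded below by $\Omega(\|\mathbf u(T)\|^2/\|\mathbf u(0)\|^2)$ times constants. I would first try to invoke the recovery estimates of \cite{matrixqueries} directly, and then track how the ratio $\|\mathbf u(0)\|/\|\mathbf u(T)\|$ converts absolute errors into relative ones. The remaining steps are bookkeeping built on Lemmas~\ref{lemma:SMF error}--\ref{lemma:SMF gate sT}.
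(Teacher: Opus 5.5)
Your proposal is correct and follows essentially the same route as the paper. The main steps match: $\lambda_x=\Theta(1/h)$ and $\kappa_{\mathrm{aux}}=O(\log(1/\varepsilon))$ give $\mathfrak C_{1,p}^{1/p}=\widetilde{\mathcal O}_p(1/h)$ in Lemma~\ref{lemma:SMF gate sT}; recovery and normalization force the simulation tolerance $\Theta(\varepsilon/R_T)$, which yields the $R_T^{1/p}$ factor; and amplitude amplification over the $\Theta(R_T^{-2})$ success probability supplies the remaining factor $R_T$. You additionally rescale $\varepsilon_{\mathrm{aux}}$ by $R_T$ and track $L_p\gtrsim T$ in $N_p$, which the paper omits, but these only change logarithmic factors.
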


\begin{proof}
The total error consists of the spatial Fourier discretization error, the auxiliary-variable approximation error, and the unitary-evolution error after recovery.

For the spatial discretization, we continue to use \(h\) to denote the spatial resolution. The largest retained wavenumber satisfies $\kappa_x=\|D_x\|=\frac{N_x}{2L_x}
=\frac{\pi}{h}$.
Consequently, 
\begin{equation}\label{eq:SMFx}
    \lambda_x=a\kappa_x
=\frac{\pi a}{h}
=\Theta\left(\frac{a}{h}\right).
\end{equation}

We next estimate the largest Fourier mode in the auxiliary variable. 

Let $P_{N_p}$ denote the Fourier spectral projection in the
auxiliary variable. The standard Fourier approximation
estimate gives
\begin{equation}
\label{eq:fourier_projection_error_SMF}
\left\|
g_{\varepsilon_{\mathrm{aux}}}
-
P_{N_p}g_{\varepsilon_{\mathrm{aux}}}
\right\|_{L^{2}((-L_p,L_p))}
\lesssim
\left(
\frac{\pi}{\kappa_{\mathrm{aux}}}
\right)^{q}
\left\|
g_{\varepsilon_{\mathrm{aux}}}^{(q)}
\right\|_{L^{2}((-L_p,R_p))},
\end{equation}
where $\kappa_{\mathrm{aux}}$ is the largest Fourier mode.
Setting the Fourier projection error to \(O(\varepsilon_{\mathrm{aux}})\) yields
\begin{equation}
\label{eq:mumax_general_SMF}
\kappa_{\mathrm{aux}}
=
O\left(
\varepsilon_{\mathrm{aux}}^{-1/q}
\left\|
g_{\varepsilon_{\mathrm{aux}}}^{(q)}
\right\|_{L^{2}}^{1/q}
\right).
\end{equation}
By the regularity relations \eqref{eq:schro_review_derivative_bound},
Eqs.~\eqref{eq:mumax_general_SMF} imply
\begin{equation}
\label{eq:mumax_estimation_SMF}
\kappa_{\mathrm{aux}}
=
O(
\log (1/{\varepsilon_{\mathrm{aux}}})
)=O(
\log (1/{\varepsilon})
).
\end{equation}
The profile permits that both the truncation and recovery errors are
\(O(\varepsilon_{\mathrm{aux}})\)~\cite{matrixqueries}. In the practical cases considered here, \(d=1,2,3\), so the dominant scales satisfy $\lambda_x > k_p > d$.
Therefore, Eqs.~\eqref{eq:SMFx} and
\eqref{eq:mumax_estimation_SMF}, together with
\eqref{eq:periodic_commutator_coefficient}, imply
$\mathfrak C_{1,p}^{1/p}
=
\widetilde O_p(\frac{a}{h})$.
Moreover, \(n_x=\lceil\log_2N_x\rceil=O(\log(1/h))\), while
\(n_p\) is polylogarithmic in \(1/\varepsilon\).  Hence
\begin{equation}
s_p n_p+s_p^2(n_x+1)
=\widetilde{\mathcal O}_p(s_p^2).
\end{equation}
As shown in Fig.~\ref{fig:SMF}, the QFT on the auxiliary registers are applied before the Hamiltonian simulation and
the inverse QFT applied after it. Their combined CNOT
cost is \(O(n_p^2)\)
\cite{nielsen2010quantum,lin2022lecture} which also can be absorbed into \(\widetilde{\mathcal O}_p\).
Consequently, Lemma~\ref{lemma:SMF gate sT} gives the CNOT cost
of Hamiltonian Simulation with error at most \(\varepsilon\):
\begin{equation}
\mathcal Q_{\mathrm{run}}
=
\widetilde{\mathcal O}_p\!\left(
\frac{s_p^2T^{1+1/p}}
{\varepsilon^{1/p}h}
\right).
\label{eq:periodic_single_run_complexity}
\end{equation}

We next estimate the error introduced by the approximate unitary
evolution after the recovery projection and normalization. 
Let
\begin{equation}
    \ket{\widehat{\mathbf{w}}(T)}
=
U_{\mathrm{SMF}}(T)\ket{\widehat{\mathbf{w}}(0)},
\qquad
\ket{\widehat{\mathbf{w}}_{D}(T)}
=
V_{\mathrm{SMF}}^{r}(\tau)
\ket{\widehat{\mathbf{w}}(0)}
\end{equation}
be the states generated by the exact and approximate unitary
evolutions, respectively. 
And let \(\mathbf w(T)\) and \(\mathbf w_p(T)\) be the corresponding
vectors after the inverse QFT on the auxiliary register. Denote by
\(M_{\mathrm{rec}}\) the projector onto the recovery interval 
\begin{equation}
\label{eq:projected_normalized_error}
\begin{aligned}
&
\left\|
\ket{M_{\mathrm{rec}}\mathbf{w}_{D}(T)}
-
\ket{M_{\mathrm{rec}}\mathbf{w}(T)}
\right\|
=
O\left(
\frac{
\left\|
M_{\mathrm{rec}}
\bigl(
\mathbf{w}_{D}(T)-\mathbf{w}(T)
\bigr)
\right\|
}{
\left\|
M_{\mathrm{rec}}\mathbf{w}(T)
\right\|
}
\right)
\\
&=
O\left(
\frac{
\|\mathbf{w}(0)\|
}{
\|\mathbf{u}(T)\|
\|\boldsymbol g_{\mathrm{rec}}\|
}
\left\|
U_{\mathrm{SMF}}(T)
-
V_{\mathrm{SMF}}^{r}(\tau)
\right\|
\right)
=
O\left(
\frac{\|\mathbf{u}(0)\|}
{\|\mathbf{u}(T)\|}
\varepsilon'
\right),
\end{aligned}
\end{equation}

For the admissible smooth initialization, Schr\"odingerisation recovery
gives~\cite{hu2024quantum,bs,matrixqueries}
\begin{equation}
\|\mathbf w(0)\|
=
\|\mathbf u(0)\|\,\|\boldsymbol g\|,
\qquad
\|M_{\mathrm{rec}}\mathbf w(T)\|
=
\Theta\!\left(
\|\mathbf u(T)\|\,\|\boldsymbol g_{\mathrm{rec}}\|
\right),
\qquad
\frac{\|\boldsymbol g\|}
{\|\boldsymbol g_{\mathrm{rec}}\|}
=O(1),
\end{equation}
where \(\boldsymbol g\) is the discretized auxiliary profile and
\(\boldsymbol g_{\mathrm{rec}}\) is its restriction to the recovery
interval.  
Therefore, to ensure an \(O(\varepsilon)\) error after recovery, we choose $\varepsilon'=\Theta\!\left(\frac{\varepsilon}{R_T}\right)$.
By Eqs. \eqref{eq:periodic_single_run_complexity}, this choice
introduces an additional factor $R_T^{1/p}$
into the CNOT complexity of a single circuit execution. 

Note that the probability of getting the desired state is $\Theta(
\min\left\{
1,
\frac{\|\mathbf u(T)\|^2}
{\|\mathbf u(0)\|^2}
\right\}
)
=
\Theta(R_T^{-2}),$, by using the amplitude amplification, the repeated times for the measurements is $\Theta(R_T)$.
Substituting $\varepsilon'$ into
\eqref{eq:periodic_single_run_complexity} and ignoring the logarithmic term, the proof is completed.
\end{proof}

\begin{remark}
For $\boldsymbol{u}(t)\in H^\rho$, the Fourier projection error satisfies
\[
\|\boldsymbol{u}(t)-P_{N_x}\boldsymbol{u}(t)\|_{L^2}
=O(h^\rho).
\]
Balancing the spatial discretization error with the target accuracy by
choosing $h=\Theta(\varepsilon^{1/\rho})$
gives an overall error of $O(\varepsilon)$. Substituting this choice into
the complexity estimate yields
\begin{equation}
\widetilde{\mathcal{O}}_p\!\left(
\frac{s_p^2 T^{1+1/p}
R_T^{\,1+1/p}}{\varepsilon^{\,1/p+1/\rho}}
\right).
\end{equation}
\end{remark}

\begin{theorem}\label{thm:2}
In $d$-dimensional case, \(d\geq2\), and suppose the assumption same as Theorem~\ref{thm:1}. Use the smooth auxiliary profile
\eqref{eq:schro_review_smooth_profile} and define $R_T$ as Eqs. \eqref{eq:RT}. By  \(p\)th-order Trotter--Suzuki approximation, the Schr\"odingerisation method can prepare the $d$-dimensional SMF solution state with error \(O(\varepsilon)\) using
\begin{equation}
\widetilde{\mathcal O}_p\!\left[
\frac{s_p^3d^{3+1/p}\log(2d)\,T^{1+1/p}R_T^{1+1/p}}
{\varepsilon^{1/p}h^{1+1/p}}
\right].    
\label{eq:d_periodic_total_gate_simplified}
\end{equation}
CNOT gates.
\end{theorem}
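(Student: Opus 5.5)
The proof follows the template of Theorem~\ref{thm:1}, with Lemma~\ref{lemma:dSMF error} and Remarks~\ref{rem:dSMF gate} and~\ref{rem:d_periodic_p_gate_sT} replacing their one-dimensional counterparts. The plan has four steps: size the coefficient $\mathfrak C_{d,p}$ in terms of $h$, $d$ and $\varepsilon$; insert it into the global CNOT count; account for recovery and amplitude amplification through $R_T$; and absorb all logarithmic and QFT costs into $\widetilde{\mathcal O}_p$.

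\textbf{Step 1: size the scales.}
\begin{itemize}
\item On a grid of resolution $h$ in each direction, $\kappa_{\mathrm{sp}}=\pi/h$.
\item The matrices $A_\ell$ are fixed by the Lamé parameters, so $\nu=O(1)$ independently of $\varepsilon$, and hence $\lambda_{\mathrm{sp}}=\Theta(1/h)$.
\item The smooth profile \eqref{eq:schro_review_smooth_profile}, via the argument of \eqref{eq:fourier_projection_error_SMF}--\eqref{eq:mumax_estimation_SMF}, gives $\kappa_{\mathrm{aux}}=O(\log(1/\varepsilon))$.
\end{itemize}

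\textbf{Step 2: bound $\mathfrak C_{d,p}$.} In \eqref{eq:d_periodic_commutator_coefficient}, the first two terms are $\widetilde O_p((d/h)^p)$, since $\kappa_{\mathrm{aux}}$ contributes only polylogarithmic factors. The third term, $d^p(d-1)\lambda_{\mathrm{sp}}^{p+1}$, dominates for small $h$; it is the genuinely new inter-directional splitting error and carries one extra factor $1/h$. Hence
\[
\mathfrak C_{d,p}^{1/p}=\widetilde O_p\!\left(d^{1+1/p}h^{-1-1/p}\right),
\]
which is where the exponent $h^{-1-1/p}$ in \eqref{eq:d_periodic_total_gate_simplified} comes from.

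\textbf{Step 3: assemble the single-run cost.} Multiply by the per-step cost from Remark~\ref{rem:dSMF gate}, $O(s_pn_p+s_p^3n_xd^2\log d)$. Here $n_x=O(\log(1/h))$ and $n_p=\mathrm{polylog}(1/\varepsilon)$. The QFTs on the $d$ spatial registers and on the auxiliary register cost $O(dn_x^2+n_p^2)$ CNOTs and are absorbed into the polylogarithmic factors. Remark~\ref{rem:d_periodic_p_gate_sT} then gives
\[
\widetilde{\mathcal O}_p\!\left(\frac{s_p^3\,d^{2}\log(2d)\cdot d^{1+1/p}\,T^{1+1/p}}{\varepsilon'^{1/p}\,h^{1+1/p}}\right)=\widetilde{\mathcal O}_p\!\left(\frac{s_p^3\,d^{3+1/p}\log(2d)\,T^{1+1/p}}{\varepsilon'^{1/p}\,h^{1+1/p}}\right).
\]
Writing $\log(2d)$ instead of $\log d$ keeps the bound nontrivial at $d=2$.

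\textbf{Step 4: recovery and amplification.} The recovery analysis \eqref{eq:projected_normalized_error} does not depend on dimension. Choosing $\varepsilon'=\Theta(\varepsilon/R_T)$ inserts a factor $R_T^{1/p}$. Amplitude amplification, needed because the success probability is $\Theta(R_T^{-2})$, contributes a further factor $R_T$. Together these give $R_T^{1+1/p}$ and hence \eqref{eq:d_periodic_total_gate_simplified}.

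\textbf{Main obstacle.} The main difficulty is Step 2. One has to check that the inter-directional term really dominates the others, rather than the Schrödingerisation terms involving $\kappa_{\mathrm{aux}}$. One also has to track the powers of $d$ consistently, since $(d^p(d-1))^{1/p}\le d^{1+1/p}$ and this must be reconciled with the bound $\|G_{\mathrm{sp}}\|\le d\lambda_{\mathrm{sp}}$ used in the first two terms. I would first take $h$ small enough that $d\lambda_{\mathrm{sp}}\ge \kappa_{\mathrm{aux}}+1$, which is possible since $\kappa_{\mathrm{aux}}$ grows only logarithmically. Under that condition all three terms are bounded by $O_p(d^{p+1}\lambda_{\mathrm{sp}}^{p+1})$, and the claimed $p$-th root follows directly.
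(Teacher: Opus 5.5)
Your proposal is correct and follows essentially the same route as the paper. In both, \(\kappa_{\mathrm{sp}}=O(h^{-1})\) and \(\kappa_{\mathrm{aux}}=O(\log(1/\varepsilon))\) give \(\mathfrak C_{d,p}^{1/p}=\widetilde{\mathcal O}_p(d^{1+1/p}h^{-1-1/p})\); this is multiplied by the \(O(s_pn_p+s_p^3n_xd^2\log d)\) per-step cost, and the choice \(\varepsilon'=\Theta(\varepsilon/R_T)\) together with amplitude amplification supplies \(R_T^{1+1/p}\). Your explicit condition \(d\lambda_{\mathrm{sp}}\geq\kappa_{\mathrm{aux}}+1\) simply makes rigorous the dominance of the inter-directional term, which the paper asserts through the scale ordering assumed in the proof of Theorem~\ref{thm:1}.
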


\begin{proof}
As in the proof of Theorem~\ref{thm:1}, the spatial discretization and smooth auxiliary profile imply \(\kappa_{\mathrm{sp}}=O(h^{-1})\) and \(\kappa_{\mathrm{aux}}=O(\log(1/\varepsilon))\). 
Hence $\mathfrak C_{d,p}^{1/p} = \widetilde O_p\!\left( d^{1+1/p}h^{-(1+1/p)} \right)$, 
while \(n_x\) and \(n_p\) are polylogarithmic in \(1/\varepsilon\). Ignoring the logarithmic factors \(n_x\) and \(n_p\), Hamiltonian simulation with error \(\varepsilon'\) costs
\begin{equation}
\widetilde{\mathcal O}_p\!\left[
s_p^3d^{3+1/p}\log(2d)\,
T^{1+1/p}
h^{-(1+1/p)}
(\varepsilon')^{-1/p}
\right].    
\end{equation}
The output-state error and recovery measurement are exactly the same as in
the proof of Theorem~\ref{thm:1}. Hence, the overall recovery overhead is $R_T^{1+\frac{1}{p}}$.
\end{proof}

\begin{remark}
\label{rem:classical_complexity_spectral}
For the classical simulation of the $d$-dimensional velocity--stress system with periodic boundary conditions, we consider the same Fourier spectral discretization in space and a $p$th-order partitioned splitting method in time. Let $q_d=\frac{d^2+3d}{2}$
and let $N$ be the number of spatial grid points in each direction. 
One application of the semi-discrete operator requires $\mathcal{O}\!\left(q_d s_d N^d\right)$ arithmetic operations where $s_d$ is the sparsity. Hence, a $p$th-order splitting formula with $s_p$ stages costs
$\mathcal{O}(q_d s_d s_p N^d)$ per time step.

Since the coefficient matrix norm of semi-ODE satisfy
\(\mathcal O(d/h)\), the global temporal
error is $\mathcal O_p(
T^{p+1}d^{p+1}{N_t^{-p}}h^{-p-1})$.
Combining the accuracy requirement with the conservative CFL condition
\(\Delta t=\mathcal O(h/d)\), we choose
\begin{equation}
    N_t=\mathcal O_p\!\left(
\max\left\{
\frac{d T}{h},
\frac{d^{1+1/p}T^{1+1/p}}
{\varepsilon^{1/p}h^{1+1/p}}
\right\}
\right).
\end{equation}
Combining the initial FFT, the classical arithmetic complexity is
\begin{equation}
\mathcal W_{\mathrm{cl}}
=N^d\log N^d+\widetilde{\mathcal O}_p\!\left[
s_pdq_d 
\max\left\{
\frac{d T}{h^{1+d}},
\frac{d^{1+1/p}T^{1+1/p}}
{\varepsilon^{1/p}h^{1+1/p+d}}
\right\}
\right].
\label{eq:d_periodic_partitioned_classical_complexity}
\end{equation}
\end{remark}

\begin{remark}
By Theorems~\ref{thm:1} and~\ref{thm:2}, together with
Remark~\ref{rem:classical_complexity_spectral}, when only considering the quantum state-evolution as a subroutine, for fixed \(d\), the proposed algorithm provides a polynomial
advantage over the classical method with respect to the spatial resolution. 

Moreover, the gate complexity in Theorem~\ref{thm:2} matches that of
Ref.~\cite{Fourierquantum}. Whereas the latter relies on the unitary dynamics
arising from a periodic spectral discretization of the wave equation, the
present Schr\"odingerisation-based construction accommodates source terms and
more general boundary conditions without changing the asymptotic complexity
order.
\end{remark}

\section{Physical boundary conditions}\label{sec:Physical}

In this section, we derive the semi-discrete ODE systems obtained from the staggered-grid finite-difference discretization of the velocity--stress equations under three types of physical boundary conditions. Without loss of generality, we focus on the SMF system \eqref{eq:rescaled} in the subsequent development.

We first consider the one-dimensional velocity--stress formulation,
\begin{equation}
\begin{cases}
\rho(x) v_t(x,t) = \sigma_x(x,t),\quad v(x,0) = v_0(x),\\
\sigma_t(x,t) = \mu(x) v_x(x,t),\quad \sigma(x,0) = \sigma_0(x).
\end{cases}
\label{eq:1d_vs}
\end{equation}

\subsection{The Dirichlet boundary condition}

Taking the domain $\Omega = (a,b)$, we impose the time--independent velocity boundary conditions,
\begin{equation}
v(x=a,t) = g_a,\quad v(x=b,t) = g_b.
\label{eq:velocity_bc}
\end{equation}

We now discretize \eqref{eq:1d_vs} on a staggered grid. The domain is uniformly divied into \(N\) intervals of length \(\Delta x = (b-a)/N\) and \(x_i = a+i\Delta x\) with $i=0,1,\dots,N$. The velocity variables
\(v_i(t)= v(x_i,t)\) are located at integer grid points, while the stress variables
\(\sigma_{i+\frac12}(t) = \sigma(x_{i}+\frac{\Delta x}{2},t)\) are located at half grid points. The staggered finite difference scheme is
given by
\begin{equation}\label{eq:sigma_discrete}
\frac{d}{dt}\sigma_{i+\frac12}(t)
=
a\,\frac{v_{i+1}(t)-v_i(t)}{\Delta x},
\qquad
i=0,1,\ldots,N-1.
\end{equation}
\begin{equation}\label{eq:v_discrete}
\frac{d}{dt}v_i(t)
=
a\,\frac{\sigma_{i+\frac12}(t)-\sigma_{i-\frac12}(t)}{\Delta x},
\qquad
i=1,2,\ldots,N-1,
\end{equation}
Let \(\tilde{\boldsymbol{v}}(t) =
[v_1(t),v_2(t),\ldots,v_{N-1}(t)]^T
\),
\(
\tilde{\boldsymbol{\sigma}}(t)=
[\sigma_{\frac12}(t),\sigma_{\frac32}(t),\ldots,\sigma_{N-\frac12}(t)]^T
\). 
Then \eqref{eq:sigma_discrete}  and \eqref{eq:v_discrete} can be written as
\begin{equation}
\frac{d}{dt}\tilde{\boldsymbol{\sigma}}(t)
=
\frac{a}{\Delta x} D \tilde{\boldsymbol{v}}(t)
+
\frac{a}{\Delta x}\boldsymbol{b}_D,
\label{eq:sigma_matrix}
\end{equation}
\begin{equation}
\frac{d}{dt}\tilde{\boldsymbol{v}}(t)
=
-\frac{a}{\Delta x} D^T\tilde{\boldsymbol{\sigma}}(t),
\label{eq:v_matrix}
\end{equation}
where \(\boldsymbol{b}_D = (-g_a, \dots,g_b)\) contains the contribution from the boundary condition. \(D\) is the difference matrix: 
\[
D=
\begin{pmatrix}
1 &        &        &        \\
-1& 1      &        &        \\
  & -1     & \ddots &        \\
  &        & \ddots & 1      \\
  &        &        & -1
\end{pmatrix}
\in\mathbb{R}^{N\times (N-1)}
\]
To encode the solution into quantum states, we augment the vectors
\(\tilde{\boldsymbol{v}}(t)\) by appending one zero
entry, so that the new variables can be represented as
\[
\boldsymbol{v}(t)
=
\sum_{j=0}^{N-2}v_j(t)\ket{j}+0\ket{N-1},
\qquad
\boldsymbol{\sigma}(t)
=
\sum_{j=0}^{N-1}\sigma_{j+\frac12}(t)\ket{j}.
\]
Here \(N=2^n\) is chosen to match the dimension of the computational basis of an \(n\)-qubit register.
Accordingly, the coefficient matrices are augmented in a compatible manner:
\[A_{\sigma} = [D,\boldsymbol{0}_N],\quad A_{v} = [-D^T;\boldsymbol{0}_N^T].\]
We obtain the new semi-ODEs:
\begin{equation}\label{eq:Dri-discrete_ode}
\frac{d}{dt}
\begin{pmatrix}
\boldsymbol{\sigma}\\
\boldsymbol{v}
\end{pmatrix}
=
\frac{a}{h}
\begin{pmatrix}
0 & A_{\sigma}\\
A_{v} & 0
\end{pmatrix}
\begin{pmatrix}
\boldsymbol{\sigma}\\
\boldsymbol{v}
\end{pmatrix}
+\frac{a}{\Delta x}
\begin{pmatrix}
\boldsymbol{b}_D\\
\boldsymbol{0}
\end{pmatrix}
\triangleq 
\frac{a}{h}(K_D\boldsymbol{u}'+\boldsymbol{b}').
\end{equation}
where 
\[
K_D =
\begin{pmatrix}
0 & A_{\sigma}\\
A_{v} & 0
\end{pmatrix},\quad 
\boldsymbol{u}' = (\boldsymbol{\sigma},\boldsymbol{v})^T,\quad 
\boldsymbol{b}' = (\boldsymbol{b}_D,\boldsymbol{0})^T
\]
Using the shift operators in Appendix \ref{eq:shift_operators} to describe the coefficient matrices \(A_v\) and \(A_{\sigma}\):

\begin{equation}\label{eq:ABP}
    A_{\sigma} = P-S_n^+,\quad A_{v} = -P+S_n^-,\quad  P = I^{\otimes n}-\sigma_{11}^{\otimes n}.
\end{equation}
By homogeneousing, Eqs. \eqref{eq:Dri-discrete_ode} become
\begin{equation}
\frac{d\boldsymbol{u}}{dt}
=
\mathcal{A}_D\boldsymbol{u},
\quad
\boldsymbol{u}
=
\begin{pmatrix}
\boldsymbol{u}'\\
\boldsymbol{b}'
\end{pmatrix},
\quad
\mathcal{A}_D
=\frac{a}{\Delta x}
\begin{pmatrix}
K_D & I\\
0 & 0
\end{pmatrix}.
\label{eq:Dri-semi_ode}
\end{equation}
\subsubsection{Schr\"odingerisation}
The corresponding Hamiltonian after Schr\"odingerisation has the form
\begin{equation}\label{eq:HDri}
H_{Dri}
=
\frac{a}{\Delta x}(-H_1\otimes D_p
+
H_2\otimes I),
\end{equation}
where 
\begin{equation}
H_1
=
\frac{1}{2}X\otimes I, \quad 
H_2
=
-\mathrm{i}\,\sigma_{00}\otimes K_D
+
\frac{1}{2}Y\otimes I.
\label{eq:h2}
\end{equation}

\subsubsection{Interleaved encoding and quantum circuits}
Following the previous notation for the elementary circuits \(U_{\mathrm{I}}(\tau)\), $U_{\mathrm{II}}(\tau)$ and \(U_{\mathrm{II},Y}(\tau)\), we introduce
\[
U_{\mathrm{II},x}(\tau) = \exp(\tau \sigma_{00}\otimes K_D),\quad
U_D(\tau) = \exp(\tau K_D).
\]
For the implementation of \( U_{\mathrm{II},x}(\tau)\), it is equivalent to construct the explicit circuits of \(\exp (\tau K_D)\).
Let \(b\) denote the block qubit distinguishing the stress and
velocity variables, and let \(x\) denote the \(n\)-qubit spatial
register. We use the interleaved register ordering
\begin{equation}
\ket{x}\ket{b},
\end{equation}
where the block qubit is the least significant qubit. This change of
encoding is an exact permutation of the computational basis and
introduces no approximation error.

We employ the technique of interleaved encoding to handle the matrix $K_D$:
\begin{equation}
K_D^{b,x} := K_D
=
\sigma_{01}
\otimes
A_{\sigma}
+
\sigma_{10}
\otimes
A_{v}.
\end{equation}
Exchange the auxiliary qubit register and the $x$ register, and the matrix becomes:
\begin{equation}
\label{eq:K_shift_representation}
K_D^{x,b}
=
S_{n+1}^-
-
S_{n+1}^+
-
\sigma_{11}^{\otimes n}
\otimes
\left(
\sigma_{01}-\sigma_{10}
\right).
\end{equation}
Splitting $K_D^{x,b}$ into two terms, the execution of \(\exp (\tau K_D^{x,b})\) can be derived as follows:
\begin{equation}\label{eq:split3}
     \exp(\tau K_{x,b}) \approx\  \exp(\tau H_0)\exp(\tau H_{end}),
\end{equation}
where
\begin{equation}
    H_0 = S_{n+1}^-
-
S_{n+1}^+,\quad
H_{end} = -
\sigma_{11}^{\otimes n}
\otimes
\left(
\sigma_{01}-\sigma_{10}
\right).
\end{equation}
By Lemma 2 in \cite{hu2024quantum}, the circuit construction of \(\exp(\tau H_0)\) can be implemented by
\begin{equation}\label{eq:split4}
    \exp(\tau H_0)\approx\prod_{j=1}^{n+1}\exp(\tau (s_j^--s_j^+))
    =\prod_{j=1}^{n+1}
I^{\otimes(n_x-j)}
\otimes
W_j\!\left(\gamma\tau,\lambda\right)
\triangleq
V_1^{(4.2)}(\phi_1),
\end{equation}
where \(\gamma = \frac{a}{2h}\) and \(\lambda = \frac{\pi}{2}\).
And \( \exp(\tau H_{end})\) can be implemented by the controlled $R_y$ gate
\begin{equation}
    \exp(\tau H_{end})=\exp(-i\tau \sigma_{11}^{\otimes n}\otimes Y)
    = \sigma_{11}^{\otimes n}\otimes\exp(-i\tau Y)+ \sigma_{00}^{\otimes n}\otimes I = \sigma_{11}^{\otimes n}\otimes R_y( 2\tau).
\end{equation}

\subsubsection{Quantum circuit construction}
The interleaved encoding only changes the register ordering from
$(\text{ancilla},\text{block},x)$ to $(\text{ancilla},x,\text{block})$. It does not affect the construction of \(U_{\mathrm I}(\tau)\) and \(U_{\mathrm{II},Y}(\tau)\) for the identity on the block and \(x\)-registers. The construction of \(U_{\mathrm{II},x}\) is obtained by replacing \(U_P\) with \(U_D\), where \(U_D\) is shown in Figs.\ref{fig:W}-\ref{fig:KD}.
\begin{figure}[h!] 
    \centerline{·
        \Qcircuit @C=1em @R=.7em
        {   \lstick{q_1}& \targ & \qw & \qw & \qw &\qw&\qw&\qw & \ctrl{4} &\qw&\qw &\qw &\qw &\qw &\targ &\qw \\
            \lstick{q_2}& \qw & \targ & \qw & \qw &\qw&\qw&\qw & \ctrl{3} &\qw &\qw &\qw &\qw &\targ &\qw&\qw \\
            \lstick{\vdots}&  &   & \cdots &  && &  & & & & &\cdots&\\
            \lstick{q_{j-1}}& \qw & \qw&\qw  & \targ & \qw &\qw&\qw & \ctrl{1} &\qw &\qw &\targ &\qw &\qw&\qw&\qw\\
            \lstick{q_j}& \ctrl{-4} & \ctrl{-3} &\qw& \ctrl{-1} & \qw & \gate{P(\lambda)} &\gate{H}  &\gate{RZ(-2\gamma \tau)} &\gate{H} & \gate{P(-\lambda)}  & \ctrl{-1}&\qw & \ctrl{-3} & \ctrl{-4} &\qw \\
            \lstick{\vdots}\\
            \lstick{q_{n_x}}& \qw&\qw & \qw & \qw & \qw &\qw&\qw &\qw &\qw &\qw &\qw &\qw &\qw&\qw &\qw}
    }
    \caption{Quantum circuit for $W_{j}(\gamma \tau,\lambda)$}
    \label{fig:W}
\end{figure}
\begin{figure}[h!] 
    \centerline{
        \Qcircuit @C=1em @R=.7em
        {
            \lstick{q_1}
            & \qw 
            & \gate{W_1} 
            & \multigate{1}{W_2}
            & \qw 
            & \multigate{3}{W_{n_x}}
            & \qw 
            & \multigate{4}{W_{n_x+1}} 
            & \qw
            & \ctrl{4}
            & \qw
            & \qw\\
            \lstick{q_2}
            & \qw 
            & \qw
            & \ghost{W_2}
            & \qw 
            & \ghost{W_{n_x}}
            & \qw 
            & \ghost{W_{n_x+1}} 
            & \qw
            & \ctrl{3}
            & \qw
            & \qw\\
            \lstick{\vdots}
            &
            &
            &
            &\cdots
            &
            &\cdots
            &
            &
            &\\
            \lstick{q_{n_x}}
            & \qw 
            & \qw
            & \qw
            & \qw 
            & \ghost{W_{n_x}}
            & \qw 
            & \ghost{W_{n_x+1}} 
            & \qw
            & \ctrl{1}
            & \qw
            & \qw\\
            \lstick{ancilla 2}
            & \qw 
            & \qw
            & \qw
            & \qw 
            & \qw
            & \qw 
            & \ghost{W_{n_x+1}} 
            & \qw
            & \gate{RY(-2\gamma \tau)} 
            & \qw
            & \qw\\
        } 
    }
    \caption{Quantum circuit for $U_D(\tau) = \exp(\tau K_D)$}
    \label{fig:KD}
\end{figure}

\subsection{The Neumann Boundary Condition}\label{subsec:neumann_bc}

We next consider the Neumann boundary condition. In the velocity--stress
formulation, it prescribes the stress at the two endpoints. 
\begin{equation}
\sigma(x_L,t)=g_L,
\qquad
\sigma(x_R,t)=g_R.
\label{eq:neumann_stress_bc}
\end{equation}

In contrast to the Dirichlet case, the stress variables $\sigma(x_i,t)$ are placed at the
integer grid points, while the velocity variables $v(x_{i+\frac12},t)$ are located at the staggered grid points. The staggered grid discretization is given by
\begin{equation}
\frac{\mathrm d}{\mathrm dt}\sigma_i(t)
=
a\frac{v_{i+\frac12}(t)-v_{i-\frac12}(t)}{h},
\qquad
i=1,\ldots,N-1,
\label{eq:neumann_sigma_discrete}
\end{equation}
\begin{equation}
\frac{\mathrm d}{\mathrm dt}v_{i+\frac12}(t)
=
a\frac{\sigma_{i+1}(t)-\sigma_i(t)}{h},
\qquad
i=0,\ldots,N-1.
\label{eq:neumann_v_discrete}
\end{equation}

Following the same procedure as in the Dirichlet case, we set \(N = 2^n\) and append
one zero entry to the interior stress vector, 
\[
\boldsymbol{v}(t)
=
\sum_{j=0}^{N-1}v_{j+\frac12}(t)\ket{j},
\qquad
\boldsymbol{\sigma}(t)
=
\sum_{j=0}^{N-2}\sigma_{j+\frac12}(t)\ket{j}+0\ket{N-1}.
\]
The ODEs can be written as the homogeneous system
\begin{equation}
\frac{\mathrm d}{\mathrm dt}\boldsymbol{u}(t)
=
\mathcal{A}_N\boldsymbol{u}(t),
\qquad
\mathcal{A}_N
=\frac{a}{h}
\begin{pmatrix}
K_N & I\\
0 & 0
\end{pmatrix},\quad \boldsymbol{u} = (\boldsymbol{\sigma},\boldsymbol{v},\boldsymbol{0},\boldsymbol{b}_N),
\label{eq:neumann_homogeneous_ode}
\end{equation}
where \(\boldsymbol{b}_N = (-g_a, \dots,g_b)\) and \(K_N = (X\otimes I)K_D (X\otimes I)\) . 
\subsubsection{Schr\"odingerisation}
The corresponding Hamiltonian after Schr\"odingerisation has the form
\begin{equation}
H_{Neu}
=
\frac{a}{\Delta x}(-H_1\otimes D_p
+
H_2\otimes I),
\label{eq:h_total}
\end{equation}
where
\begin{equation}
H_1
=
\frac{1}{2}X\otimes I, \quad 
H_2
=
-\mathrm{i}\,\sigma_{00}\otimes K_N
+
\frac{1}{2}Y\otimes I.
\end{equation}
\subsubsection{Quantum circuit construction}
The derivation of $U_{Neu}(\tau)$ is the same as $U_{Dir}(\tau)$, except for $\exp(\tau K_N)$. Given the relationship between $K_N$ and $K_D$, we have
\begin{equation}
K^{x,b}_N
=
(I\otimes X)K^{x,b}_D(I\otimes X),\quad
\exp(\tau K^{x,b}_N)
=
(I\otimes X)\exp(\tau K^{x,b}_D)(I\otimes X).
\end{equation}
Therefore, compared with the Dirichlet case, the quantum circuit for the Neumann boundary condition differs only by two additional X gates acting on the auxiliary qubit at both ends of circuit $U_D$, as shown in Fig.~\ref{fig:KN}. The remainder of the circuit is identical.

\begin{figure}[h!] 
    \centerline{
        \Qcircuit @C=1.5em @R=1em
        {
        \lstick{x}
        &{/^{n_x} }\qw
        &\qw
        &\multigate{1}{\exp(\tau K_D)}
        &\qw
        &\qw\\
        \lstick{ancilla} 
        &\qw
        &\gate{X}
        &\ghost{\exp(\tau K_D)}
        &\gate{X}
        &\qw
            }
    }
    \caption{Quantum circuit for $U_N(\tau)=\exp(\tau K_N)$}
    \label{fig:KN}
\end{figure}

\subsection{The impedance boundary condition}
\label{sec:impedance_schrodingerisation}

The impedance boundary conditions are imposed at the left and right
endpoints as
\begin{equation}
\label{eq:impedance_bc_original}
    -\sigma(0,t)+\eta_L v(0,t)=0,
    \qquad
    \sigma(L,t)+\eta_R v(L,t)=0,
\end{equation}
where \(\eta_L,\eta_R\geq 0\) denote the boundary impedances. Using the same staggered-grid scheme and grid points as in the Dirichlet case, we obtain Eqs.~\eqref{eq:sigma_discrete} and~\eqref{eq:v_discrete}. At the boundary nodes, the corresponding boundary closures are
\begin{equation}
\label{eq:velocity_semidiscrete_imp_boundary}
    \frac{\mathrm d}{\mathrm dt}v_0(t)
    =
    2a\frac{\sigma_{\frac12}(t)
        -
        \eta_Lv_0(t)}{h},
\end{equation}
\begin{equation}
\label{eq:velocity_semidiscrete_imp_boundary_right}
    \frac{\mathrm d}{\mathrm dt}v_N(t)
    =
    -2a\frac{\sigma_{N-\frac12}(t)
        +
        \eta_Rv_N(t)}{h}.
\end{equation}
Let \(N+1=2^n\) and append one zero entry to the interior stress vector. The semi-ODE can be written as
\begin{equation}
\frac{\mathrm d}{\mathrm dt}\boldsymbol{u}(t)
=
\mathcal{A}_I\boldsymbol{u}(t),
\qquad
\mathcal{A}_I
=\frac{a}{h}\begin{pmatrix}
0 & A_{\sigma}\\
A_{v} & -2K
\end{pmatrix},\quad 
\boldsymbol{u} = (\boldsymbol{\sigma},\boldsymbol{v}),
\label{eq:neumann_homogeneous_ode}
\end{equation}
where $K=
    \operatorname{diag}
    \left(
        \kappa_L,0,\ldots,0,\kappa_R
    \right)=
\kappa_L\sigma_{00}^{\otimes n}
+
\kappa_R\sigma_{11}^{\otimes n}$.
The coefficient matrices can be written as
\begin{align}
A_{\sigma} &= -P+S_n^-,
 \quad
A_{v} =P-S_n^++M, \\ 
P &= I^{\otimes n}-\sigma_{11}^{\otimes n},
\quad
M = \sigma_{00}^{\otimes n}-\sigma_{11}^{\otimes(n-1)} \otimes
\sigma_{10}.\label{eq:M}
\end{align}

\subsubsection{Schr\"odingerisation}
The corresponding Hamiltonian after Schr\"odingerisation has the form
\begin{equation}
H_{Imp}
=
\frac{a}{\Delta x}(-H_1\otimes D_p
+
H_2\otimes I),
\label{eq:h_total}
\end{equation}
where
\begin{equation}
H_1
=
\frac{1}{2}
\left(
\sigma_{01}\otimes M^T
+
\sigma_{10}\otimes M
\right)
-
2\sigma_{11}\otimes K,\quad
H_2
=
-\mathrm{i} K_N
+\frac{i}{2}
\left(
\sigma_{01}\otimes M^T
-
\sigma_{10}\otimes M
\right).
\end{equation}

\subsubsection{Quantum circuit construction}
\label{subsec:impedance_boundary_circuit}
\paragraph{Implementation of the \(H_1\)-evolution}
After the register reordering, the Hermitian part \(H_1\) takes the form
\begin{equation}
H_1
=
\frac{1}{2}
\left(
M^T\otimes\sigma_{01}
+
M\otimes\sigma_{10}
\right)
-
2K\otimes\sigma_{11}.
\label{eq:H1_impedance_interleaved}
\end{equation}
Substituting \eqref{eq:M} gives
\begin{align}
H_1
={}&
\sigma_{00}^{\otimes n}
\otimes
(
\frac{1}{2}X
-
2\kappa_L\sigma_{11}
)-
\sigma_{11}^{\otimes(n-1)}
\otimes
(
\frac{1}{2}(
\sigma_{01}^{\otimes 2}
+
\sigma_{10}^{\otimes 2})
+
2\kappa_R
\sigma_{11}^{\otimes 2}
).
\end{align}
We divide \(H_1\) into the left and right boundary terms,
\begin{equation}\label{eq:H1LH1R}
H_{1,L}
=
\sigma_{00}^{\otimes n}
\otimes
h_L,
\quad
H_{1,R}
={}
-\frac{1}{2}
\sigma_{11}^{\otimes(n-1)}
\otimes
\left(
\sigma_{01}^{\otimes 2}
+
\sigma_{10}^{\otimes 2}
\right)-
2\kappa_R
\sigma_{11}^{\otimes n+1},
\end{equation}
where $h_L=-\kappa_L I+\frac{1}{2}X+\kappa_L Z$
and $[H_{1,L},H_{1,R}]=0$.
Hence the decomposition is exact and introduces no splitting error $e^{-i\tau H_1}=
e^{-i\tau H_{1,L}}e^{-i\tau H_{1,R}}$.

For the left boundary term, define $\Omega_L
=
\sqrt{
\kappa_L^2+\frac{1}{4}
}$
and choose \(\theta_L\) such that
\begin{equation}
\cos\theta_L
=
\frac{\kappa_L}{\Omega_L},
\qquad
\sin\theta_L
=
\frac{1}{2\Omega_L}.
\end{equation}
Then $\frac{1}{2}X+\kappa_L Z
=
\Omega_L
R_y(\theta_L)
Z
R_y(-\theta_L)$,
and hence
\begin{equation}
U_L(\tau )
:=
e^{-i\tau h_L}
=
e^{i\kappa_L\tau }
R_y(\theta_L)
R_z(2\Omega_L\tau )
R_y(-\theta_L).
\label{eq:UL_sigma}
\end{equation}
Therefore,
\begin{equation}
e^{-i\tau H_{1,L}}
=
C_{\ket{0^n}}
\left[
U_L^{(b)}(\tau )
\right].
\label{eq:H1L_circuit_sigma}
\end{equation}

For the right boundary term, we define \(V_R=I\otimes \sigma_{00}+X\otimes \sigma_{11}\), which represents a CNOT gate controlled by the qubit \(b\) and acting on the least significant qubit \(x_0\) of the spatial register. Thus, we have
\begin{equation}
V_R
\left(
\sigma_{00}^{(x_0)}
\otimes X_b
\right)
V_R
=
\sigma_{01}^{(x_0)}
\otimes\sigma_{01}^{(b)}
+
\sigma_{10}^{(x_0)}
\otimes\sigma_{10}^{(b)},\quad
V_R
\left(
\sigma_{00}^{(x_0)}
\otimes\sigma_{11}^{(b)}
\right)
V_R
=
\sigma_{11}^{(x_0)}
\otimes\sigma_{11}^{(b)}.
\end{equation}
The right boundary Hamiltonian\eqref{eq:H1LH1R} can be written as
$H_{1,R}
=
V_R
\left[
\sigma_{11}^{\otimes(n-1)}
\otimes
\sigma_{00}
\otimes
h_R
\right]
V_R$ 
with $h_R
=
-\kappa_R I
-
\frac{1}{2}X
+
\kappa_R Z$. Similarly, 
\begin{equation}
    U_R(\tau )
:=
e^{-i\tau h_R}
=
e^{i\kappa_R\tau }
R_y(\theta_R)
R_z(2\Omega_R\tau )
R_y(-\theta_R)
\end{equation}
where $\Omega_R
=
\sqrt{
\kappa_R^2+\frac{1}{4}
}$ 
and \(\theta_R =\operatorname{atan2}(-1/2,\kappa_R) \).
\begin{equation}
e^{-i\tau H_{1,R}}
=
V_R
C_{\ket{1^{n-1}0}}U_R^{(b)}(\tau )
V_R.
\label{eq:H1R_circuit_sigma}
\end{equation}

Combining \eqref{eq:H1L_circuit_sigma} and
\eqref{eq:H1R_circuit_sigma}, we obtain
\begin{equation}
e^{-i\tau H_1}
=
C_{\ket{0^{n}}}
\left[
U_L^{(b)}(\tau )
\right]
V_R
C_{\ket{1^{n-1}0}}
\left[
U_R^{(b)}(\tau )
\right]
V_R.
\label{eq:H1_complete_sigma}
\end{equation}
The corresponding circuit is shown in Fig.~\ref{fig:Imp U1}.

\paragraph{Implementation of the \(H_2\)-evolution}
For the evolution generated by \(H_2\), the \(K_N\)-dependent part is implemented by the circuit shown in Fig.~\ref{fig:KN}. We denote the remaining part by \(H_{2,M}\). Following the same treatment as for \(H_1\), its evolution can be written as
\begin{equation}
e^{i\tau H_{2,M}}
=
C_{\ket{0^{n}}}
\left[R_y^{(b)}(\tau)\right]
V_R C_{\ket{1^{n-1}0}}
\left[ R_y^{(b)}(-\tau) \right] V_R.
\label{eq:H2M_complete_sigma}
\end{equation}
The corresponding circuit is shown in Fig.~\ref{fig:Imp U_2}.
\subsubsection{Circuits Plots}
\begin{figure}[htbp]
\centering
\centerline{
\Qcircuit @C=1.0em @R=1em{
\lstick{x}
    & {/^{n_x-1}}\qw
    & \ctrlo{1}
    & \ctrlo{1}
    & \ctrlo{1}
    & \ctrlo{1}
    & \qw
    & \ctrl{1}
    & \ctrl{1}
    & \ctrl{1}
    & \ctrl{1}
    & \qw
    & \qw
\\
\lstick{x_0}
    & {/^{1}}\qw
    & \ctrlo{1}
    & \ctrlo{1}
    & \ctrlo{1}
    & \ctrlo{1}
    & \targ
    & \ctrlo{1}
    & \ctrlo{1}
    & \ctrlo{1}
    & \ctrlo{1}
    & \targ
    & \qw
\\
\lstick{ancilla}
    & \qw
    & \gate{P(\kappa_L\tau)}
    & \gate{R_y(-\theta_L)}
    & \gate{R_z(2\Omega_L \tau)}
    & \gate{R_y(\theta_L)}
    & \ctrl{-1}
    & \gate{P(\kappa_L\tau)}
    & \gate{R_y(-\theta_R)}
    & \gate{R_z(2\Omega_R \tau)}
    & \gate{R_y(\theta_R)}
    & \ctrl{-1}
    & \qw
\gategroup{2}{3}{3}{6}{0.8em}{--}
\POS"2,7"."3,7"."2,12"."3,12"
    ."3,8"."3,9"."3,10"."3,11"
    !C*+<0.8em>\frm{--}
}
}
\caption{Quantum circuit for $U_{\mathrm{I},x}(\tau)$}
\label{fig:Imp U1x}
\end{figure}

\begin{figure}[h!] 
    \centerline{
        \Qcircuit @C=1em @R=1em
        {
        \lstick{ancilla}&\qw&\qw&\gate{U_{\mathrm{I},x}(\tau)}&\gate{U_{\mathrm{I},x}^2(\tau)}&\qw &\cdots& &\gate{U_{\mathrm{I},x}^{2^{n_x-1}}(\tau)}&\qw\\
        \lstick{q_0} &\qw&\qw&\ctrl{-1}&\qw&\qw&\cdots& &\qw
        &\qw\\
        \lstick{q_1} &\qw&\qw&\qw&\ctrl{-2}&\qw&\cdots&& \qw&\qw\\
        \lstick{\vdots} & & & & & & & & \\
        \lstick{q_{n_x-1}} &\qw&\qw&\qw&\qw&\qw&\cdots& &\ctrl{-4}&\qw
            }
    }
    \caption{Quantum circuit for $U_{\mathrm{I}}(\tau)$}
    \label{fig:Imp U1}
\end{figure}

\begin{figure}[h!] 
    \centerline{
        \Qcircuit @C=1em @R=1em
        {
        \lstick{x}
        &{/^{n_x-1} }\qw
        &\qw
        &\multigate{2}{\exp(\tau K_D)}
        &\qw
        &\ctrlo{1}
        &\qw
        &\ctrl{1}
        &\qw
        &\qw
        \\
        \lstick{x_0}
        &{/^{1} }\qw
        &\qw
        &\ghost{\exp(\tau K_D)}
        &\qw
        &\ctrlo{1}
        &\targ
        &\ctrlo{1}
        &\targ
        &\qw\\
        \lstick{ancilla} 
        &\qw
        &\gate{X}
        &\ghost{\exp(\tau K_D)}
        &\gate{X}
        &\gate{R_y(-\tau)}
        &\ctrl{-1}
        &\gate{R_y(\tau)}
        &\ctrl{-1}
        &\qw
\gategroup{1}{6}{3}{9}{0.8em}{--}
}
}
    \caption{Quantum circuit for $U_{\mathrm{II}}(\tau)$}
    \label{fig:Imp U_2}
\end{figure}

\subsection{Complexity}

For conciseness, the following error, gate-complexity, and total-complexity
analyses are presented only for the Dirichlet boundary condition. The Neumann
and impedance constructions have the same asymptotic estimates, up to constants
depending on the boundary parameters, and hence their conclusions follow
analogously.

\subsubsection{Splitting errors}
\begin{lemma} \label{lem:p_order_trotter} 
Consider the Schr\"odinger equation $\frac{\partial}{\partial t}\ket{u(t)} = iH_{\mathrm{Dri}}\ket{u(t)}$, where the Hamiltonian is given by Eq. \eqref{eq:HDri}. Let $U_{\mathrm{Dri}}(T) = e^{iH_{\mathrm{Dri}}T}$ denote the exact time-evolution operator. For a time step $\tau=T/r$, let $V_{\mathrm{Dri},\,p}(T)$ be a $p$th-order Trotter--Suzuki product formula. Then the local approximation error over one time step satisfies 
the local splitting error satisfies
\begin{equation}
\label{eq:p_order_local_error}
    \left\|
        U_{\mathrm{Dri}}(\tau)
        -
        V_{\mathrm{Dri},\,p}(\tau)
    \right\|
    =
    \mathcal{O}_p\left(
            \frac{a^{p+1}\tau^{p+1}}{h^{p+1}}
            (k_{\max}^{p}
            +
            n^{p})\right).
\end{equation}
For the total evolution time $T=r\tau$, let $V_{\mathrm{Dri},\,p}(T) = V^r_{\mathrm{Dri},\,p}(\tau)$. The corresponding global approximation error is bounded by 
\begin{equation}
\label{eq:p_order_global_error}
    \left\|
        U_{\mathrm{Dri}}(T)- V^r_{\mathrm{Dri},\,p}(\tau)
    \right\|
    =
    \mathcal{O}_p\left(
        \frac{a^{p+1}T^{p+1}}{r^ph^{p+1}}
            (k_{\max}^{p}
            +
            n^{p})\right).
\end{equation}
\end{lemma}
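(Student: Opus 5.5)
We will follow the template of Lemma~\ref{lemma:SMF error}, applying Lemma~\ref{lem:p_order_commutator_error} separately at each level of the nested splitting and combining the local errors by the triangle inequality. There are three levels for the Dirichlet Hamiltonian \eqref{eq:HDri}. First, the outer split is between $G_1=-i\frac{a}{h}H_1\otimes D_p$ and $G_2=i\frac{a}{h}H_2\otimes I$. Second, inside $U_{\mathrm{II}}$, the $K_D$ part $\frac{a}{h}\sigma_{00}\otimes K_D$ is split from the $\frac{a}{2h}Y$ part. Third, inside $U_D$, we use the shift-operator splitting \eqref{eq:split3}--\eqref{eq:split4}: $K_D^{x,b}=H_0+H_{end}$, and $H_0$ is further split into the $n+1$ terms $s_j^--s_j^+$. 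The interleaved reordering is an exact permutation, so it does not change any norm.

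\textbf{Step 1: norm bounds.} Since $A_\sigma$ and $A_v$ are bidiagonal with entries $\pm1$, we have $\|K_D\|\le 2$. With $\|H_1\|=\frac12$ and $\|D_p\|=k_{\max}$, this gives $\|G_1\|\le \frac{a}{2h}k_{\max}$ and $\|G_2\|\le \frac{a}{h}(2+\frac12)$. Each piece $s_j^--s_j^+$ has norm at most $2$ (in fact $\|s_j^\pm\|\le1$), and $\|H_{end}\|\le 1$. All generators therefore scale like $\frac{a}{h}$ times $O(1)$, except that $G_1$ carries the extra factor $k_{\max}$.

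\textbf{Step 2: nested commutators at each level.} For the outer split we repeat the binomial argument of \eqref{eq:periodic_first_commutator_factor}. Using $\|[C,D]\|\le2\|C\|\|D\|$, we get
\[
\widetilde\alpha_1^{(p)}\lesssim 2^p\Big(\frac{a}{h}\Big)^{p+1}k_{\max}\,(k_{\max}+1)^{p-1}\,\|[H_1,H_2]\|,
\]
which is $O_p\big((\frac{a}{h})^{p+1}(k_{\max}^p+1)\big)$. The inner $U_{\mathrm{II}}$ split involves only $O(a/h)$ terms and contributes $O_p((\frac{a}{h})^{p+1})$. For the $n+2$ terms of the $U_D$ split we would first try the crude count of Lemma~\ref{lemma:dSMF error}: at most $(n+2)^{p+1}$ index sequences, each bounded by $2^p\cdot O((a/h)^{p+1})$.

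\textbf{Step 3 (main obstacle): the $n$-dependence.} The crude count gives $n^{p+1}$, whereas the statement claims $n^p$. To save a factor of $n$, I would fix the innermost commutator. It vanishes whenever $\gamma_1=\gamma_2$, which leaves $(n+2)(n+1)$ choices. I would then argue that distinct shift pieces mostly commute or interact only through a bounded number of neighbours, since $s_j^\pm$ acts as a carry on qubits $1,\dots,j$. If that sparsity of the innermost commutator cannot be shown, a fallback is to use the sharper bound $\|[C,D]\|\le 2\|C\|\|D\|$ only for the outer $p-1$ commutators. In that case the sum over $\gamma_2,\dots,\gamma_{p+1}$ is controlled by $(\sum_\gamma\|G_\gamma\|)^{p-1}=O((n\,a/h)^{p-1})$, and the innermost pairs are summed with the estimate $\sum_{\gamma_1,\gamma_2}\|[G_{\gamma_2},G_{\gamma_1}]\|=O(n(a/h)^2)$. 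This yields $O((a/h)^{p+1}n^p)$. Establishing that estimate, i.e.\ that each $G_\gamma$ fails to commute with only $O(1)$ others or that the commutators decay, is the genuinely delicate point.

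\textbf{Step 4: combine.} Summing the three levels gives the local bound \eqref{eq:p_order_local_error} with $\tau^{p+1}(\frac{a}{h})^{p+1}(k_{\max}^p+n^p)$. Unitary invariance and the telescoping argument over $r=T/\tau$ steps then give \eqref{eq:p_order_global_error}.
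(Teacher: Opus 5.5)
Your architecture matches the paper's. The outer split gives the $k_{\max}^p$ term, and the $U_{\mathrm{II}}$ split and the $H_0$/$H_{end}$ split give $\mathcal{O}_p((a/h)^{p+1})$. For the shift pieces $T_j=s_j^{-}-s_j^{+}$, the paper does exactly your fallback: it quotes $\sum_{j<k}\|[T_j,T_k]\|=\mathcal{O}(n)$ and $\|T_j\|\le 1$ from Appendix (B.6) of \cite{hu2024quantum}, then bounds the remaining $p-1$ commutators conservatively to get $\mathcal{O}_p(n^p)$. The paper nests $H_0$ versus $H_{end}$ before splitting $H_0$; your flat $(n+2)$-term formula is equally fine for the error, because $[H_{end},T_j]=0$ for every $j\neq 2$.

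The gap is that you leave the innermost estimate unproved, and it is exactly what turns $n^{p+1}$ into $n^p$. Both mechanisms you suggest for it are false. The piece $T_1$, which acts as $\sigma_{01}-\sigma_{10}$ on the least significant qubit, fails to commute with \emph{every} $T_k$, $k\ge 2$. Moreover $\|[T_1,T_k]\|=1$ for all $k$. So there is neither a bounded interaction range nor any decay.

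The estimate holds because noncommuting pairs are rare, not because each piece has few partners. Take $2\le j<k$; every product of $s_j^{\pm}$ with $s_k^{\pm}$, in either order, vanishes. The like products $s_j^{-}s_k^{-}$ and $s_j^{+}s_k^{+}$ contain $\sigma_{10}\sigma_{10}$ or $\sigma_{01}\sigma_{01}$ on qubit $1$, since both pieces act there by the same off-diagonal factor. The mixed products contain $\sigma_{01}\sigma_{01}$ or $\sigma_{10}\sigma_{10}$ on qubit $j$. Hence $[T_j,T_k]=0$ for $2\le j<k$.

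For $j=1$ the mixed products still vanish, and one gets $[T_1,T_k]=\widetilde T_{k-1}\otimes Z$, of norm $1$. Here $\widetilde T_{k-1}$ is the corresponding shift piece on qubits $2,\ldots,n+1$. Thus $\sum_{j<k}\|[T_j,T_k]\|=n$, and your fallback gives
\[
\sum_{j_1,\ldots,j_{p+1}=1}^{n+1}\bigl\|[T_{j_{p+1}},[\cdots,[T_{j_2},T_{j_1}]\cdots]]\bigr\|\le 2n\,\bigl(2(n+1)\bigr)^{p-1}=\mathcal{O}_p(n^p).
\]
Multiplying by the scaling factor $(a/h)^{p+1}$ closes Step 3. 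The pairs involving $H_{end}$ add only $\mathcal{O}(1)$ more.
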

\begin{proof}
The error analysis for the first two splitting steps is the same as that in Theorem~\ref{thm:1}. Since $\|H_1\|=\frac{1}{2}$, $\|H_2\|\leq\|K_D\|+\frac{1}{2}$ and $\|K_D\|\leq 2\|$, we have
\begin{equation}\label{eq:dirichlet_error_1_p}
    \begin{aligned}
    \widetilde{\alpha}_{1}^{(p)}
    &\leq
    \frac{2^pa^{p+1}}{h^{p+1}}
    k_{\max}
    \|[H_1,H_2]\|
    \left(
        k_{\max}\|H_1\|
        +
        \|H_2\|
    \right)^{p-1}\\
    &=
    \mathcal{O}_p
    \left(
        \frac{a^{p+1}}{h^{p+1}}
        k_{\max}(1+k_{\max})^{p-1}
    \right)
    =
    \mathcal{O}_p
        ( \frac{a^{p+1}}{h^{p+1}}
        k_{\max}^{p}).          
    \end{aligned}
\end{equation}
\begin{equation}
    \widetilde{\alpha}_{2}^{(p)}
    =
    \mathcal{O}_p
    \left(
        \frac{a^{p+1}}{h^{p+1}}
        \|K_D\|
        (\|K_D\|+\frac{1}{2})^{p-1}
    \right)
    =
    \mathcal{O}_p
    \left(
        \frac{a^{p+1}}{h^{p+1}}
    \right).
    \label{eq:dirichlet_error_2_p}
\end{equation}

The thrid spliiting is \eqref{eq:split3}, by Appendix B.1 in \cite{hu2024quantum}, using $\|[H_0,H_{end}]\|\leq1$, $\|H_0\|\leq2$ and $\|H_{end}\|=1$,
$\widetilde{\alpha}_{3}^{(p)}$ immediately yields the same order as $\widetilde{\alpha}_{2}^{(p)}$.

It remains to consider the decomposition of the main shift term \eqref{eq:split4}.
Write $T_\ell=s_\ell^{-}-s_\ell^{+}$ and  define the commutator coefficient by
\begin{equation}
    \alpha_{4}^{(p)}
    :=
    \sum_{j_1,\ldots,j_{p+1}=1}^{n+1}
    \left\|
    \left[
    T_{j_{p+1}},
    \left[
    \cdots,
    \left[
    T_{j_2},T_{j_1}
    \right]
    \cdots
    \right]
    \right]
    \right\|.
\end{equation}
By Appendix (B.6) in \cite{hu2024quantum}, $ \sum_{1\leq j<k\leq n+1}
    \|[T_j,T_k]\|
    =
    \mathcal{O}(n)$ and \(\|T_j\|\leq1\).
Furthermore, the conservative higher-order is estimated
$\mathcal{O}_p(n^p)$.
Consequently,
\begin{equation}
    \alpha_{4}^{(p)}
    =
    \mathcal{O}_p
        \left(\frac{n^p a^{p+1}}{h^{p+1}}\right).
    \label{eq:dirichlet_error_4_p}
\end{equation}

Therefore, by the triangle inequality, the total local error is bounded by the sum of the four contributions,
\begin{equation}
    \epsilon_{Dri}^{(p)}
    =
    \mathcal{O}_p
    \left[\frac{ a^{p+1} \tau^{p+1}}{h^{p+1}}
            (k_{\max}(1+k_{\max})^{p-1}+1+n^p)
    \right].
    \label{eq:dirichlet_error_bound_p}
\end{equation}
By the same argument as in Lemma~\ref{lemma:SMF error}, the error over time \(T\) is bounded by Eq.~\eqref{eq:p_order_global_error}.
\end{proof}

\begin{remark}
\label{rem:d_dimensional_error}
In the \(d\)-dimensional case, \(K_D\) is composed of the differential matrix associated with different spatial directions. As in Lemma~\ref{lemma:dSMF error}, an additional Trotter--Suzuki splitting is required to implement the evolution generated by \(K_D\). The derivation is analogous to that of Lemma~\ref{lemma:dSMF error} and is omitted here. The local approximation error for the \(p\)th-order product formula satisfies
\begin{equation}
    \epsilon_d^{(p)}
    =
    \mathcal{O}_p
    \left(
        \frac{a^{p+1}\tau^{p+1}}{h^{p+1}}
            (dk_{\max}(d+k_{\max})^{p-1}+d^{p+1}+dn^p)
    \right).
    \label{eq:dirichlet_error_bound_p}
\end{equation}
Therefore, for a total evolution time \(T=r\tau\), the corresponding
global error is
\begin{equation}
    \left\|
        U_{\mathrm{Dri}}^{(d)}(T)
        -
        V_{Dri,p}^{(d)}(T)
    \right\|
    =
    \mathcal{O}_p
    \left(
         \frac{a^{p+1}T^{p+1}}{r^ph^{p+1}}
         \left(dk_{\max}(d+k_{\max})^{p-1}+d^{p+1}+dn^p\right)
    \right).
\end{equation}
\end{remark}

\subsubsection{CNOT-gate complexity}

\begin{lemma}\label{lem:gate_count_dirichlet}
    The approximated time evolution operator $V_{{Dri},\,p}(\tau)$ obtained by a $p$th-order Trotter--Suzuki product formula can be implemented using $\mathcal{Q}_{{\text{Dri}}} = O(s_p^4n^2)$ CNOT gates.
\end{lemma}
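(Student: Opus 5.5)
We count CNOT gates block by block and then multiply by how often each block is repeated in the nested product formula, in the same way as in the proof of Lemma~\ref{lemma:SMF gate}. The Dirichlet circuit has four nested splittings. The outer $U_{\mathrm I}$--$U_{\mathrm{II}}$ split comes first. Next, $U_{\mathrm{II}}$ is split into $U_{\mathrm{II},x}$ and $U_{\mathrm{II},Y}$. Then $U_D=\exp(\tau K_D^{x,b})$ is split into $e^{\tau H_0}e^{\tau H_{end}}$ as in \eqref{eq:split3}. Finally, $e^{\tau H_0}$ is split into the $n+1$ factors $W_j$ as in \eqref{eq:split4}. Each level uses a $p$th-order Suzuki formula with $O(s_p)$ stages. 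Blocks at the $k$th level are therefore repeated $O(s_p^k)$ times, so the innermost $W_j$ blocks appear $O(s_p^4)$ times.

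Next we bound the cost of one realization of each elementary block.
\begin{enumerate}[(i)]
\item $U_{\mathrm I}(\tau)$ is unchanged from the periodic case. It uses $n_p$ singly controlled $R_x$ gates, hence $O(n_p)$ CNOTs.
\item $U_{\mathrm{II},Y}$ is a single $R_y$ gate and needs no CNOTs.
\item $U_{\mathrm{II},x}$ is $U_D$ controlled by the ancilla through $\sigma_{00}$. This adds one control qubit to every gate in Fig.~\ref{fig:KD}, which changes CNOT counts only by a constant factor.
\item $e^{\tau H_{end}}=\sigma_{11}^{\otimes n}\otimes R_y(2\tau)$ is an $n$-controlled rotation (with the extra ancilla control, $n+1$ controls). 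By the decomposition of \cite{sato2024hamiltonian,vale2023decomposition} it costs at most $16(n+2)-40=O(n)$ CNOTs.
\item Each $W_j(\gamma\tau,\lambda)$ in Fig.~\ref{fig:W} consists of $2(j-1)$ CNOTs forming the two cascades, plus one $R_z$ gate with $j-1$ controls (with the ancilla control, $j$ controls). This costs $O(j)$ CNOTs.
\end{enumerate}
Summing (v) over $j=1,\dots,n+1$ gives $\sum_j O(j)=O(n^2)$ CNOTs for one realization of $e^{\tau H_0}$.

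Finally we combine costs and repetition counts. The $W_j$ layer costs $O(n^2)$ per realization and is repeated $O(s_p^4)$ times, contributing $O(s_p^4 n^2)$. The $H_{end}$ factor contributes $O(s_p^3 n)$. The $U_{\mathrm I}$ factor contributes $O(s_p n_p)$, and we absorb it using $n_p=O(\mathrm{polylog})\lesssim n^2$ together with the convention of Theorem~\ref{thm:1} that $n_p$ terms are lower order. The total is therefore $O(s_p^4 n^2)$.

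The main point needing care is the repetition count. In a symmetric Suzuki formula, adjacent half-steps of the same operator merge, so each level has $O(s_p)$ rather than $2s_p$ stages. One must also check that the innermost product formula over the $n+1$ shift terms uses $O(s_p)$ sweeps, and that each sweep still costs $O(n^2)$ CNOTs. The second check holds because every sweep visits each $W_j$ exactly once, or twice in the symmetric ordering. The bound $O(s_p^4 n^2)$ then follows.
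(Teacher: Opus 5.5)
Your proposal is correct and follows the same route as the paper. The paper tallies the same blocks with explicit constants, obtaining $2s_pn_p+s_p^3(16n-8)+s_p^4(14n^2+14n+8)$, and keeps the $s_p^4$ term from the $W_j$ sweeps as dominant. You assign the $(16n-8)$-CNOT multi-controlled $R_y$ to $e^{\tau H_{end}}$ at the third level; the paper's wording places it at $U_{\mathrm{II},Y}$, but your placement is the one consistent with its $s_p^3$ factor. As in the paper, dropping the $O(s_pn_p)$ term rests on an implicit assumption that $n_p$ is lower order, not on anything proved.
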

\begin{proof}
We first collect the costs of the elementary circuit blocks. $U_{\mathrm I}(\tau)$ contains one \(R_x\) gate
and \(n_p\) controlled \(R_x\) gates. Hence its CNOT cost is $2n_p$.
$U_{\mathrm{II},Y}(\tau)$ contains one \((n+1)\)-qubit
controlled \(R_y\) gate with CNOT cost $16n-8$.
The controlled \(W_j\) blocks contribute
\begin{align}
    C_{\mathrm{W}}
    &=
    \sum_{j=2}^{n+1}(16j-24)
    +
    6\sum_{j=2}^{n+1}2(j-1)
    +
    4(2+2n)
    \nonumber\\
    &=
    14n^2+14n+8.
    \label{eq:one_shift_sweep_cost}
\end{align}
Since the circuit construction involves four nested
product-formula decompositions, the resulting gate cost acquires 
\begin{equation}
    N_{\mathrm{CNOT}}
    \leq
    2s_p n_p
    +
    s_p^3(16n-8)
    +
    s_p^4(14n^2+14n+8).
    \label{eq:p_order_dirichlet_cnot}
\end{equation}
Hence, the dominant gate complexity is
$s_p^4 C_{\mathrm W}$.
\end{proof}

\begin{remark}
\label{rem:d_dimensional_gate}
In the \(d\)-dimensional case, the gate complexity of $U_{\mathrm{II}}(\tau)$ increases by a factor of \(d\) compared with the one-dimensional case. Then the shift-operator
circuits are repeated \(s_p^5\) times in a \(p\)th-order Trotter--Suzuki approximation. Consequently, the
CNOT complexity satisfies $\mathcal{O}(d\,s_p^5 n^2)$.
\end{remark}

\begin{lemma}\label{lemma:Dri:complexity:T}
In one-dimensional case, to achieve
Hamiltonian simulation error between $U_{Dri}(T)$ and $V_{\mathrm{Dri},p}(T)$ at most \(\varepsilon\), it is sufficient to choose time steps $r$ satisfying $\mathcal{O}_p(\frac{T^{1+\frac{1}{p}}(k_{\max}^{p}+
            n^{p})^{\frac{1}{p}}}{{ \varepsilon^{\frac{1}{p}}}h^{1+\frac{1}{p}}})$.
The CNOT complexity of $V_{\mathrm{Dri},\,p}(T)$ is
$\mathcal{Q}^{T,(p)}_{\mathrm{CNOT}}
=
\mathcal{O}_p\!\left(
\frac{s_p^4n^2T^{1+\frac{1}{p}}(k_{\max}^{p}+n^{p})^{\frac{1}{p}}}{{ \varepsilon^{\frac{1}{p}}}h^{1+\frac{1}{p}}}\right)$.
\end{lemma}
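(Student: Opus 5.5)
The argument has the same shape as the proof of Lemma~\ref{lemma:SMF gate sT}: combine the global splitting bound of Lemma~\ref{lem:p_order_trotter} with the per-step gate count of Lemma~\ref{lem:gate_count_dirichlet}.

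\emph{Choosing the number of steps.} By Eq.~\eqref{eq:p_order_global_error}, $r$ Trotter--Suzuki steps of size $\tau=T/r$ give
\[
\bigl\|U_{\mathrm{Dri}}(T)-V^r_{\mathrm{Dri},p}(\tau)\bigr\|\le C_p\,\frac{a^{p+1}T^{p+1}}{r^p h^{p+1}}\bigl(k_{\max}^p+n^p\bigr),
\]
where $C_p$ depends only on the chosen formula. Requiring the right-hand side to be at most $\varepsilon$ and solving for $r$ gives
\[
r\ge C_p^{1/p}\,\frac{a^{1+1/p}T^{1+1/p}\bigl(k_{\max}^p+n^p\bigr)^{1/p}}{\varepsilon^{1/p}h^{1+1/p}}.
\]
Since $a$ is a physical parameter independent of $\varepsilon$ (the same assumption as in Theorem~\ref{thm:1}), the factor $a^{1+1/p}$ is absorbed into $\mathcal O_p$. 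This yields the stated choice of $r$.

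\emph{Multiplying by the per-step cost.} The circuit $V^r_{\mathrm{Dri},p}(\tau)$ is the $r$-fold product of the one-step circuit, so its CNOT count is $r$ times the one-step count. By Lemma~\ref{lem:gate_count_dirichlet} and the explicit bound \eqref{eq:p_order_dirichlet_cnot}, one step costs at most
\[
2s_pn_p+s_p^3(16n-8)+s_p^4(14n^2+14n+8)=O(s_p^4n^2).
\]
This holds provided $n_p\lesssim s_p^3 n^2$, which is true because $n_p$ is polylogarithmic, as in the periodic case. Multiplying this per-step bound by $r$ gives the claimed $\mathcal Q^{T,(p)}_{\mathrm{CNOT}}$.

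\emph{Main obstacle.} The only delicate step is making sure the per-step bound really dominates. The term $2s_pn_p$ must be absorbed into $s_p^4n^2$, which requires treating $n_p$ as $O(n^2)$ or as polylogarithmic, the same convention used in Theorem~\ref{thm:1}. Apart from this bookkeeping, the argument is a direct algebraic combination of the two earlier lemmas and needs no new estimates.
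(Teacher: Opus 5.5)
Your proposal is correct and follows the same route the paper intends; the paper omits the proof as analogous to Lemma~\ref{lemma:SMF gate sT}. That route is to solve the global bound \eqref{eq:p_order_global_error} for $r$, absorbing the $\varepsilon$-independent factor $a^{1+1/p}$ into $\mathcal{O}_p$, and then multiply by the one-step CNOT cost $O(s_p^4n^2)$ of Lemma~\ref{lem:gate_count_dirichlet}, with the $2s_pn_p$ term absorbed exactly as you note.
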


\begin{proof}
The proof is analogous to that of Lemma~\ref{lemma:SMF gate sT} and is omitted here.
\end{proof}

\begin{remark}\label{rem:dDDri:complexity:T}
In $d$-dimensional case, to achieve at most \(\varepsilon\) Hamiltonian-simulation
error,
the corresponding CNOT complexity over \([0,T]\) is $O\left(\frac{d^{2+\frac{1}{p}}s_p^5n^2T^{1+\frac{1}{p}}}{{ \varepsilon^{\frac{1}{p}}}h^{1+\frac{1}{p}}}\right)$.
\end{remark}

\subsubsection{Total complexity and classical comparison}
\begin{theorem}
\label{thm:3}
    In one-dimensional case, suppose the assumption same as Theorem~\ref{thm:1}. Use the smooth auxiliary profile    \eqref{eq:schro_review_smooth_profile} and define $R_T$ as Eqs. \eqref{eq:RT}. By  \(p\)th-order Trotter--Suzuki approximation, the CNOT gate complexity of the Schr\"odingerisation-based quantum algorithm for the SMF equation \eqref{eq:velocity_bc} with Dirichlet boundary conditions is
$$\widetilde{O}\left(
\frac{s_p^4T^{1+\frac{1}{p}}R_T^{1+\frac{1}{p}}}
{\varepsilon^{\frac{1}{p}}
h^{1+\frac{1}{p}}}
\right).$$
\end{theorem}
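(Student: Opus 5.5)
The proof will follow the template of Theorem~\ref{thm:1}, with Lemmas~\ref{lem:p_order_trotter}, \ref{lem:gate_count_dirichlet} and \ref{lemma:Dri:complexity:T} replacing their periodic counterparts. The total error has three parts: the staggered-grid spatial discretization error, the auxiliary-variable error (truncation, Fourier discretization in $p$, and recovery), and the Hamiltonian-simulation error after recovery and normalization. The spatial error is handled by the chosen $h$, which enters only through the prefactor $a/\Delta x$ in \eqref{eq:HDri}. For the auxiliary part I will reuse the smooth-profile argument in the proof of Theorem~\ref{thm:1}, namely \eqref{eq:fourier_projection_error_SMF}--\eqref{eq:mumax_estimation_SMF}. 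This gives $k_{\max}=\kappa_{\mathrm{aux}}=O(\log(1/\varepsilon))$, with truncation and recovery errors both $O(\varepsilon_{\mathrm{aux}})$.

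Next I will simplify the factors in Lemma~\ref{lemma:Dri:complexity:T}. Since $n=\lceil\log_2 N\rceil=O(\log(1/h))$ and $k_{\max}$ is polylogarithmic in $1/\varepsilon$, the factor $(k_{\max}^p+n^p)^{1/p}=O(k_{\max}+n)$ is polylogarithmic. The factor $n^2$ is likewise polylogarithmic, and $n_p$ is polylogarithmic in $1/\varepsilon$. The QFT and inverse QFT on the $p$-register cost $O(n_p^2)$ CNOTs and are applied once, so they are absorbed as well; no spatial QFT is needed here because the shift-operator construction works directly in physical space. Lemma~\ref{lemma:Dri:complexity:T} then gives a single-run cost for Hamiltonian-simulation error $\varepsilon'$ of
\[
\widetilde{\mathcal O}_p\!\left(\frac{s_p^4T^{1+1/p}}{(\varepsilon')^{1/p}h^{1+1/p}}\right).
\]

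For the recovery step I will repeat the estimate \eqref{eq:projected_normalized_error} verbatim. It uses $\|\mathbf w(0)\|=\|\mathbf u(0)\|\|\boldsymbol g\|$, $\|M_{\mathrm{rec}}\mathbf w(T)\|=\Theta(\|\mathbf u(T)\|\|\boldsymbol g_{\mathrm{rec}}\|)$ and $\|\boldsymbol g\|/\|\boldsymbol g_{\mathrm{rec}}\|=O(1)$. Together these show that the normalized recovered-state error is $O(R_T\varepsilon')$, so I choose $\varepsilon'=\Theta(\varepsilon/R_T)$, which contributes a factor $R_T^{1/p}$. The success probability of the recovery projection is $\Theta(R_T^{-2})$, so amplitude amplification needs $\Theta(R_T)$ repetitions. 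The total cost is therefore the single-run cost multiplied by $R_T^{1+1/p}$, which is the stated bound.

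The main obstacle is checking that the Schr\"odingerisation recovery relations carry over to the Dirichlet case. The homogenized $\mathcal A_D$ includes the source $\boldsymbol b'$, and $H_1$ has $\lambda^+_{\max}(H_1)=\frac{a}{2\Delta x}$, so the recovery threshold $p^*\ge \frac{aT}{2\Delta x}$ grows like $1/h$. I must argue that this affects only the domain length $L_p$, and hence $n_p$, logarithmically. The largest mode $k_{\max}$ is not affected, because the $a/\Delta x$ scaling is factored outside $H_1$ and $H_2$ in \eqref{eq:HDri}, so the dependence on $h$ enters only through the explicit $h^{-(1+1/p)}$. I would cite \cite{matrixqueries,hu2024quantum,jin2025quantum} for these facts, as was done for the periodic case.
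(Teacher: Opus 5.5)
Your outline is the same as the paper's proof, which is a one-line deferral to Theorems~\ref{thm:1} and~\ref{thm:2}. Your handling of $k_{\max}$, $n$, $n_p$ and the QFT cost is fine. The gap is in the paragraph you flag as the main obstacle.

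You correctly note that in \eqref{eq:HDri} the coefficient of $D_p$ is $\frac{a}{\Delta x}H_1$, so $\lambda^+_{\max}=\frac{a}{2\Delta x}$ and $p^*\ge\frac{aT}{2\Delta x}$. The consequence is not merely a larger $L_p$. On the recovery interval $g_{\varepsilon_{\mathrm{aux}}}(p)\approx e^{-p}$, hence $\|\boldsymbol g_{\mathrm{rec}}\|/\|\boldsymbol g\|=\Theta(e^{-p^*})=e^{-\Theta(aT/h)}$. So the relation $\|\boldsymbol g\|/\|\boldsymbol g_{\mathrm{rec}}\|=O(1)$, which you import verbatim into \eqref{eq:projected_normalized_error}, fails here. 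Two things follow:
\begin{enumerate}
\item[(i)] The projected-error factor becomes $R_Te^{p^*}$ instead of $R_T$, so $\varepsilon'$ must shrink by a further factor $e^{-p^*}$.
\item[(ii)] The success probability becomes $\Theta(e^{-2p^*}R_T^{-2})$, so amplitude amplification needs $\Theta(R_Te^{p^*})$ rounds.
\end{enumerate}
The cost is therefore exponential in $1/h$, not $h^{-(1+1/p)}$. This is not an artifact of the particular recovery formula: replacing $A$ by $A-\lambda^+_{\max}I$ only moves the same factor $e^{-\lambda^+_{\max}T}$ into $\|\mathbf u(T)\|$. Nor is it physical. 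The positive eigenvalue of order $1/h$ is created solely by the factor $\frac{a}{\Delta x}$ multiplying the source-coupling block $I$ in $\mathcal A_D$. The paper's omitted proof does not address this point either.

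To close the gap, rescale the augmentation before Schr\"odingerising. Use $\tilde{\boldsymbol b}=\frac{a}{\Delta x}\boldsymbol b'$ as the second block, so the homogenized matrix is $\begin{pmatrix}\frac{a}{\Delta x}K_D & I\\ 0 & 0\end{pmatrix}$. Since $K_D$ is real antisymmetric ($A_\sigma^T=-A_v$ by \eqref{eq:ABP}), its Hermitian part is $\frac12X\otimes I$. This gives $p^*=T/2$ exactly as in the periodic case, and all $h$-dependence moves into $H_2=-\mathrm i\frac{a}{\Delta x}\sigma_{00}\otimes K_D+\frac12Y\otimes I$.

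Redoing the first commutator bound with $\|H_1\|=\frac12$, $\|H_2\|=O(a/h)$ and $\|[H_1,H_2]\|=O(a/h)$ gives $\widetilde\alpha_1^{(p)}=\mathcal O_p\bigl(\kappa_{\mathrm{aux}}\frac{a}{h}(\kappa_{\mathrm{aux}}+\frac{a}{h})^{p-1}\bigr)$. The inner splittings still contribute $\mathcal O_p\bigl((a/h)^{p+1}(1+n^p)\bigr)$. Hence the step count remains $r=\widetilde{\mathcal O}_p(T^{1+1/p}h^{-(1+1/p)}(\varepsilon')^{-1/p})$. Only rotation angles in $U_{\mathrm I}$ and $U_{\mathrm{II},Y}$ change, so the gate count of Lemma~\ref{lem:gate_count_dirichlet} is unaffected.

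With this formulation your recovery argument goes through verbatim: $\varepsilon'=\Theta(\varepsilon/R_T)$ and $\Theta(R_T)$ amplification rounds, with $R_T$ measured on the rescaled augmented vector. This yields the stated bound.
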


\begin{proof}
    The proof follows the same steps as those of Theorems~\ref{thm:1} and~\ref{thm:2} and is therefore omitted.
\end{proof}

\begin{remark}
According to Remark~\ref{rem:dDDri:complexity:T} and Theorem~\ref{thm:3}, in the \(d\)-dimensional case, the CNOT complexity of the circuit implementation for solving the SMF with Dirichlet boundary conditions via Schr\"odingerisation is 
$\widetilde{O}\left(
\frac{d^{2+\frac{1}{p}}T^{1+\frac{1}{p}}
s_p^5R_T^{1+\frac{1}{p}}
}{
\varepsilon^{\frac{1}{p}}
h^{1+\frac{1}{p}}
}
\right)$.
\end{remark}

\begin{remark}\label{rem:classical_complexity_staggered}
For the \(d\)-dimensional velocity--stress system discretized spatially by the
same staggered-grid scheme and temporally by a \(p\)th-order operator-splitting
scheme, the classical arithmetic complexity can be derived analogously to
Remark~\ref{rem:classical_complexity_spectral} and is given by 
\begin{equation}
\mathcal W_{\mathrm{cl}}
=\mathcal O_p\!\left[
s_pdq_ds_d
\max\left\{
\frac{d T}{h^{1+d}},
\frac{d^{1+1/p}T^{1+1/p}}
{\varepsilon^{1/p}h^{1+1/p+d}}
\right\}
\right]
\label{eq:d_staggered_classical_complexity}
\end{equation}
\end{remark}

\begin{remark}
For fixed \(d\), and when only the quantum state-evolution subroutine is
considered, the staggered-grid Schr\"odingerisation algorithm retains a
{\it polynomial} advantage in the spatial resolution over the corresponding
classical method. Ignoring the logarithmic factors, the gate complexity has the same order as that of Ref.~\cite{Sato3Delastic} which considers a centered finite-difference discretization under periodic boundary conditions. The present method accommodates Dirichlet, Neumann, and impedance boundary conditions. 

Moreover, the symmetric structure of the SMF further simplifies the circuit decomposition. In particular, the implementation of
\(U_{\mathrm{I}}\) requires \(O(n_p)\) CNOT gates, where
\(n_p=\log_2 N_p\), rather than a cost depending directly on the auxiliary
discretization size \(N_p\)\cite{hu2024quantum}. Thus, the proposed construction reduces the
circuit-level overhead while preserving the same leading asymptotic complexity.
\end{remark}

\section{Numerical Experiments}\label{sec:Experiments}
In this chapter, we present numerical experiments to verify the accuracy of the derived quantum circuits. 
For each test, we compare four types of solutions: the exact
PDE solution, a first-order classical Lie-partitioned solution of the semidiscrete system, the classical solution of the Schr\"odingerised system, and the state recovered from the UnitaryLab circuit.  The latter two use the same auxiliary discretization and recovery point.  All circuit results in this section are obtained with a noiseless state-vector simulator.  Both tests use $\rho=1$, with $\mu=1/2$ in the periodic test and $\mu=1$ in the impedance test.
\subsection{A two-dimensional periodic problem with a constant body force}

On $[-\pi,\pi)^2$, we consider the two-dimensional velocity--stress system
with periodic boundary conditions in both directions and the constant body force
$\mathbf f=(0.2,-0.1)^T$.

We use $n_x=n_y=4$ spatial qubits and
$N_t=512$ time steps up to $T=1$. The classical method updates stress first and
then velocity, whereas the circuit uses first-order splitting of the
Schr\"odingerised evolution, including the two spatial directions.
For these material parameters, the SMF variables are
$(\sigma_{11},\sigma_{22},\sqrt{2}\sigma_{12},v_x,v_y)^T$.
The five components are encoded in three qubits, with three unused basis
states padded by zeros.

For Schr\"odingerisation, we set $n_p=7$, $R=4$, and use the smooth
auxiliary profile \eqref{eq:schro_review_smooth_profile}. 
Including the source-augmentation qubit, the circuit contains
$4+4+3+1+7=19$ qubits. Fig.~\ref{fig:numerical_periodic_force} compares the five physical
components along $y=0$, and Fig.~\ref{fig:numerical_periodic_fields}
shows the full two-dimensional fields. The four grid solutions are
Fourier-interpolated to a $128\times128$ grid for visualization only.

\begin{figure}[h]
    \centering
    \includegraphics[width=0.98\linewidth]{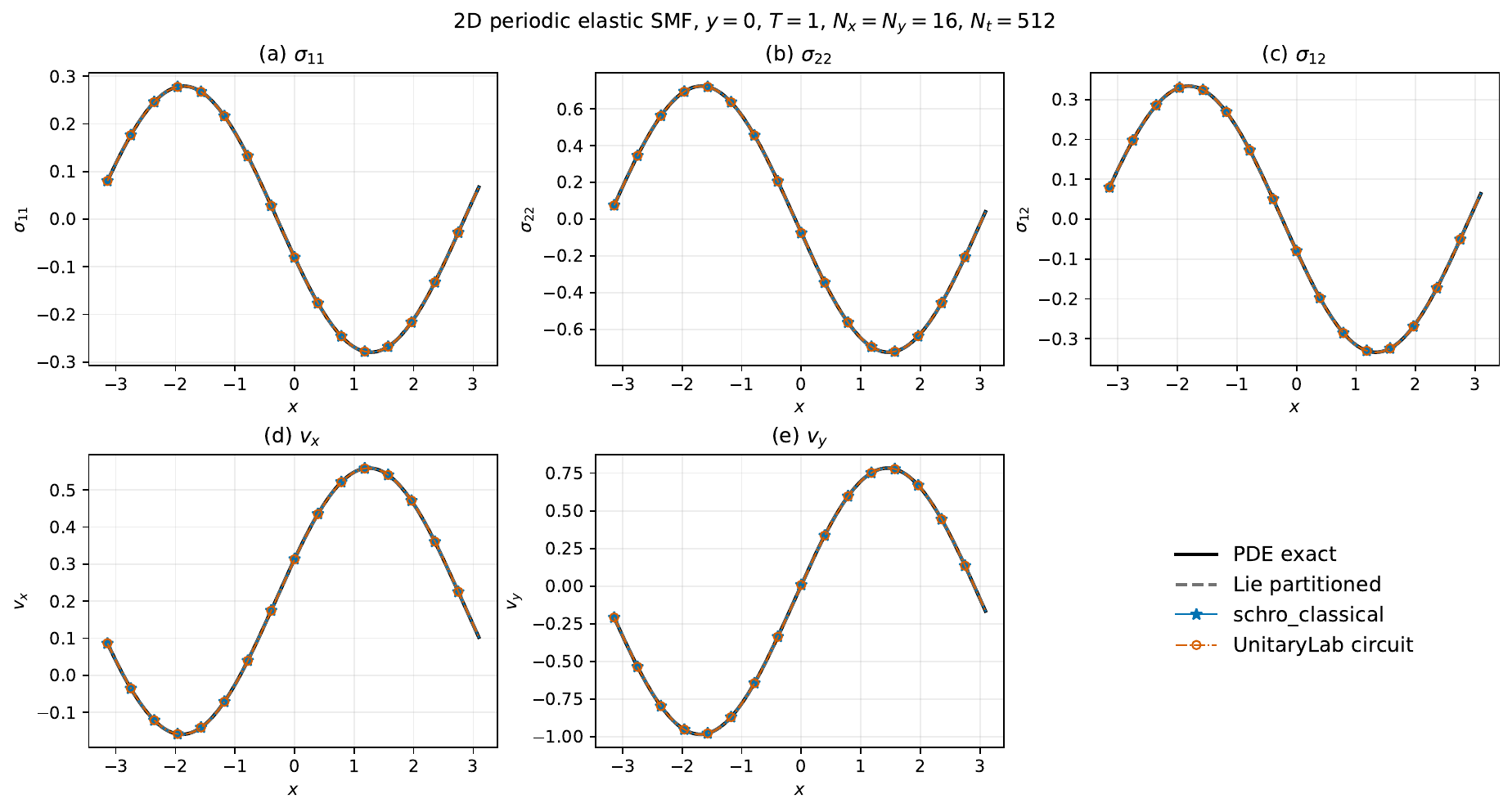}
    \caption{Five components of the two-dimensional 
    velocity--stress solution along $y=0$ at $T=1$.}
    \label{fig:numerical_periodic_force}
\end{figure}

\begin{figure}[h]
    \centering
    \includegraphics[width=0.98\linewidth]{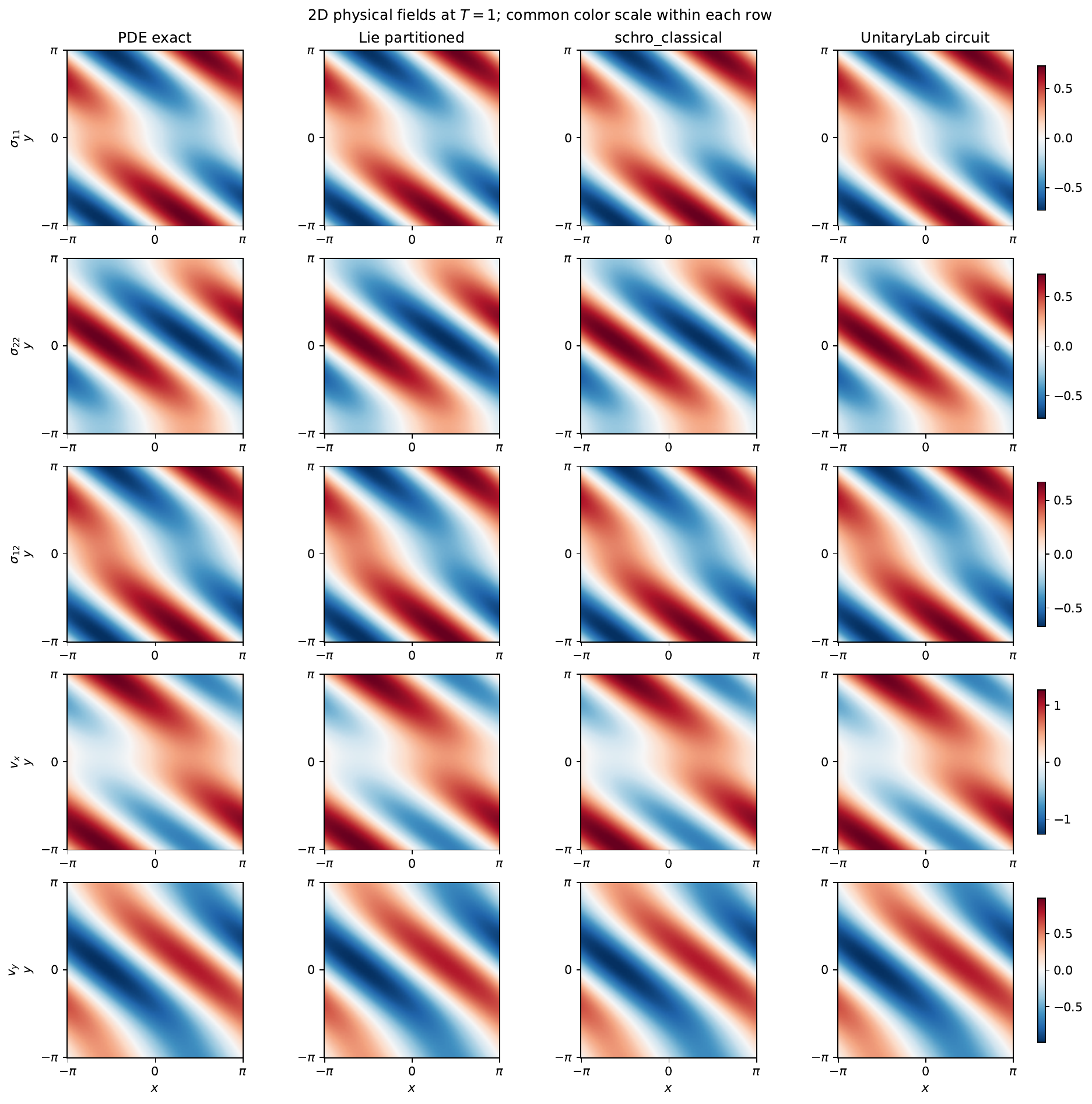}
    \caption{Two-dimensional physical fields at $T=1$ on the
    $16\times16$ computational grid. Rows correspond to
    $\sigma_{11}$, $\sigma_{22}$, $\sigma_{12}$, $v_x$, and $v_y$.
    Columns show the exact, Lie-partitioned, classical Schr\"odingerised,
    and circuit solutions.}
    \label{fig:numerical_periodic_fields}
\end{figure}

We measure errors in the physical variables using
\begin{equation}
    \|e\|_h^2=h_xh_y\sum_{i,j}
    \left(
    |e_{\sigma_{11},ij}|^2+|e_{\sigma_{22},ij}|^2
    +|e_{\sigma_{12},ij}|^2+|e_{v_x,ij}|^2+|e_{v_y,ij}|^2
    \right).
    \label{eq:numerical_periodic_2d_norm}
\end{equation}
At $T=1$, the errors relative to the exact solution are listed in Tab.\ref{tab:error_convergence}.
The classical Schr\"odingerised result is more accurate because the matrix-exponential action is evaluated directly, avoiding temporal splitting errors. Both the classical and quantum-circuit methods exhibit first-order convergence. The remaining error differences are mainly due to different splitting commutators and the recovery errors in the circuit method. These results confirm the correctness of the proposed circuit.
\begin{table}[htbp]
\centering
\caption{Errors relative to the exact solution at $T=1$ on the $16\times16$ grid.}
\begin{tabular}{c c c c}
\hline
$N_t$
& Lie-partitioned method
& Classical Schr\"odingerised solution
& Circuit solution \\
\hline
$128$
& $3.5331\times10^{-2}$
& $6.8927\times10^{-6}$
& $2.1490\times10^{-2}$ \\

$256$
& $1.7664\times10^{-2}$
& $6.8927\times10^{-6}$
& $1.0750\times10^{-2}$ \\

$512$
& $8.8315\times10^{-3}$
& $6.8927\times10^{-6}$
& $5.3776\times10^{-3}$ \\
\hline
\end{tabular}
\label{tab:error_convergence}
\end{table}
\subsection{Homogeneous problem with impedance boundary conditions}

The second test uses the source-free system with unit impedances,
\begin{equation}
    -\sigma(-\pi,t)+v(-\pi,t)=0,
    \qquad
    \sigma(\pi,t)+v(\pi,t)=0.
    \label{eq:numerical_impedance_bc}
\end{equation}
Let
\begin{equation}
F(\xi)=
\begin{cases}
    (1-r^2)^4, & |r|<1,\\
    0, & |r|\geq 1,
\end{cases}
\qquad r=\frac{\xi+1}{0.8}.
\label{eq:numerical_impedance_pulse}
\end{equation}
The initial data $v(x,0)=F(x)$ and $\sigma(x,0)=-F(x)$ give the right-moving
exact solution
\begin{equation}
    v(x,t)=F(x-t),
    \qquad
    \sigma(x,t)=-F(x-t).
    \label{eq:numerical_impedance_exact}
\end{equation}
Over the simulated time interval, the support of this pulse remains separated
from both endpoints and therefore satisfies \eqref{eq:numerical_impedance_bc}.

The staggered discretization uses $N=63$ stress cells and $N+1=64$ velocity nodes.  Here $h=2\pi/N$, $Z_L=Z_R=1$,
$T=1.2$, and $N_t=512$.  The auxiliary parameters are $n_p=7$, $R=4$, and
$p_*=2.4$, giving a 14-qubit state-vector simulation.  

Fig.~\ref{fig:numerical_impedance} shows the recovered velocity and stress.
Using the staggered-grid weighted $L^2$ norm, the errors relative to the exact
solution are $3.5334\times10^{-2}$ for the Lie-partitioned method,
$3.5283\times10^{-2}$ for the classical Schr\"odingerised solution, and
$3.8984\times10^{-2}$ for the circuit solution.

\begin{figure}[H]
    \centering
    \includegraphics[width=0.98\linewidth]{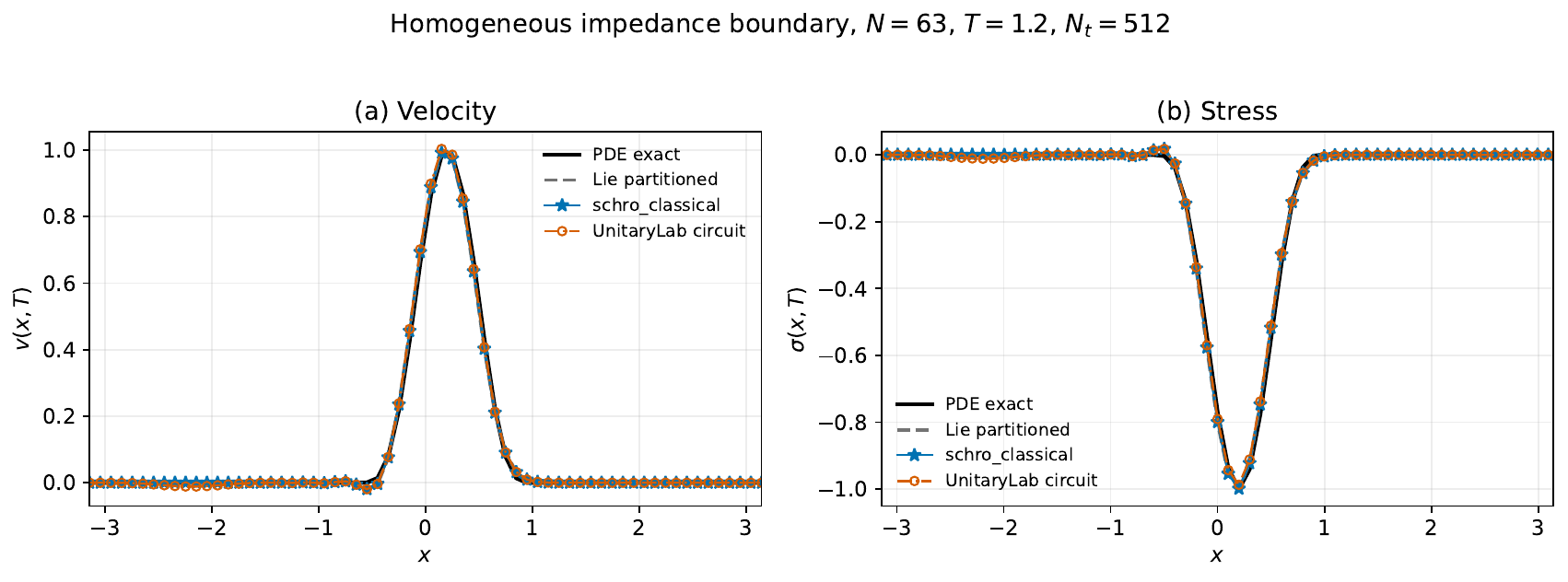}
    \caption{Velocity and stress for the homogeneous impedance problem at
    $T=1.2$.}
    \label{fig:numerical_impedance}
\end{figure}

\section{Conclusion}\label{sec:Conclusion}
We have developed explicit Schr\"odingerisation circuits for the
velocity--stress elastic wave equations in homogeneous media. The periodic
construction combines Fourier spectral discretization with controlled
rotations. For physical boundary conditions, staggered-grid discretization
and interleaved encoding lead to shift circuits and boundary corrections. These constructions accommodate time-independent
source terms and non-unitary boundary dynamics within the same Hamiltonian
simulation framework. 
Our results establish a theoretical foundation for quantum-accelerated modeling--with polynomial advantage over its classical counterparts for quantum state computation-- of elastic wave propagation, with potential applications in geophysical exploration and materials science.

Realistic wave propagation often involves complex boundary value problems and unbounded domains. Our future efforts will focus on quantum algorithms for such generalized scenarios, derive the corresponding quantum circuit for practical simulations and assess the costs of input preparation, observable
estimation, and implementation with a finite gate set.

\section*{Acknowledgement}

We acknowledge support from NSFC Grant No.~12341104, the Shanghai Pilot
Program for Basic Research, the STCSM Grant No.~24LZ1401200, the Shanghai
Jiao Tong University 2030 Initiative, and the Fundamental Research Funds
for the Central Universities. 
\bibliographystyle{elsarticle-num}
\bibliography{sample}

\onecolumn\newpage
\appendix

\section{Notations}
\subsection{Notation for Finite Difference Operators}\label{app:notations}
In a qubit representation, the finite difference operators can be written
in terms of matrix product operators. We first introduce the elementary
\(2\times2\) matrices: 
\begin{equation}
    \sigma_{01} := |0\rangle \langle 1|, \quad
    \sigma_{10} := |1\rangle \langle 0|, \quad
    \sigma_{00} := |0\rangle \langle 0|, \quad
    \sigma_{11} := |1\rangle \langle 1|,\quad I:=\sigma_{00}+\sigma_{11}.
\end{equation}
The left and right shift operators are defined by \(S_{n_x}^+|j\rangle =|j+1\rangle\), \(S_{n_x}^-|j\rangle = |j-1\rangle\),
\begin{equation}\label{eq:shift_operators}
\begin{aligned}
S_{n_x}^-
&=
\sum_{j=1}^{2^{n_x}-1}|j-1\rangle\langle j|
=
\sum_{\ell=1}^{n_x}
I^{\otimes(n_x-\ell)}
\otimes \sigma_{01}
\otimes \sigma_{10}^{\otimes(\ell-1)}
\triangleq
\sum_{\ell=1}^{n_x}s_\ell^-,
\\
S_{n_x}^+
&=
(S_{n_x}^-)^\dagger
=
\sum_{j=1}^{2^{n_x}-1}|j\rangle\langle j-1|
=
\sum_{\ell=1}^{n_x}
I^{\otimes(n_x-\ell)}
\otimes \sigma_{10}
\otimes \sigma_{01}^{\otimes(\ell-1)}
\triangleq
\sum_{\ell=1}^{n_x}s_\ell^+ .
\end{aligned}
\end{equation}
The rotation operators \(R_x(\varphi), R_y(\varphi), R_z(\varphi)\) are defined as: 
\begin{equation}\label{eq:def of R gate}
    R_x(\varphi)=e^{-i\varphi X/2},\quad
    R_y(\varphi)=e^{-i\varphi Y/2},\quad
    R_z(\varphi)=e^{-i\varphi Z/2}.
\end{equation}
\section{The related Proof}
\subsection{Proof of $\exp\left( \sum_{k} A_k \otimes |k\rangle \langle k| \right) = \sum_{k} \exp(A_k) \otimes |k\rangle \langle k|$}\label{app:pfA_k}
\begin{proof}
Let $A = \sum_k A_k\otimes \ket{k}\bra{k}$. 
Since the projectors $\{\ket{k}\bra{k}\}_k$ are mutually orthogonal,
\begin{equation}
    \ket{k}\bra{k}\ket{l}\bra{l}
    =
    \delta_{kl}\ket{k}\bra{k}.
\end{equation}
It follows that, for any positive integer $m$,
\begin{equation}
    A^m
    =
    \left(
    \sum_k A_k\otimes\ket{k}\bra{k}
    \right)^m
    =
    \sum_k A_k^m\otimes\ket{k}\bra{k}.
\end{equation}
Using the power-series expansion of the matrix exponential, we obtain
\begin{align}
    \exp(A)
    &=
    \sum_{m=0}^{\infty}\frac{1}{m!}A^m 
    =
    \sum_{m=0}^{\infty}\frac{1}{m!}
    \sum_k A_k^m\otimes\ket{k}\bra{k} \\
    &=
    \sum_k
    \left(
    \sum_{m=0}^{\infty}\frac{A_k^m}{m!}
    \right)
    \otimes\ket{k}\bra{k} 
    =
    \sum_k
    \exp(A_k)\otimes\ket{k}\bra{k}.
\end{align}
\end{proof}

\section{Constructions of \(\exp(iA_{\ell}t)\)}
We use the velocity and stress variables themselves to label the quantum basis states. That is, the Dirac states \(|v_i\rangle\), \(|n_i\rangle\), and \(|s_{ij}\rangle\) are identified with the corresponding components of the SMF system. In this basis, the coefficient matrix \(A_\ell\) can be expressed as
\[
    A_\ell
    =
    \sum_{j=1}^{d}
    a_{\ell j}
    \left(
        |n_j\rangle\langle v_\ell|
        +
        |v_\ell\rangle\langle n_j|
    \right)
    +
    \sum_{k\neq \ell}
    m_3
    \left(
        |s_{\ell k}\rangle\langle v_k|
        +
        |v_k\rangle\langle s_{\ell k}|
    \right),
\]
where \(s_{\ell k}=s_{k\ell}\) if \(k<\ell\), and
\[
    a_{\ell j}
    =
    \begin{cases}
    m_1, & j=\ell,\\
    m_2, & j\neq \ell.
    \end{cases}
\]
\subsection{The derivation of \(\exp(iA_{\ell}t)\)}\label{sec:derivation}
Since the directional operators $A_\ell, \ell=1,\ldots,d$
are generally noncommuting, the exponential of \(A\) cannot be factorized exactly. 
Thus, the \(d\)-dimensional circuit is obtained by sequentially implementing
the directional evolution operators
\[
\exp\left(i\tau A_\ell\otimes D_\ell\right),
\qquad
\ell=1,\ldots,d.
\]
Each factor acts only on the physical-component register and the spatial
register associated with the \(\ell\)-th direction, which makes the circuit
structure a direct extension of the one-dimensional construction.

From the velocity--stress formulation reviewed in Sec.~\ref{sec:Equ review}, it follows that \(A_\ell\) possesses a highly structured sparsity pattern. Specifically, \(A_\ell\) can be decomposed into three parts: a normal stress--velocity star block, \(d-1\) shear stress--velocity two-dimensional blocks, and a zero block.

Define \(\omega = \sqrt{m_1^2+(d-1)m_2^2},\)
and \(|\alpha_\ell\rangle =\frac{1}{\omega}
    \left(m_1 |n_\ell\rangle
        +
        m_2\sum_{j\neq \ell}|n_j\rangle
    \right).\)
Then the normal--velocity part of \(A_\ell\) can be written as
\begin{equation}
    B_\ell
    =
    \omega
    \left(
        |\alpha_\ell\rangle\langle v_\ell|
        +
        |v_\ell\rangle\langle \alpha_\ell|
    \right).
\end{equation}
Here $B_\ell$ is a rank-two Hermitian operator.
Let \(P_\ell\) be a unitary acting on the normal-stress subspace such that \(P_\ell |n_\ell\rangle = |\alpha_\ell\rangle,\)
and acting as the identity on the remaining variables. Then
\begin{equation}
    B_\ell
    =
    P_\ell
    \omega
    \left(
        |n_\ell\rangle\langle v_\ell|
        +
        |v_\ell\rangle\langle n_\ell|
    \right)
    P_\ell^\dagger.    
\end{equation}
Therefore, $e^{iB_\ell t}
    =
    P_\ell
    \mathcal X_{n_\ell,v_\ell}(\omega t)
    P_\ell^\dagger,$
where the two-level rotation \(\mathcal X_{p,q}(\theta)\) is defined by
\begin{equation}
    \mathcal X_{p,q}(\theta)
    =
    \exp\left(
        i\theta
        \left(
            |p\rangle\langle q|
            +
            |q\rangle\langle p|
        \right)
    \right).
\end{equation}
Restricted to the subspace \(\operatorname{span}\{|p\rangle,|q\rangle\}\),
it has the matrix representation
\begin{equation}
    \mathcal X_{p,q}(\theta)
    =
    \begin{pmatrix}
    \cos\theta & i\sin\theta\\
    i\sin\theta & \cos\theta
    \end{pmatrix}.
\end{equation}
The unitary \(P_\ell\) can be constructed using \(d-1\) Givens rotations in the
normal-stress subspace. Let
\[
    j_1=\ell,
    \qquad
    \{j_2,\ldots,j_d\}=\{1,\ldots,d\}\setminus\{\ell\}.
\]
Set \(r_k=\sqrt{m_1^2+(k-1)m_2^2}, k=1,\ldots,d.\)
Then \(r_d=\omega\). A sequence of Givens rotations with angles satisfying $\theta_k = \arctan \frac{m_2}{r_{k-1}}$, $k=2,\ldots,d$,
can be used to prepare the state \(|\alpha_\ell\rangle\) from
\(|n_\ell\rangle\). Thus \(P_\ell\) can be implemented by \(d-1\) two-level
real rotations, and \(P_\ell^\dagger\) by reversing the same rotations.

Next, for each \(k\neq \ell\), the shear--velocity part of \(A_\ell\) contains
the two-dimensional block
\[
    m_3
    \left(
        |s_{\ell k}\rangle\langle v_k|
        +
        |v_k\rangle\langle s_{\ell k}|
    \right).
\]
Its exponential is simply
\begin{equation}
    \mathcal X_{s_{\ell k},v_k}(m_3t)
    =
    \exp\left(
        i m_3 t
        \left(
            |s_{\ell k}\rangle\langle v_k|
            +
            |v_k\rangle\langle s_{\ell k}|
        \right)
    \right).
\end{equation}
Therefore, the full exponential of the directional matrix \(A_\ell\) admits
the exact factorization
\begin{equation}
    e^{iA_\ell t}
    =
    P_\ell
    \mathcal X_{n_\ell,v_\ell}(\omega t)
    P_\ell^\dagger
    \prod_{k\neq \ell}
    \mathcal X_{s_{\ell k},v_k}(m_3t).
\end{equation}
The decomposition is exact because the normal--velocity block and all
shear--velocity blocks act on mutually disjoint subspaces.

Therefore, the implementation of \(\exp(iA_\ell t)\) 
 consists of the basis transformation \(P_\ell\),  its inverse \(P_\ell^\dagger\), one two-level rotation associated with the normal--velocity block, and d-1 two-level rotations associated with the shear--velocity blocks.

\begin{remark}
The number of qubits used to encode the spatial variables and
physical components is
$m
=
dn+\left\lceil \log_2 q \right\rceil
=
dn+\left\lceil \log_2\frac{d^2+3d}{2} \right\rceil $.
For quantum implementation, we define the matrix \(\widetilde A_\ell
    =
    A_\ell\oplus0_{2^{\lceil\log_2q\rceil}-q}.\)
Then
\begin{equation}
    e^{i\widetilde A_\ell t}
    =
    e^{iA_\ell t}\oplus0_{2^{\lceil\log_2q\rceil}-q}.
\end{equation}
The extended variables can be set as an invariants.
\end{remark}
\subsection{The circuit of \(\exp(iA_{\ell}t)\)}\label{sec:circuit}
Directional circuits for $e^{i\widetilde A_\ell t}$ in  \(d=2,3\) are presented in Figs.~\ref{fig:Al-explicit-2d} and~\ref{fig:Al-explicit-3d}, respectively with
\begin{equation}
\begin{aligned}
    \theta_2&=\operatorname{atan2}(m_2,m_1),\qquad
\theta_3=\operatorname{atan2}\!\left(m_2,\sqrt{m_1^2+m_2^2}\right),\\
\omega_d=&\sqrt{m_1^2+(d-1)m_2^2},\qquad
\beta_P=-2\omega_dt,\qquad \beta_S=-2m_3t.
\end{aligned}
\end{equation}
Here $\theta_3$ is used only for $d=3$. All unused basis states
are left invariant. The circuits below are exact and no product-formula
approximation is used within a directional block.
\begingroup
\begin{figure}[h]
\centering
\(\begin{array}{c|ccccc}
\text{component}&n_{1}&n_{2}&s_{12}&v_{1}&v_{2}\\\hline
\text{basis label}&0&1&2&3&4
\end{array}\)
\par\smallskip
The basis label is the integer represented by $|q_{2}\cdots q_0\rangle$ for $d=2$.
\par\medskip
\textnormal{(a) $e^{i\widetilde A_1t}$}
\[
\Qcircuit @C=0.45em @R=0.85em {
\lstick{q_{2}} & \ctrlo{1} & \qw & \ctrlo{1} & \qw & \targ & \ctrl{1} & \targ & \ctrlo{1} & \qw \\
\lstick{q_{1}} & \ctrlo{1} & \targ & \ctrlo{1} & \targ & \ctrl{-1} & \gate{R_x(\beta_S)} & \ctrl{-1} & \ctrlo{1} & \qw \\
\lstick{q_{0}} & \gate{R_y(-2\theta_2)} & \ctrl{-1} & \gate{R_x(\beta_P)} & \ctrl{-1} & \qw & \ctrlo{-1} & \qw & \gate{R_y(2\theta_2)} & \qw \\
}
\]
\par\medskip
\textnormal{(b) $e^{i\widetilde A_2t}$}
\[
\Qcircuit @C=0.45em @R=0.85em {
\lstick{q_{2}} & \ctrlo{1} & \targ & \ctrl{1} & \targ & \ctrlo{1} & \ctrlo{1} & \qw \\
\lstick{q_{1}} & \ctrlo{1} & \qw & \ctrlo{1} & \qw & \ctrl{1} & \ctrlo{1} & \qw \\
\lstick{q_{0}} & \gate{R_y(2\theta_2)} & \ctrl{-2} & \gate{R_x(\beta_P)} & \ctrl{-2} & \gate{R_x(\beta_S)} & \gate{R_y(-2\theta_2)} & \qw \\
}
\]
\par\medskip
\caption{Directional circuits in $d=2$.
The $q_j$ wires encode the physical components, not the spatial variables.
}
\label{fig:Al-explicit-2d}
\end{figure}

\begin{figure}[h]
\centering
\(\begin{array}{c|ccccccccc}
\text{component}&n_{1}&n_{2}&n_{3}&s_{12}&s_{13}&s_{23}&v_{1}&v_{2}&v_{3}\\\hline
\text{basis label}&0&1&2&3&4&5&6&7&8
\end{array}\)
\par\smallskip
The basis label is the integer represented by $|q_{3}\cdots q_0\rangle$.
\par\medskip
\textnormal{(a) $e^{i\widetilde A_1t}$}
\[
\Qcircuit @C=0.45em @R=0.85em {
\lstick{q_{3}} & \ctrlo{1} & \ctrlo{1} & \qw & \ctrlo{1} & \qw & \ctrlo{1} & \targ & \ctrl{1} & \targ & \ctrlo{1} & \ctrlo{1} & \qw \\
\lstick{q_{2}} & \ctrlo{1} & \ctrlo{1} & \targ & \ctrlo{1} & \targ & \gate{R_x(\beta_S)} & \ctrl{-1} & \gate{R_x(\beta_S)} & \ctrl{-1} & \ctrlo{1} & \ctrlo{1} & \qw \\
\lstick{q_{1}} & \ctrlo{1} & \gate{R_y(-2\theta_3)} & \ctrl{-1} & \gate{R_x(\beta_P)} & \ctrl{-1} & \ctrl{-1} & \qw & \ctrlo{-1} & \qw & \gate{R_y(2\theta_3)} & \ctrlo{1} & \qw \\
\lstick{q_{0}} & \gate{R_y(-2\theta_2)} & \ctrlo{-1} & \qw & \ctrlo{-1} & \qw & \ctrl{-1} & \qw & \ctrlo{-1} & \qw & \ctrlo{-1} & \gate{R_y(2\theta_2)} & \qw \\
}
\]
\par\medskip
\textnormal{(b) $e^{i\widetilde A_2t}$}
\[
\resizebox{0.98\linewidth}{!}{$
\Qcircuit @C=0.45em @R=0.85em {
\lstick{q_{3}} & \ctrlo{1} & \qw & \ctrlo{1} & \qw & \qw & \ctrlo{1} & \qw & \qw & \ctrlo{1} & \qw & \qw & \targ & \ctrl{1} & \targ & \qw & \qw & \ctrlo{1} & \qw & \ctrlo{1} & \qw \\
\lstick{q_{2}} & \ctrlo{1} & \qw & \ctrlo{1} & \qw & \targ & \ctrlo{1} & \targ & \targ & \ctrl{1} & \targ & \targ & \qw & \ctrlo{1} & \qw & \targ & \qw & \ctrlo{1} & \qw & \ctrlo{1} & \qw \\
\lstick{q_{1}} & \ctrlo{1} & \targ & \ctrl{1} & \targ & \ctrl{-1} & \gate{R_x(\beta_P)} & \ctrl{-1} & \qw & \ctrl{1} & \qw & \qw & \qw & \ctrlo{1} & \qw & \qw & \targ & \ctrl{1} & \targ & \ctrlo{1} & \qw \\
\lstick{q_{0}} & \gate{R_y(2\theta_2)} & \ctrl{-1} & \gate{R_y(2\theta_3)} & \ctrl{-1} & \qw & \ctrl{-1} & \qw & \ctrl{-2} & \gate{R_x(\beta_S)} & \ctrl{-2} & \ctrl{-2} & \ctrl{-3} & \gate{R_x(\beta_S)} & \ctrl{-3} & \ctrl{-2} & \ctrl{-1} & \gate{R_y(-2\theta_3)} & \ctrl{-1} & \gate{R_y(-2\theta_2)} & \qw \\
}
$}
\]
\par\medskip
\textnormal{(c) $e^{i\widetilde A_3t}$}
\[
\Qcircuit @C=0.45em @R=0.85em {
\lstick{q_{3}} & \ctrlo{1} & \qw & \ctrlo{1} & \qw & \targ & \ctrl{1} & \targ & \ctrlo{1} & \ctrlo{1} & \qw & \ctrlo{1} & \qw & \ctrlo{1} & \qw \\
\lstick{q_{2}} & \ctrlo{1} & \qw & \ctrlo{1} & \qw & \qw & \ctrlo{1} & \qw & \ctrl{1} & \ctrl{1} & \qw & \ctrlo{1} & \qw & \ctrlo{1} & \qw \\
\lstick{q_{1}} & \gate{R_y(2\theta_2)} & \targ & \ctrl{1} & \targ & \ctrl{-2} & \gate{R_x(\beta_P)} & \ctrl{-2} & \gate{R_x(\beta_S)} & \gate{R_x(\beta_S)} & \targ & \ctrl{1} & \targ & \gate{R_y(-2\theta_2)} & \qw \\
\lstick{q_{0}} & \ctrlo{-1} & \ctrl{-1} & \gate{R_y(-2\theta_3)} & \ctrl{-1} & \qw & \ctrlo{-1} & \qw & \ctrlo{-1} & \ctrl{-1} & \ctrl{-1} & \gate{R_y(2\theta_3)} & \ctrl{-1} & \ctrlo{-1} & \qw \\
}
\]
\par\medskip
\caption{Directional circuits in $d=3$. 
The $q_j$ wires encode the physical components, not the spatial variables.
}
\label{fig:Al-explicit-3d}
\end{figure}
\endgroup

\subsection{The complexity of \(\exp(iA_{\ell}\tau)\)}\label{sec:complexity}
\begin{lemma}
    In the short time \(\tau\), the CNOT gate complexity of explicitly implementing      \(\exp(iA_\ell t)\) is bounded by $O(d(\log d))$.
\end{lemma}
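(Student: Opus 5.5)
The plan is to start from the exact factorization derived in Section~\ref{sec:derivation},
\[
e^{iA_\ell t}=P_\ell\,\mathcal X_{n_\ell,v_\ell}(\omega t)\,P_\ell^\dagger\prod_{k\neq\ell}\mathcal X_{s_{\ell k},v_k}(m_3t),
\]
and count CNOT gates factor by factor. The factorization expresses $e^{iA_\ell t}$ as a product of exactly $2(d-1)+1+(d-1)=3d-2$ two-level unitaries acting on the component register. Here $P_\ell$ contributes $d-1$ Givens rotations, $P_\ell^\dagger$ another $d-1$, and there is one normal--velocity rotation and $d-1$ shear--velocity rotations. This register has $m_c=\lceil\log_2 q_d\rceil=O(\log d)$ qubits, because $q_d=(d^2+3d)/2$. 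The lemma then follows once we show that each two-level unitary on an $m_c$-qubit register costs $O(m_c)=O(\log d)$ CNOT gates, since $(3d-2)\cdot O(\log d)=O(d\log d)$.

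To cost a single two-level unitary acting on $\operatorname{span}\{|p\rangle,|q\rangle\}$, we use the standard Gray-code reduction.
\begin{enumerate}
\item Choose a Gray-code path from the binary label of $p$ to that of $q$; it has length at most $m_c$.
\item Apply a sequence of at most $m_c-1$ multi-controlled $X$ gates that permute basis states until $|p\rangle$ and $|q\rangle$ differ only in one qubit.
\item Apply a single $(m_c-1)$-controlled one-qubit rotation ($R_y$ for Givens rotations, $R_x$ for the $\mathcal X$ blocks) on that qubit, controlled on the values of the remaining bits.
\item Undo the permutation.
\end{enumerate}
The circuits in Figs.~\ref{fig:Al-explicit-2d}--\ref{fig:Al-explicit-3d} are exactly instances of this pattern, with CNOT/Toffoli conjugations around a multi-controlled rotation.

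The step I expect to be the main obstacle is the per-gate cost. Naively, this construction uses $O(m_c)$ multi-controlled gates, each with up to $m_c-1$ controls. With the linear decomposition cited in Lemma~\ref{lemma:SMF gate} ($16j-40$ CNOTs for $j-1$ controls), that gives $O(m_c^2)=O(\log^2 d)$ per two-level unitary, one log factor too many. Two remedies are available.
\begin{itemize}
\item \emph{Choice of labels.} Pick the basis labels so that paired states $p,q$ differ in $O(1)$ bits. Then only $O(1)$ multi-controlled gates are needed, each costing $O(m_c)$ CNOTs, and one verifies that such a labeling exists.
\item \emph{Shared controls.} Alternatively, the controlled-$X$ permutations can share a single computed AND of the common control bits, stored in an ancilla that is uncomputed at the end. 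Then each permutation step is a plain CNOT plus a one-time $O(m_c)$ cost.
\end{itemize}
I would present the second argument, since it is independent of the labeling. It gives $O(m_c)$ CNOTs per two-level unitary.

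Finally, in the context of $U_{\mathrm{II},x}$ these blocks are additionally controlled by one spatial qubit, as in Fig.~\ref{fig:dSMF U_2x}. This adds one control to each multi-controlled gate and changes the count by only a constant factor. Summing over the $3d-2$ factors then gives the claimed $O(d\log d)$ bound, and multiplying by $d$ directions recovers the $O(d^2\log d)$ per-step factor used in Remark~\ref{rem:dSMF gate}.
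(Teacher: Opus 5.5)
Your bookkeeping matches the paper's proof. The exact factorization gives $3d-2$ two-level $R_x/R_y$ rotations on the $m_c=\lceil\log_2 q_d\rceil=O(\log d)$-qubit component register. The paper then simply quotes a linear per-rotation bound ($18n-42$ CNOTs) from the literature.

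You are right that the textbook Gray-code chain of multi-controlled swaps gives only $O(m_c^2)$ per rotation. However, neither of your remedies closes this gap as written.

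\begin{itemize}
\item \emph{Labeling remedy.} This is false for large $d$. The label $v_k$ must be paired with the $d-1$ distinct labels $s_{\ell k}$, $\ell\neq k$. But only $\sum_{i=0}^{c}\binom{m_c}{i}=O(m_c^{c})$ labels, polylogarithmic in $d$, lie within Hamming distance $c$ of it.
\item \emph{Shared-control remedy.} This does not produce plain-CNOT steps. The controls common to every Gray step are only the bits where $p$ and $q$ agree, together with the bit $j$ on which the final rotation acts. The remaining differing bits enter each step as controls with step-dependent values. So after factoring out the AND, each step is still an $O(r)$-controlled $X$ (with $r$ the number of bits where $p$ and $q$ differ), and the total stays $O(r^2)$.
\end{itemize}

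Dropping those extra controls needs an argument you have not given. There is also a second problem. The ancilla would then have to depend on bit $j$, because an AND of the agreeing bits alone flips the same bits of $|p\rangle$ and $|q\rangle$ and leaves their Hamming distance unchanged. But then the ancilla cannot stay computed while the rotation acts on bit $j$.

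The missing idea is that the conjugating permutation need not be a chain of two-state transpositions. Take any basis permutation $C$ such that $C|p\rangle$ and $C|q\rangle$ differ in one bit, and let $\Lambda(R)$ be the rotation on that bit controlled by all the other bits. Then $C^\dagger\Lambda(R)C$ acts as the desired rotation on $\mathrm{span}\{|p\rangle,|q\rangle\}$ and as the identity elsewhere.

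Concretely, pick a bit $j$ with $p_j\neq q_j$, and let $C$ be the product of $\mathrm{CNOT}_{j\to k}$ over the other differing bits $k$. Since $p_k\oplus p_j=q_k\oplus q_j$, the images differ only in bit $j$. The cost is:
\begin{itemize}
\item at most $2(m_c-1)$ CNOTs for $C$ and $C^\dagger$, with no ancilla;
\item $O(m_c)$ CNOTs for $\Lambda(R)$, which has $m_c-1$ controls and an $SU(2)$ target, by the linear decomposition already used in Lemma~\ref{lemma:SMF gate}.
\end{itemize}
That gives $O(\log d)$ per factor and $O(d\log d)$ in total.

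This single-control-CNOT conjugation, rather than the multi-controlled Gray chain, is what Figs.~\ref{fig:Al-explicit-2d}--\ref{fig:Al-explicit-3d} actually draw. It also settles your last point: under the extra spatial control, only $\Lambda(R)$ needs to be controlled, since $C^\dagger C=I$.
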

\begin{proof}
By the decomposition derived in Sec.~\ref{sec:derivation}, the implementation of
\(\exp(iA_\ell t)\) involves the operators
\(P_\ell\), \(\mathcal X_{n_\ell,v_\ell}(\omega t)\), \(P_\ell^\dagger\), and
\(d-1\) additional two-level rotations \(\mathcal X_{s_{\ell k},v_k}(m_3t)\).

Consequently, each two-level rotation \(\mathcal X_{p,q}(\theta)\) can be implemented on the
computational basis by a multi-controlled \(R_x\) rotation. 
Similarly, each Givens rotation used in \(P_\ell\) can be implemented as a
two-level real rotation, corresponding to a multi-controlled \(R_y\) rotation
after mapping the two basis states to states differing in only one qubit by a
Gray-code transformation. Consequently, a total of \(3d-2\) two-level
\(R_x/R_y\) rotation gates are required.

According to Ref.~\cite{mottonen2004transformation}, each two-level
\(R_x/R_y\) rotation acting on an \(n\)-qubit register can be decomposed into
at most \(18n-42\) CNOT gates.
Therefore, the CNOT complexity for implementing
\(\exp(iA_\ell t)\) is bounded by
\[
Q_{\mathrm{CNOT}}\!\left(\exp(iA_\ell t)\right)
=
(3d-2)(18\left\lceil \log_2 q \right\rceil-42)
=
O(d\log d).
\]

\end{proof}

\end{document}